\newtheorem{thm}{Theorem}[section]
\newtheorem{lem}[thm]{Lemma}
\newtheorem{prop}[thm]{Proposition}
\theoremstyle{definition}
\theoremstyle{remark}
\newtheorem{rem}{Remark}[section]
\numberwithin{equation}{section}
\newcommand{\C}{\mathbb{C}}
\newcommand{\N}{\mathbb{N}}
\newcommand{\R}{\mathbb{R}}
\newcommand{\supp}{\operatorname{supp}}
\def\hat{\widehat}
\def\tilde{\widetilde}
\def \bfo {\begin {eqnarray*} }
\def \efo {\end {eqnarray*} }
\def \ba {\begin {eqnarray*} }
\def \ea {\end {eqnarray*} }
\def \beq {\begin {eqnarray}}
\def \eeq {\end {eqnarray}}
\def \supp {\hbox{supp }}
\def \det {\hbox{det}}
\def \p {\partial}
\def\hat{\widehat}
\def\tilde{\widetilde}
\def \bfo {\begin {eqnarray*} }
\def \efo {\end {eqnarray*} }
\def \ba {\begin {eqnarray*} }
\def \ea {\end {eqnarray*} }
\def \beq {\begin {eqnarray}}
\def \eeq {\end {eqnarray}}
\def \supp {\hbox{supp }}
\def \det {\hbox{det}}
\def \p {\partial}
\begin{document}

 \title[Transmission eigenvalues]{Transmission eigenvalues  for operators with constant coefficients}

\author[Hitrik]{Michael Hitrik}

\address
        {M. Hitrik,  Department of Mathematics\\ 
    UCLA\\ 
    Los Angeles\\ 
    CA 90095-1555\\ 
    USA }

\email{hitrik@math.ucla.edu}

\author[Krupchyk]{Katsiaryna Krupchyk}

\address
        {K. Krupchyk, Department of Mathematics and Statistics \\
         University of Helsinki\\
         P.O. Box 68 \\
         FI-00014   Helsinki\\
         Finland}

\email{katya.krupchyk@helsinki.fi}

\author[Ola]{Petri Ola}

\address
        {P. Ola, Department of Mathematics and Statistics \\
         University of Helsinki\\
         P.O. Box 68 \\
         FI-00014   Helsinki\\
         Finland}

\email{Petri.Ola@helsinki.fi}

\author[P\"aiv\"arinta]{Lassi P\"aiv\"arinta}

\address
        {L. P\"aiv\"arinta, Department of Mathematics and Statistics \\
         University of Helsinki\\
         P.O. Box 68 \\
         FI-00014   Helsinki\\
         Finland}

\email{Lassi.Paivarinta@rni.helsinki.fi}

\maketitle

\begin{abstract} In this paper we study the interior transmission problem and transmission eigenvalues for multiplicative perturbations of linear partial differential operator of order $\ge 2$ with constant real coefficients. 
Under  suitable  growth conditions on the symbol of the operator and the perturbation, we show  the discreteness of the set of transmission eigenvalues and derive sufficient conditions on the existence of transmission eigenvalues. We apply these techniques to the case of the biharmonic operator and the Dirac system. In the hypoelliptic case we present a connection to scattering  theory.

\end{abstract}

\section{Introduction}

Let $V\in L^\infty(\R^n)$ be  compactly supported  with  $\supp(V)=\overline{\Omega}$, where $\Omega\subset \R^n$ is a bounded domain, and let  $P_0(D)$ be a partial differential operator  of order $m\ge 2$ with constant real coefficients


\[
P_0(D)=\sum_{|\alpha|\le m} a_{\alpha}D^\alpha, \quad a_\alpha\in\R, \quad D_j=-i\frac{\partial}{\partial x_j},\quad j=1,\dots,n.
\] 
For any $\varphi\in C^\infty_0(\Omega)$, we set
\begin{equation}
\label{eq_norm}
\|\varphi\|_{P_0}=\|P_0(D)\varphi\|.
\end{equation}
Here and in what follows the notation $\|\cdot\|$
stands for the standard $L^2$-norm. The completion of $C^\infty_0(\Omega)$ with respect to the norm \eqref{eq_norm} is denoted by $H^{P_0}_0(\Omega)$. 

The {\em interior transmission problem} is the following degenerate boundary value problem,
\begin{equation}
\label{eq_ITP_johd}
\begin{aligned}
(P_0-\lambda)v=0 \quad &\text{in} \quad \Omega,\\
(P_0+V-\lambda)w=0 \quad &\text{in} \quad \Omega,\\
 v-w \in H^{P_0}_0(\Omega).
\end{aligned}
\end{equation}

We say that $\lambda\in \C$ is a {\em transmission eigenvalue} if the problem \eqref{eq_ITP_johd} has non-trivial solutions $0\ne v\in L^2_{\textrm{loc}}$ and  $0\ne w\in L^2_{\textrm{loc}}$.

The problem \eqref{eq_ITP_johd} arises naturally in the study of inverse scattering theory for the operator $P_0+V$, when $P_0=-\Delta$. In this context,  the significance of the notion of a transmission eigenvalue is twofold. On the one hand, in reconstruction algorithms of inverse scattering theory, such as the linear sampling method of Colton and Kirsch \cite{CakColbook, ColKir96}, and the factorization method of Kirsch  \cite{KirGribook}, transmission eigenvalues correspond to frequencies that one needs to avoid. The fact that they form a discrete set makes this possible. On the other hand, the knowledge of the transmission eigenvalues carries information about the scatterer \cite{paisyl08}, and in the radially symmetric case, they determine the potential completely 
\cite{CakColGint_complex}, see also 
\cite{McLP} for the previous work in this direction. It is therefore natural to investigate 
the question of existence of transmission eigenvalues, and relate their properties to the properties of the scatterer.

The interior transmission problem was first introduced in 1988 by Colton and Monk \cite{ColMonk88} in connection with an inverse scattering problem for the reduced wave equation. They were led to this problem when studying the injectivity of the far field operator.  The fact that the interior transmission eigenvalues form a discrete set was shown in the Helmholz case in \cite{vanhatkonnatI}.  The problem of existence of transmission eigenvalues, however, remained unsolved until recently. When \(P_0 = -\Delta\),  P\"aiv\"arinta and Sylvester in \cite{paisyl08} proved the first existence result, and soon 
thereafter, the existence of an infinite set of real transmission eigenvalues was established by  Cakoni, Gintides, and Haddar in  \cite{CakDroHou}. 
We would also like to mention recent results on transmission eigenvalues for Maxwell's equations, as well as for the Helmholtz equation, in the presence
of cavities \cite{CakColHous10, CakKir, kir07}.

In this paper we consider the case of quite general operators $P_0$ with constant coefficients, for which a well developed scattering theory is available 
\cite{horbookII}. In particular, we can allow non-elliptic operators as well as operators of higher order. We then show the discreteness of the set of transmission eigenvalues, and derive sufficient conditions for the existence of an infinite set of real transmission eigenvalues. The extension to the case of general operators with constant coefficients seems natural, as in applications one frequently encounters such operators which could be either non-elliptic, or elliptic of higher order.  We also illustrate the techniques developed by applying them in two cases of physically significant operators, namely the biharmonic operator and the Dirac system  in $\R^3$. 

Finally, we point out that in view of the non-selfadjoint nature of the problem  \eqref{eq_ITP_johd},
a study of the distribution of transmission eigenvalues in the complex plane would be both interesting and natural. So far, only in the case when $P_0=-\Delta$, the existence of complex transmission eigenvalues has been shown in the very recent paper \cite{CakColGint_complex}, assuming that the index of refraction is constant and sufficiently close to one. See also the numerical computation of complex transmission eigenvalues given in \cite{ColMonkSun}.

The structure of the paper is as follows. In Section 2 we formulate the problem and state a compact embedding lemma that is needed when proving the discreteness of the set of transmission eigenvalues in Section 3, which then follows by analytic Fredholm theory. In Section 4 sufficient conditions for the existence of transmission eigenvalues are obtained, and the existence of an infinite set of transmision eigenvalues is studied in Section 5. 
In Section 6 we study the reformulation of the problem which is modeled on the reduced wave, rather than the Schr\"odinger, equation. It is interesting that this problem poses fewer restrictions on the operator than the Schr\"odinger version, and thus even in the second order case we do not need to assume the ellipticity. 
Section 7 is devoted to the study of two examples from physics, namely, the biharmonic operator and Dirac system. In the last two Sections 8 and 9 we explain the relation of transmission eigenvalues to scattering theory, relying upon a generalized Rellich theorem. 

\section{Interior transmission eigenvalues: The Schr\"odinger case}

\label{sec_Schrod}

As described in the introduction, let $V\in L^\infty(\R^n)$ be  compactly supported  with  $\supp(V)=\overline{\Omega}$, where $\Omega\subset \R^n$ is a bounded domain, and  let $P_0(D)$ be a partial differential operator  of order  $m\ge 2$ with constant real coefficients,
\[
P_0(D)=\sum_{|\alpha|\le m} a_{\alpha}D^\alpha, \quad a_\alpha\in\R, \quad D_j=-i\frac{\partial}{\partial x_j},\quad j=1,\dots,n.
\] 

The following estimate \cite[Theorem 2.1]{hor55}, \cite[Theorem 10.3.7]{horbookII}
\[
\|\varphi\|\le C_{\Omega}\|P_0(D)\varphi\|, \quad\varphi\in C^\infty_0(\Omega),
\]
where $C_\Omega$ is a constant depending on the domain $\Omega$,
implies that \eqref{eq_norm} is a norm on $C^\infty_0(\Omega)$, and as already  mentioned, completion of $C^\infty_0(\Omega)$ with respect to the norm \eqref{eq_norm} is denoted by $H^{P_0}_0(\Omega)$; see also \cite{jac88}.

Consider the interior transmission problem 
\begin{equation}
\label{eq_ITP}
\begin{aligned}
(P_0-\lambda)v=0 \quad &\text{in} \quad \Omega,\\
(P_0+V-\lambda)w=0 \quad &\text{in} \quad \Omega,\\
 v-w \in H^{P_0}_0(\Omega).
\end{aligned}
\end{equation}

We say that $\lambda\in \C$ is a {\em transmission eigenvalue} if the problem \eqref{eq_ITP} has non-trivial solutions $0\ne v\in L^2_{\textrm{loc}}(\Omega)$ and  $0\ne w\in L^2_{\textrm{loc}}(\Omega)$.  It suffices to require that $v\ne 0$. Indeed, assume that $w=0$. Then $v\in H^{P_0}_0(\Omega)$ and extending $v$ by zero to the complement of $\Omega$, we get $(P_0-\lambda)v=0$ in $\R^n$. Since every constant coefficient differential operator is injective $\mathcal{E}'(\R^n)\to \mathcal{E}'(\R^n)$, we have $v=0$. 

Throughout this work we shall assume that $V$ is real-valued and such that  $V\ge\delta>0$ a.e. in $\overline{\Omega}$.

We will use also the following equivalent characterization of the space \(H^{P_0}_0(\Omega)\).
Let 
\[
\tilde P_0(\xi)=(\sum_{|\alpha|\ge 0}|P_0^{(\alpha)}(\xi)|^2)^{1/2}, \quad P_0^{(\alpha)}(\xi)=D^{\alpha} P_0(\xi),\quad  \alpha\in \N^{n},
\]
and define on $C^\infty_0(\Omega)$ the norm
\begin{equation}
\label{eq_norm_2}
\|\varphi\|_{\tilde P_0}^2=\sum_{|\alpha|\ge 0}\|P_0^{(\alpha)}(D)\varphi\|^2.
\end{equation}

The estimate  \cite[Lemma 2.8]{hor55}
\begin{equation}
\label{eq_horm}
\|P_0^{(\alpha)}(D)\varphi\|\le C_{\Omega,\alpha}\|P_0(D)\varphi\|,\quad\varphi\in C^\infty_0(\Omega),
\end{equation}
yields that the norms \eqref{eq_norm} and \eqref{eq_norm_2} are equivalent on $H^{P_0}_0(\Omega)$

The space  $H^{P_0}_0(\Omega)$ is isometrically imbedded into the space $B_{2,\tilde P_0}(\R^n)$ via zero extension to the complement of $\Omega$.
Here 
\[
B_{2,\tilde P_0}(\R^n)=\{u\in S'(\R^n):\tilde P_0\hat u\in L^2(\R^n)\}
\]
is a Banach space with the norm
\[
u\mapsto\|\tilde P_0 \hat u\|_{L^2}
\]
introduced in \cite[Definition 10.1.1]{horbookII}. The proof of the discreteness of the set of transmission eigenvalues depends on the following result.

\begin{lem}
\label{prop_comp}
Assume that $\tilde P_0(\xi)\to\infty$ when $|\xi|\to \infty$. Then the imbedding 
\[
H^{P_0}_0(\Omega)\hookrightarrow L^2(\Omega)
\]
is compact. 
\end{lem}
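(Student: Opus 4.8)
The plan is to realize $H^{P_0}_0(\Omega)$ concretely as a subspace of the Hörmander space $B_{2,\tilde P_0}(\R^n)$ via zero extension, as already noted in the text, and then to exploit the hypothesis $\tilde P_0(\xi)\to\infty$ together with the compact support of all the functions involved. Fix a function $\chi\in C^\infty_0(\R^n)$ with $\chi\equiv 1$ on a neighbourhood of $\overline{\Omega}$. For $u\in H^{P_0}_0(\Omega)$, extended by zero, we have $u=\chi u$, hence $\hat u=(2\pi)^{-n}\hat\chi * \hat u$. This convolution identity is the key smoothing device: it lets us estimate $\hat u$ on a large ball $|\xi|\le R$ in terms of $\hat u$ globally, with gain coming from the rapid decay of $\hat\chi$.

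First I would take a bounded sequence $(u_j)$ in $H^{P_0}_0(\Omega)$, so that $\|\tilde P_0\hat u_j\|_{L^2}\le C$ for all $j$. Since $\tilde P_0(\xi)\ge c_0>0$ everywhere (indeed $\tilde P_0(\xi)\ge |P_0(\xi)|$ and more simply $\tilde P_0$ is bounded below, as follows from Hörmander's theory of these spaces), the sequence $(\hat u_j)$ is bounded in $L^2(\R^n)$, so $(u_j)$ is bounded in $L^2(\R^n)$ and, after passing to a subsequence, converges weakly in $L^2$ to some $u$, with $\supp u\subset\overline\Omega$. I claim this subsequence converges strongly in $L^2(\Omega)$. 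Write $\|u_j-u_k\|_{L^2}^2=\|\hat u_j-\hat u_k\|_{L^2}^2$ and split the integral into the region $|\xi|\le R$ and $|\xi|>R$. On $|\xi|>R$ we bound $|\hat u_j-\hat u_k|^2\le \tilde P_0(\xi)^{-2}\,|\tilde P_0(\xi)(\hat u_j-\hat u_k)|^2$, and since $\tilde P_0(\xi)^{-2}\le \big(\inf_{|\xi|>R}\tilde P_0(\xi)\big)^{-2}\to 0$ as $R\to\infty$ by hypothesis, this tail is $\le \varepsilon$ uniformly in $j,k$ once $R$ is large. For the remaining compact region $|\xi|\le R$, I use the convolution representation $\hat u_j=(2\pi)^{-n}\hat\chi*\hat u_j$: because $(\hat u_j)$ is bounded in $L^2$ and $\hat\chi\in\mathcal S$, the functions $\hat u_j$ restricted to $\{|\xi|\le R\}$ are uniformly bounded and equicontinuous (differentiating under the convolution integral and using $\|\hat u_j\|_{L^2}\le C$, $\partial^\beta\hat\chi\in L^2$), so by Arzelà–Ascoli a further subsequence converges uniformly on $\{|\xi|\le R\}$, hence in $L^2(\{|\xi|\le R\})$. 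Combining the two regions and a diagonal argument over $R\to\infty$ (or $\varepsilon\to 0$) gives that $(u_j)$ is Cauchy in $L^2$, proving compactness of the embedding.

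I expect the main obstacle to be the bookkeeping in the equicontinuity/Arzelà–Ascoli step — specifically, making precise that $\hat u_j=(2\pi)^{-n}\hat\chi*\hat u_j$ yields a uniform modulus of continuity on compact $\xi$-sets from only an $L^2$ bound on $\hat u_j$. The honest statement is that for $|\xi|,|\xi'|\le R$,
\[
|\hat u_j(\xi)-\hat u_j(\xi')|\le (2\pi)^{-n}\|\hat\chi(\xi-\cdot)-\hat\chi(\xi'-\cdot)\|_{L^2}\,\|\hat u_j\|_{L^2},
\]
and the first factor tends to $0$ as $|\xi-\xi'|\to 0$ uniformly, by uniform continuity of the Schwartz function $\hat\chi$ and dominated convergence; similarly $|\hat u_j(\xi)|\le (2\pi)^{-n}\|\hat\chi\|_{L^2}\|\hat u_j\|_{L^2}$ gives the uniform bound. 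Once this is in place the rest is routine. An alternative, perhaps cleaner, route is to quote directly Hörmander's compact-embedding result for the spaces $B_{2,k}$: if $k_1/k_2\to 0$ at infinity then the embedding $B^{\loc}_{2,k_1}\hookrightarrow B^{\loc}_{2,k_2}$ is compact on relatively compact sets (see \cite[Theorem 10.1.10 and its corollaries]{horbookII}); applying this with $k_1=\tilde P_0$ and $k_2=1$, and using that $H^{P_0}_0(\Omega)$ embeds isometrically into $B_{2,\tilde P_0}$ with support in $\overline\Omega$, gives the claim immediately. I would present the self-contained Fourier argument as the main proof and mention the reference to Hörmander as the conceptual reason it works.
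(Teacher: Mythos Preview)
Your argument is correct. The paper's own proof, however, is a single sentence: it simply invokes \cite[Theorem 10.1.10]{horbookII}, which is precisely the compact-embedding theorem for the spaces $B_{2,k}$ that you mention at the end as the ``cleaner route.'' So the paper takes your alternative as its entire proof, while your main presentation is a self-contained Rellich--Kondrachov-type argument that in effect reproves the relevant special case of H\"ormander's theorem. What you gain is independence from the reference and a transparent mechanism (tail control from $\tilde P_0\to\infty$, compactness on bounded frequencies from the fixed cutoff $\chi$); what the paper gains is brevity. One small remark: the diagonal argument you mention is not really needed once you have extracted a weakly convergent subsequence, since $\hat u_j(\xi)=(2\pi)^{-n}\langle \hat u_j,\hat\chi(\xi-\cdot)\rangle$ already converges pointwise by weak $L^2$-convergence, and pointwise convergence together with your uniform equicontinuity upgrades to locally uniform convergence directly.
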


\begin{proof}

The claim follows immediately from \cite[Theorem 10.1.10]{horbookII}. 

\end{proof}

A condition under which we can apply this result is found with the help of the set
\begin{equation}
\label{eq_Lambda_p_0}
\Lambda(P_0)=\{\xi\in\R^n:\lambda\mapsto P_0(\lambda\xi+\eta)\text{ is constant } \forall \eta\in \R^n\}
\end{equation}
Then \cite[Proposition 10.2.9]{horbookII} implies that $\tilde P_0(\xi)\to\infty$ as $|\xi|\to \infty$ provided that $\Lambda(P_0)=\{0\}$, i.e. there are no hidden variables in $P_0(\xi)$. 


\section{Discreteness of the set of transmission eigenvalues}

\begin{prop}
\label{thm_equivalence}
A complex number $\lambda$ is a transmission eigenvalue if and only if there exists $0\ne u\in H^{P_0}_0(\Omega)$ satisfying
\begin{equation}
\label{eq_teq}
(P_0+V-\lambda)\frac{1}{V}(P_0-\lambda)u=0\quad \text{in}\quad \mathcal{D}'(\Omega).
\end{equation}
\end{prop}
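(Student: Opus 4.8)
The plan is to establish the equivalence in both directions by relating the single fourth-order-type equation \eqref{eq_teq} to the pair of equations in \eqref{eq_ITP}. First suppose $\lambda$ is a transmission eigenvalue, with nontrivial solutions $v,w\in L^2_{\mathrm{loc}}(\Omega)$ to \eqref{eq_ITP}. Set $u=v-w\in H^{P_0}_0(\Omega)$; the remark after \eqref{eq_ITP} shows $u\ne 0$, since $v\ne 0$ forces $w\ne 0$, and if $u=0$ then $v=w$ would be a nonzero element of $H^{P_0}_0(\Omega)$ solving $(P_0-\lambda)v=0$ in $\R^n$ after zero extension, hence $v=0$ by injectivity of constant-coefficient operators on $\mathcal{E}'(\R^n)$ — so I should double-check that $u\ne 0$ carefully, perhaps arguing directly that $v\ne 0$ and $u=0$ is impossible. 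Now compute $(P_0-\lambda)u=(P_0-\lambda)v-(P_0-\lambda)w=-(P_0-\lambda)w=Vw$ in $\mathcal{D}'(\Omega)$, using both PDEs. Hence $\frac{1}{V}(P_0-\lambda)u=w$, which makes sense as an $L^2_{\mathrm{loc}}$ function because $V\ge\delta>0$ a.e. in $\overline\Omega$, and then $(P_0+V-\lambda)\frac{1}{V}(P_0-\lambda)u=(P_0+V-\lambda)w=0$ in $\mathcal{D}'(\Omega)$, which is exactly \eqref{eq_teq}.

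For the converse, suppose $0\ne u\in H^{P_0}_0(\Omega)$ satisfies \eqref{eq_teq}. Define $w=\frac{1}{V}(P_0-\lambda)u$; a priori $(P_0-\lambda)u$ is only a distribution, so the first point to settle is that $w\in L^2_{\mathrm{loc}}(\Omega)$ — this follows because $u\in H^{P_0}_0(\Omega)$ gives $P_0 u\in L^2(\Omega)$ (indeed $u$ extended by zero lies in $B_{2,\tilde P_0}$), hence $(P_0-\lambda)u\in L^2(\Omega)$, and division by $V\ge\delta$ preserves $L^2$. Then set $v=u+w\in L^2_{\mathrm{loc}}(\Omega)$. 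By construction $(P_0+V-\lambda)w=(P_0+V-\lambda)\frac{1}{V}(P_0-\lambda)u=0$ in $\mathcal{D}'(\Omega)$ by \eqref{eq_teq}, which is the second equation of \eqref{eq_ITP}. For the first, compute $(P_0-\lambda)v=(P_0-\lambda)u+(P_0-\lambda)w$; since $(P_0-\lambda)u=Vw$ by definition of $w$, this equals $Vw+(P_0-\lambda)w=(P_0+V-\lambda)w=0$, again by \eqref{eq_teq}. Finally $v-w=u\in H^{P_0}_0(\Omega)$, so \eqref{eq_ITP} holds. It remains to verify nontriviality: $v\ne 0$ is enough by the remark after \eqref{eq_ITP}, and if $v=0$ then $u=-w$, so $u\in H^{P_0}_0(\Omega)$ and $(P_0-\lambda)u=Vw=-Vu$, i.e. $(P_0+V-\lambda)u=0$ in $\R^n$ after zero extension — again injectivity of $P_0+V-\lambda$ on $\mathcal{E}'(\R^n)$ forces $u=0$, contradicting $u\ne 0$; alternatively, if $v=0$ then $w=-u$ and $v=u+w=0$ is consistent, so I need the injectivity argument to rule this out. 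Hence $v\ne 0$ and $\lambda$ is a transmission eigenvalue.

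The main obstacle I anticipate is purely in the bookkeeping of regularity and the nontriviality claims rather than in any deep idea: one must be careful that $\frac{1}{V}(P_0-\lambda)u$ genuinely defines an $L^2_{\mathrm{loc}}$ function (so that applying $P_0+V-\lambda$ to it in $\mathcal{D}'(\Omega)$ is legitimate), and, in the converse direction, that $u\ne 0$ really does yield $v\ne 0$ — this is where the injectivity of constant-coefficient operators on $\mathcal{E}'(\R^n)$, already invoked in the excerpt, does the work, since the only way the construction could collapse is if $v=0$, forcing $u$ to solve $(P_0+V-\lambda)u=0$ with $u$ compactly supported in the appropriate sense. A secondary subtlety is checking that $V\in L^\infty$ with $V\ge\delta$ makes multiplication and division by $V$ bounded operations on the relevant spaces, which is immediate but should be stated.
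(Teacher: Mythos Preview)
Your approach is correct and essentially identical to the paper's: in the forward direction you set $u=v-w$ and compute $(P_0-\lambda)u=Vw$, exactly as the paper does; in the reverse direction your $v=u+w$ coincides with the paper's $v=\frac{1}{V}(P_0+V-\lambda)u$, and your direct verification that $(P_0-\lambda)v=0$ is just an unrolled form of the operator identity $(P_0+V-\lambda)\frac{1}{V}(P_0-\lambda)=(P_0-\lambda)\frac{1}{V}(P_0+V-\lambda)$, which the paper states and checks first before defining $v$ and $w$.

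On the nontriviality points you flag: in the forward direction your claim that $u=0$ would make $v\in H^{P_0}_0(\Omega)$ is wrong---only $v-w$ lies there, not $v$ itself. The clean argument is that $u=0$ gives $v=w$, and subtracting the two equations yields $Vv=0$, hence $v=0$ since $V\ge\delta>0$, contradicting $v\ne 0$. In the reverse direction, your appeal to injectivity of $P_0+V-\lambda$ on $\mathcal{E}'(\R^n)$ is not justified: the injectivity result quoted in the paper is specific to \emph{constant-coefficient} operators, and $P_0+V-\lambda$ has variable coefficients. The paper itself does not verify $v\ne 0$ in the reverse implication, so you are being more scrupulous than the source; note, however, that for the remainder of the paper what is actually used is the characterization via $0\ne u\in H^{P_0}_0(\Omega)$ with $T_\lambda u=0$, so the point is harmless for the subsequent analysis.
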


\begin{proof}
Assume that $\lambda$ is a transmission eigenvalue. Then there exists a non-trivial solution $(v,w)\ne 0$ of \eqref{eq_ITP}. Define $u=v-w\in H^{P_0}_0(\Omega)$. Then $(P_0-\lambda)u=Vw$ and thus, 
\[
(P_0+V-\lambda)\frac{1}{V}(P_0-\lambda)u=(P_0+V-\lambda)w=0.
\] 

To see the reverse implication, we first claim that
\[
(P_0+V-\lambda)\frac{1}{V}(P_0-\lambda)=(P_0-\lambda)\frac{1}{V}(P_0+V-\lambda).
\]
Indeed, 
\begin{align*}
&(P_0+V-\lambda)\frac{1}{V}(P_0-\lambda)u-(P_0-\lambda)\frac{1}{V}(P_0+V-\lambda)u\\
&=(P_0-\lambda)\frac{1}{V}(P_0-\lambda)u+\frac{V}{V}(P_0-\lambda)u-(P_0-\lambda)\frac{1}{V}(P_0-\lambda)u\\
&-(P_0-\lambda)\frac{V}{V}u=0.
\end{align*}
Let $0\ne u\in H^{P_0}_0(\Omega)$ be a solution of \eqref{eq_teq}. 
Denote now 
\[
w=\frac{1}{V}(P_0-\lambda)u\quad \text{and}\quad v=\frac{1}{V}(P_0+V-\lambda)u.
\]
Then 
\[
(P_0-\lambda)v=0, \quad (P_0+V-\lambda)w=0\quad \text{in}\quad \Omega
\]
and
\[
v-w=\frac{1}{V}(P_0+V-\lambda-P_0+\lambda)u=u\in H^{P_0}_0(\Omega).
\]

\end{proof}

Set
\[
T_\lambda=(P_0+V-\lambda)\frac{1}{V}(P_0-\lambda).
\]
Let $\lambda\in \C$. Then for $\varphi,\psi\in C^\infty_0(\Omega)$, we define a sesquilinear form
\[
B_\lambda(\varphi,\psi)=\langle T_\lambda\varphi,\psi \rangle_{L^2(\Omega)}=
\langle\frac{1}{V}(P_0-\lambda)\varphi ,(P_0+V-\lambda)\psi \rangle_{L^2(\Omega)}
\]
which 
extends uniquely to a continuous sesquilinear form on $H^{P_0}_0(\Omega)\times H^{P_0}_0(\Omega)$. 

\begin{lem}
\label{lem_coercive}
Assume that either 
$P_0(\xi)$ or \(-P_0(\xi)\) is bounded from below on $\R^n$.
Then there exists $\lambda_0\in \R$ such that the sesquilinear form $B_{\lambda_0}(u,v)$ is coercive on $H^{P_0}_0(\Omega)\times H^{P_0}_0(\Omega)$.
\end{lem}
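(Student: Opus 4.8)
The plan is to compute the sesquilinear form $B_\lambda(u,u)$ explicitly for $u\in C^\infty_0(\Omega)$, extract the leading (in $\lambda$) part that controls the $H^{P_0}_0(\Omega)$-norm, and then absorb all lower-order terms using the compact embedding of Lemma~\ref{prop_comp} together with the equivalence of the norms \eqref{eq_norm} and \eqref{eq_norm_2}. Expanding,
\[
B_\lambda(u,u)=\langle \tfrac1V(P_0-\lambda)u,(P_0+V-\lambda)u\rangle
=\langle \tfrac1V(P_0-\lambda)u,(P_0-\lambda)u\rangle+\langle (P_0-\lambda)u,u\rangle .
\]
Using $V\ge\delta>0$ on $\overline\Omega$, the first term is bounded below by $\delta'\|(P_0-\lambda)u\|^2$ for some $\delta'>0$ (since $1/V\ge 1/\|V\|_{L^\infty}$), so the real part satisfies
\[
\Re B_\lambda(u,u)\ge \delta'\|(P_0-\lambda)u\|^2+\Re\langle (P_0-\lambda)u,u\rangle .
\]
Now I would use the hypothesis: if, say, $P_0(\xi)\ge -c_0$ on $\R^n$, then for $u\in C^\infty_0(\Omega)$ one has $\langle P_0(D)u,u\rangle\ge -c_0\|u\|^2$ (by Plancherel), so $\Re\langle (P_0-\lambda)u,u\rangle\ge -(c_0+\lambda)\|u\|^2$ when $\lambda\in\R$; the case when $-P_0(\xi)$ is bounded below is handled by replacing $B_\lambda$ with $-B_\lambda$, or equivalently $P_0$ with $-P_0$ and $\lambda$ with $-\lambda$, which does not change the set of transmission eigenvalues or the coercivity question.

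Next I expand $\|(P_0-\lambda)u\|^2=\|P_0(D)u\|^2-2\lambda\Re\langle P_0(D)u,u\rangle+\lambda^2\|u\|^2$ and combine with the previous bound. Choosing $\lambda=\lambda_0\in\R$ negative with $|\lambda_0|$ large, the dominant contribution is $\delta'\|P_0(D)u\|^2=\delta'\|u\|_{P_0}^2$, together with a term of size $\sim\lambda_0^2\|u\|^2$ from the $\lambda^2\|u\|^2$ and $-(c_0+\lambda_0)\|u\|^2$ pieces (for $\lambda_0$ very negative, $\lambda_0^2$ beats $|\lambda_0|$). The cross terms $\lambda_0\langle P_0(D)u,u\rangle$ are controlled by Cauchy--Schwarz and the inequality $2|\lambda_0|\,|\langle P_0(D)u,u\rangle|\le \tfrac{\delta'}{2}\|P_0(D)u\|^2+\tfrac{2\lambda_0^2}{\delta'}\|u\|^2$, so for $|\lambda_0|$ large enough one obtains
\[
\Re B_{\lambda_0}(u,u)\ge \tfrac{\delta'}{2}\|u\|_{P_0}^2 + \big(\text{positive multiple of }\lambda_0^2\big)\|u\|^2 - C\|u\|^2 \ge \tfrac{\delta'}{2}\|u\|_{P_0}^2,
\]
which by density and the equivalence of \eqref{eq_norm} with the full $\tilde P_0$-norm \eqref{eq_norm_2} gives coercivity on $H^{P_0}_0(\Omega)\times H^{P_0}_0(\Omega)$.

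The main obstacle I anticipate is bookkeeping the $\lambda$-dependence so that the genuinely positive quadratic-in-$\lambda_0$ term dominates both the indefinite cross terms and the (possibly large) constant $c_0$ coming from the lower bound on $P_0$; this is where the choice "$\lambda_0$ real and sufficiently negative" is essential and must be made after all constants have been collected. A secondary point requiring a little care is the reduction of the case $-P_0$ bounded below to the case $P_0$ bounded below — one should check that passing from $P_0$ to $-P_0$ merely reflects $\lambda\mapsto-\lambda$ and negates the form, and in particular that $H^{P_0}_0(\Omega)=H^{-P_0}_0(\Omega)$ with the same norm, which is immediate since $\|P_0(D)\varphi\|=\|(-P_0)(D)\varphi\|$. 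Everything else is routine: the continuity of $B_{\lambda_0}$ on $H^{P_0}_0(\Omega)\times H^{P_0}_0(\Omega)$ was already noted before the statement, and the norm equivalence is \eqref{eq_horm}.
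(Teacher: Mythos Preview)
Your overall strategy matches the paper's, but the step where you control the cross term $-2\delta'\lambda_0\langle P_0 u,u\rangle$ by Cauchy--Schwarz is a genuine gap. Any Young-type bound of the shape
\[
2\delta'|\lambda_0|\,|\langle P_0 u,u\rangle|\le \epsilon\,\delta'\|P_0 u\|^2+\frac{\delta'}{\epsilon}\,\lambda_0^2\|u\|^2,
\]
with $\epsilon<1$ (which you need in order to retain a positive multiple of $\|P_0 u\|^2$), costs strictly more than $\delta'\lambda_0^2\|u\|^2$ on the $\|u\|^2$ side. Since the only positive $\lambda_0^2$--contribution available is exactly $\delta'\lambda_0^2\|u\|^2$, the net coefficient of $\lambda_0^2\|u\|^2$ after absorption is $\delta'(1-1/\epsilon)<0$, and your displayed lower bound with a ``positive multiple of $\lambda_0^2$'' does not follow. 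With your specific constants the situation is worse: multiplying your inequality by $\delta'$ (the cross term carries a factor $\delta'$) gives a loss of $2\lambda_0^2\|u\|^2$ against a gain of $\delta'\lambda_0^2\|u\|^2$, which is negative whenever $\sup V>1/2$.

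The fix is exactly what the paper does: apply the semiboundedness hypothesis to the cross term as well, not Cauchy--Schwarz. Grouping the two occurrences of $\langle P_0 u,u\rangle$, one has for $\lambda_0$ real and sufficiently negative that $1-2\delta'\lambda_0>0$, hence
\[
(1-2\delta'\lambda_0)\langle P_0 u,u\rangle \ge -(1-2\delta'\lambda_0)c_0\|u\|^2,
\]
which is a loss only of order $|\lambda_0|$, not $\lambda_0^2$. The full $\delta'\lambda_0^2\|u\|^2$ then survives and dominates all $O(|\lambda_0|)$ and $O(1)$ terms, yielding $B_{\lambda_0}(u,u)\ge \tfrac{\delta'}{2}\|P_0 u\|^2$ for $|\lambda_0|$ large. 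Two side remarks: the compact embedding of Lemma~\ref{prop_comp} plays no role in the coercivity argument and should be dropped from your plan; and since $P_0$ has real coefficients and $\lambda_0\in\R$, the quantity $B_{\lambda_0}(u,u)$ is already real for $u\in C_0^\infty(\Omega)$, so the ``$\Re$'' is superfluous.
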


\begin{proof}
Let $\varphi\in C^\infty_0(\Omega)$ and set $C_0=\inf_\Omega1/V>0$. Then
\begin{equation}
\label{eq_est_b}
\begin{aligned}
B_\lambda(\varphi,\varphi)&=\langle(P_0-\lambda)\frac{1}{V}(P_0-\lambda)\varphi ,\varphi \rangle_{L^2(\Omega)}+\langle (P_0-\lambda)\varphi,\varphi \rangle_{L^2(\Omega)}\\
&\ge C_0\langle(P_0-\lambda)^2\varphi ,\varphi \rangle_{L^2(\Omega)}+\langle (P_0-\lambda)\varphi,\varphi \rangle_{L^2(\Omega)}\\
&= C_0\|P_0\varphi\|^2-2C_0\lambda \langle P_0\varphi,\varphi \rangle_{L^2(\Omega)} \\ &\quad\quad\quad+  \lambda^2C_0\|\varphi\|^2-\lambda\|\varphi\|^2+ \langle P_0\varphi,\varphi \rangle_{L^2(\Omega)}.
\end{aligned}
\end{equation}
To estimate the last term in  the right hand side of \eqref{eq_est_b} we write
\begin{equation}
\label{eq_est_b_4}
\langle P_0\varphi,\varphi \rangle_{L^2(\Omega)}\le \|P_0\varphi\|\|\varphi\|\le \frac{C_0}{2}\|P_0\varphi\|^2+\frac{2}{C_0}\|\varphi\|^2.
\end{equation}
For the second  term in the right hand side of \eqref{eq_est_b}, assuming, to fix the ideas, that $P_0(\xi)$ is bounded from below on $\R^n$, we get
\begin{equation}
\label{eq_est_b_2}
\langle P_0\varphi,\varphi \rangle_{L^2(\Omega)}=(2\pi)^{-n}\int_{\R^n} P_0(\xi)|\hat\varphi(\xi)|^2d\xi\ge -C\|\varphi\|^2.
\end{equation}
Now combining \eqref{eq_est_b}, \eqref{eq_est_b_4} and \eqref{eq_est_b_2}, we can find $\lambda_0\in \R$ with $|\lambda_0|$ large enough such that
\[
B_{\lambda_0}(\varphi,\varphi)\ge \frac{C_0}{2}\|P_0\varphi\|^2=\frac{C_0}{2}\|\varphi\|^2_{P_0}, \quad \forall \varphi\in C^\infty_0(\Omega).
\] 
This establishes the coercivity of $B_{\lambda_0}$ on $H^{P_0}_0(\Omega)\times H^{P_0}_0(\Omega)$.  

\end{proof}

Let us denote by $H^{-P_0}(\Omega)$ the dual of the space $H_0^{P_0}(\Omega)$. Notice that $H^{-P_0}(\Omega)$ can be viewed as a subspace of the space of distributions $\mathcal{D}'(\Omega)$.

\begin{thm} 
\label{thm_discret_1}
Assume that $\tilde P_0(\xi)\to\infty$ when $|\xi|\to \infty$ and that
the symbol 
$\pm P_0(\xi)$ is bounded from below on $\R^n$, for one of the choices of the sign. Then the set of transmission eigenvalues is discrete.

\end{thm}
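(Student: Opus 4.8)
The plan is to set up an analytic Fredholm argument on the Hilbert space $H^{P_0}_0(\Omega)$, using the operator $T_\lambda$ and the sesquilinear form $B_\lambda$ already introduced. By Proposition~\ref{thm_equivalence}, $\lambda$ is a transmission eigenvalue precisely when $T_\lambda$ has a nontrivial kernel on $H^{P_0}_0(\Omega)$, where $T_\lambda$ is understood as a bounded operator $H^{P_0}_0(\Omega)\to H^{-P_0}(\Omega)$ obtained from the form $B_\lambda$. So it suffices to show that $\lambda\mapsto T_\lambda$ is an analytic family of Fredholm operators of index zero, invertible for at least one value of $\lambda$; the discreteness of the transmission eigenvalues then follows from the analytic Fredholm theorem, since the set of $\lambda$ for which $T_\lambda$ is not invertible is either all of $\C$ or a discrete subset, and the latter alternative holds as soon as there is one invertible $T_\lambda$.

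First I would record that $\lambda\mapsto B_\lambda(\varphi,\psi)$ is polynomial (in fact quadratic) in $\lambda$ for fixed $\varphi,\psi$, hence $\lambda\mapsto T_\lambda$ is an entire operator-valued function; this is immediate from the explicit formula $B_\lambda(\varphi,\psi)=\langle \frac{1}{V}(P_0-\lambda)\varphi,(P_0+V-\lambda)\psi\rangle$. Second, I would invoke Lemma~\ref{lem_coercive}: under the hypothesis that $\pm P_0(\xi)$ is bounded from below for one choice of sign, there is a real $\lambda_0$ with $B_{\lambda_0}$ coercive on $H^{P_0}_0(\Omega)$, so by Lax--Milgram $T_{\lambda_0}:H^{P_0}_0(\Omega)\to H^{-P_0}(\Omega)$ is an isomorphism. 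Third, I would write $T_\lambda=T_{\lambda_0}+(T_\lambda-T_{\lambda_0})$ and check that the difference $T_\lambda-T_{\lambda_0}$ is compact from $H^{P_0}_0(\Omega)$ to $H^{-P_0}(\Omega)$. Expanding the quadratic expression, the form $B_\lambda(\varphi,\psi)-B_{\lambda_0}(\varphi,\psi)$ is a sum of terms each of which contains at least one factor of $\varphi$ or $\psi$ undifferentiated, i.e. each term is continuous in the $L^2(\Omega)$-norm in one of its two arguments (the leading term $\langle\frac1V P_0\varphi,P_0\psi\rangle$ cancels). By Lemma~\ref{prop_comp}, which applies precisely because $\tilde P_0(\xi)\to\infty$, the embedding $H^{P_0}_0(\Omega)\hookrightarrow L^2(\Omega)$ is compact, and this turns each such term into a compact perturbation. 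Hence $T_\lambda$ is, for every $\lambda$, an isomorphism plus a compact operator, so it is Fredholm of index zero, and the family is analytic.

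Finally, the analytic Fredholm theorem gives that $\{\lambda:T_\lambda\text{ not invertible}\}$ is discrete (with no finite accumulation point in $\C$), using that $T_{\lambda_0}$ is invertible to rule out the degenerate case; combined with the kernel characterization of Proposition~\ref{thm_equivalence}, this is exactly the discreteness of the set of transmission eigenvalues. The main obstacle I anticipate is the bookkeeping in the third step: one must carefully expand $B_\lambda-B_{\lambda_0}$, keep track of which derivative falls on which argument, and verify that $T_{\lambda}-T_{\lambda_0}$ genuinely factors through the compact embedding $H^{P_0}_0(\Omega)\hookrightarrow L^2(\Omega)$ on at least one side of each term — in particular that the ``mixed'' term $\langle\frac1V(P_0-\lambda_0)\varphi,(P_0+V-\lambda_0)\psi\rangle$ contributions involving one $P_0$ and one undifferentiated factor really are $L^2$-continuous in that factor, so that compactness of the embedding applies. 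The remaining points (analyticity, the Lax--Milgram step, the index-zero conclusion) are routine given the lemmas already established.
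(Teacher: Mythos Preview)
Your proposal is correct and follows essentially the same route as the paper: analytic Fredholm theory applied to the entire family $T_\lambda:H^{P_0}_0(\Omega)\to H^{-P_0}(\Omega)$, with invertibility at $\lambda_0$ coming from Lemma~\ref{lem_coercive} and Lax--Milgram, and compactness of $T_\lambda-T_{\lambda_0}$ coming from Lemma~\ref{prop_comp}. The only cosmetic difference is that the paper writes out the difference explicitly as
\[
T_\lambda-T_{\lambda_0}=-P_0\frac{\lambda-\lambda_0}{V}-\frac{\lambda-\lambda_0}{V}P_0-(\lambda-\lambda_0)+\frac{\lambda^2-\lambda_0^2}{V}
\]
and checks each of the four terms separately, which is exactly the ``bookkeeping'' you anticipated.
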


\begin{proof}
The proof is based on application of the analytic Fredholm  theory to the holomorphic family of operators
\[
T_\lambda:\C\to \mathcal{L}(H^{P_0}_0(\Omega),H^{-P_0}(\Omega)), \quad
T_\lambda=(P_0+V-\lambda)\frac{1}{V}(P_0-\lambda).
\]
Since by Lemma \ref{lem_coercive} there exists $\lambda_0\in \R$ such that the sesquilinear form $B_{\lambda_0}$, corresponding to the operator $T_{\lambda_0}$, is coercive and bounded on $H^{P_0}_0(\Omega)\times H^{P_0}_0(\Omega)$, by an application of the Lax--Milgram lemma we conclude that $T_{\lambda_0}:H^{P_0}_0(\Omega)\to H^{-P_0}(\Omega)$ is invertible.

Now, for any $\lambda$ the operator 
\begin{equation}
\label{eq_perturbation}
T_\lambda-T_{\lambda_0}=-P_0\frac{\lambda-\lambda_0}{V}-\frac{\lambda-\lambda_0}{V}P_0-(\lambda-\lambda_0)+\frac{\lambda^2-\lambda_0^2}{V}:H^{P_0}_0(\Omega)\to H^{-P_0}(\Omega)
\end{equation}
is compact. Indeed, 
the last two terms in the right hand side of  \eqref{eq_perturbation} are compact because
\[
 -(\lambda-\lambda_0)+\frac{\lambda^2-\lambda_0^2}{V}:L^2(\Omega)\to L^2(\Omega) 
 \]
 is bounded and by Lemma \ref{prop_comp} the embedding $H^{P_0}_0(\Omega)\hookrightarrow  L^2(\Omega)$ is compact. 
 The second term  in the right hand side of \eqref{eq_perturbation} defines the operator $-\frac{\lambda-\lambda_0}{V}P_0$. This operator is compact as a composition of the bounded operators
\[
P_0:H^{P_0}_0(\Omega)\to L^2(\Omega),\quad
-\frac{\lambda-\lambda_0}{V}:L^2(\Omega)\to L^2(\Omega)
\]
with the compact inclusion $L^2(\Omega)$ into $H^{-P_0}(\Omega)$. 
The first term in  the right hand side of \eqref{eq_perturbation} defines the operator $-P_0\frac{\lambda-\lambda_0}{V}$. This is a bounded operator
\[
-P_0\frac{\lambda-\lambda_0}{V}:L^2(\Omega)\to H^{-P_0}(\Omega).
\]
Since the inclusion $H^{P_0}_0(\Omega)$ to $L^2(\Omega)$ is compact, we conclude the compactness of $-P_0\frac{\lambda-\lambda_0}{V}$
as an operator from $H^{P_0}_0(\Omega)$ to $H^{-P_0}(\Omega)$. Thus also \(T_\lambda - T_{\lambda _0}\) is compact.

Hence, the operator $T_{\lambda}$ is Fredholm as the sum of an invertible operator and a compact operator. Since it is invertible at $\lambda=\lambda_0$, the analytic Fredholm theory guarantees that $T_\lambda$ is invertible except for  a discrete set of values $\lambda$.  

\end{proof}

\begin{rem}
The assumptions of Theorem \ref{thm_discret_1} are satisfied automatically when the operator $P_0$ is elliptic. 
\end{rem}

\begin{rem}
The condition that  $\pm P_0(\xi)$ is bounded from below on $\R^n$, for one of the choices of the sign, cannot be removed completely. Indeed,  let \(P_0 = D_{x_1} + \Delta _{x'}\), $x=(x_1,x')\in \R^n$. 
 Then if \((v,w)\) is a pair of transimssion eigenfunctions for a transmission eigenvalue \(\lambda\) on some domain $\Omega$, then  \(v'(x) = e^{i\mu x_1} v(x)\) and \(w'(x) = e^{i\mu x_1} w(x)\) will be transmission eigenfunctions for the transmission eigenvalue \(\lambda + \mu\). Hence the set of transmission eigenvalues is either empty or \(\C\).
\end{rem}

\begin{rem} Notice that $\text{ind}(T_\lambda)=0$, since $T_\lambda$ is the sum of an invertible operator and a compact operator. It follows that $T_\lambda$ fails to be invertible if and only if $T_\lambda$ is not injective. 
\end{rem}

\section{Existence of transmission eigenvalues }

\label{sec_exis}

In this section we study the question of existence of transmission eigenvalues. Let  us write the operator $T_\lambda$ in the following form,
\begin{equation}
\label{eq_T_lambda}
T_\lambda=(P_0+\frac{V}{2}-\lambda)\frac{1}{V}(P_0+\frac{V}{2}-\lambda)-\frac{V}{4}.
\end{equation}

\begin{lem}
\label{lem_T(tau)_self-adjoint}
The operator $T_\lambda$ given by \eqref{eq_T_lambda}, equipped 
with the domain 
\[
\mathcal{D}(T_\lambda)=\{u\in H^{P_0}_0(\Omega): P_0(\frac{1}{V}P_0-\frac{\lambda}{V})u\in L^2(\Omega)\},
\]  
is an unbounded self-adjoint operator on $L^2(\Omega)$, for any $\lambda\in \R$.
\end{lem}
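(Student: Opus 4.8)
The plan is to establish self-adjointness by first showing that $T_\lambda$, as written in \eqref{eq_T_lambda}, is symmetric on the indicated domain, and then identifying it with a self-adjoint operator produced by the form-theoretic machinery already set up in the previous section. Concretely, write $A_\lambda = \frac{1}{\sqrt V}(P_0+\frac{V}{2}-\lambda)$ as an operator from $H^{P_0}_0(\Omega)$ into $L^2(\Omega)$; since $V$ is real-valued, bounded, and bounded below by $\delta>0$, multiplication by $1/\sqrt V$ is a bounded self-adjoint isomorphism of $L^2(\Omega)$. Then, on $C^\infty_0(\Omega)$, $T_\lambda = A_\lambda^* A_\lambda - \frac{V}{4}$, where $A_\lambda^*$ is the formal adjoint (note $\sqrt V (P_0 + \frac V2 - \lambda)\sqrt V \cdot \frac1{\sqrt V} = \ldots$; more cleanly, $A_\lambda^* = (P_0+\frac V2 - \lambda)\frac{1}{\sqrt V}$ as a map $L^2 \to H^{-P_0}$, and its restriction to $\sqrt V$-weighted functions lands in $L^2$ exactly on $\mathcal D(T_\lambda)$). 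The representation $T_\lambda + \frac V4 = A_\lambda^* A_\lambda$ makes symmetry transparent: for $u,v$ in the domain, $\langle A_\lambda^* A_\lambda u, v\rangle = \langle A_\lambda u, A_\lambda v\rangle = \langle u, A_\lambda^* A_\lambda v\rangle$, all integrations by parts being justified because the boundary terms vanish for elements of $H^{P_0}_0(\Omega)$ (this is the content of the space being the $\|\cdot\|_{P_0}$-completion of $C^\infty_0(\Omega)$, together with the equivalence of norms \eqref{eq_horm} which controls all lower-order derivatives).

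Next I would invoke the standard fact that for a densely defined closed operator $A_\lambda \colon L^2(\Omega)\supset H^{P_0}_0(\Omega)\to L^2(\Omega)$, the operator $A_\lambda^* A_\lambda$ with its natural domain $\{u\in H^{P_0}_0(\Omega): A_\lambda u\in \mathcal D(A_\lambda^*)\}$ is automatically self-adjoint (von Neumann's theorem). So the crux is to verify that $A_\lambda$ is closed as an unbounded operator on $L^2(\Omega)$ with domain $H^{P_0}_0(\Omega)$, and that the natural von Neumann domain of $A_\lambda^* A_\lambda$ coincides with the $\mathcal D(T_\lambda)$ stated in the lemma. Closedness of $A_\lambda$ follows from Lemma \ref{prop_comp} (or rather its underlying estimate): if $u_k\to u$ in $L^2$ and $A_\lambda u_k \to g$ in $L^2$, then $(P_0-\lambda)u_k$ converges in $L^2$, hence $\|u_j-u_k\|_{P_0}\to 0$ after absorbing the lower-order term via the Poincaré-type inequality $\|\varphi\|\le C_\Omega\|P_0\varphi\|$, so $u_k$ is Cauchy in $H^{P_0}_0(\Omega)$ and its limit is $u$; thus $u\in H^{P_0}_0(\Omega)$ and $A_\lambda u = g$. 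The description of $\mathcal D(A_\lambda^* A_\lambda)$ then unwinds to: $u\in H^{P_0}_0(\Omega)$ with $\frac{1}{\sqrt V}(P_0+\frac V2-\lambda)u$ in the domain of $(P_0+\frac V2-\lambda)\frac{1}{\sqrt V}$, i.e. $(P_0+\frac V2 -\lambda)\frac{1}{V}(P_0+\frac V2-\lambda)u\in L^2(\Omega)$; since the zeroth-order terms and the single $P_0$ term map $L^2$ into $L^2$, this is equivalent to $P_0(\frac1V P_0 - \frac\lambda V)u\in L^2(\Omega)$, matching the stated domain. Finally, $T_\lambda = A_\lambda^* A_\lambda - \frac V4$ is self-adjoint because it differs from a self-adjoint operator by the bounded self-adjoint perturbation $-V/4$, and $\lambda\in\R$ ensures $A_\lambda$ has $L^2$ coefficients so that everything above is consistent.

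The main obstacle I anticipate is the careful bookkeeping in identifying $A_\lambda^*$ and pinning down that the von Neumann domain is exactly $\mathcal D(T_\lambda)$ rather than something a priori larger or smaller — in particular, one must be sure that no boundary contributions are lost when transferring the formal adjoint computation from $C^\infty_0(\Omega)$ to all of $H^{P_0}_0(\Omega)$, which is precisely where the hypothesis $\tilde P_0(\xi)\to\infty$ is used implicitly (through Lemma \ref{prop_comp} and \eqref{eq_horm}, guaranteeing that $H^{P_0}_0(\Omega)$ genuinely behaves like a space of functions vanishing to appropriate order at $\partial\Omega$). A secondary, routine point is checking that the splitting in \eqref{eq_T_lambda} is algebraically correct — this is just the identity $(P_0+\tfrac V2-\lambda)\tfrac1V(P_0+\tfrac V2-\lambda) - \tfrac V4 = (P_0+V-\lambda)\tfrac1V(P_0-\lambda)$, verified by expanding both sides.
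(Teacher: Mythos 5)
Your proposal is correct and follows essentially the same route as the paper: the paper's proof is precisely the factorization $T_\lambda=A_\lambda^*A_\lambda-\frac{V}{4}$ with $A_\lambda=V^{-1/2}(P_0+\frac{V}{2}-\lambda)$ on $\mathcal{D}(A_\lambda)=H^{P_0}_0(\Omega)$ together with von Neumann's theorem, and you merely supply the closedness and domain-identification details that the paper leaves implicit. One minor correction: the hypothesis $\tilde P_0(\xi)\to\infty$ plays no role in this lemma (closedness of $A_\lambda$ needs only the Poincar\'e-type estimate $\|\varphi\|\le C_\Omega\|P_0(D)\varphi\|$); that hypothesis enters only later, for compactness of the resolvent.
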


\begin{proof} Let $A_\lambda=\frac{1}{V^{1/2}}(P_0+\frac{V}{2}-\lambda)$. Then $A_\lambda$ with the domain $\mathcal{D}(A_\lambda)=H_0^{P_0}(\Omega)$ is a densely defined closed operator. Its $L^2$--adjoint is given by $A_\lambda^*=(P_0+\frac{V}{2}-\lambda)\frac{1}{V^{1/2}}$ with the domain 
$\mathcal{D}(A_\lambda^*)=\{u\in L^2(\Omega):P_0\frac{1}{V^{1/2}}u\in L^2(\Omega)\}$. Then the result follows from the fact that $T_\lambda=A_\lambda^*A_\lambda-\frac{V}{4}$.
\end{proof}

For future reference, let us remark that the form domain of the semibounded self-adjoint operator $T_\lambda$ is $H^{P_0}_0(\Omega)$, for any $\lambda\in \R$.
Let us also recall the corresponding quadratic form
\[
B_\lambda(u,u)=
\langle\frac{1}{V}(P_0-\lambda)u ,(P_0+V-\lambda)u \rangle_{L^2(\Omega)}, \quad u\in H^{P_0}_0(\Omega). 
\]
From Lemma \ref{lem_coercive}, we know that there exists $\lambda_0\in \R$ such that $B_{\lambda_0}$ is coercive on $H^{P_0}_0(\Omega)$.

Similarly to \cite{paisyl08}, we have the following result. 

\begin{thm}
\label{thm_existence1}
 If for some $\lambda\in \R$, there is $u\in H^{P_0}_0(\Omega)$ such that 
\begin{equation}
\label{eq_existence2}
 B_\lambda(u,u) \le 0
\end{equation}
then there exists a transmission eigenvalue $\lambda^*\in (\lambda_0, \lambda]$, if $\lambda_0<\lambda$, or $\lambda^*\in [\lambda, \lambda_0)$, if $\lambda_0>\lambda$.

\end{thm}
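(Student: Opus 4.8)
The plan is to use the monotonicity of the eigenvalues of the self-adjoint family $T_\lambda$ together with the sign change predicted by the coercivity at $\lambda_0$ and the hypothesis \eqref{eq_existence2} at $\lambda$, and then to invoke Proposition \ref{thm_equivalence}: a transmission eigenvalue is exactly a value of $\lambda$ at which $T_\lambda$ fails to be injective, equivalently (by the index-zero remark) a value at which $T_\lambda$ has nontrivial kernel, equivalently a value at which the lowest eigenvalue of the semibounded self-adjoint operator $T_\lambda$ crosses $0$. So the whole argument is an intermediate-value argument for the bottom of the spectrum of $T_\lambda$.

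First I would introduce, for $\lambda\in\R$, the generalized Rayleigh quotient / lowest eigenvalue
\[
\mu_1(\lambda)=\inf_{0\ne u\in H^{P_0}_0(\Omega)}\frac{B_\lambda(u,u)}{\|u\|^2_{L^2(\Omega)}},
\]
which is the bottom of the spectrum of the semibounded self-adjoint operator $T_\lambda$ with form domain $H^{P_0}_0(\Omega)$ (Lemma \ref{lem_T(tau)_self-adjoint} and the remark following it), and which is in fact attained, because $B_\lambda-\mu\|\cdot\|^2$ is, for $\mu$ small enough, coercive on $H^{P_0}_0(\Omega)$ and the embedding $H^{P_0}_0(\Omega)\hookrightarrow L^2(\Omega)$ is compact (Lemma \ref{prop_comp}); thus $T_\lambda$ has compact resolvent and discrete spectrum. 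Next I would record the two endpoint facts: by Lemma \ref{lem_coercive}, $B_{\lambda_0}(u,u)\ge \tfrac{C_0}{2}\|P_0 u\|^2\ge \tfrac{C_0}{2C_\Omega^{-2}}\|u\|^2>0$ for $u\ne 0$, hence $\mu_1(\lambda_0)>0$; while the hypothesis gives some $u$ with $B_\lambda(u,u)\le 0$, hence $\mu_1(\lambda)\le 0$.

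The core step is continuity and strict monotonicity of $\lambda\mapsto\mu_1(\lambda)$. Monotonicity follows from the explicit form \eqref{eq_T_lambda}, or more simply from differentiating the quadratic form: writing $B_\lambda(u,u)=\langle(P_0-\lambda)u,(P_0-\lambda)u\rangle_{1/V}+\langle(P_0-\lambda)u,u\rangle$ (with $\langle\cdot,\cdot\rangle_{1/V}$ the weighted inner product), one checks that $\partial_\lambda B_\lambda(u,u)=-2\langle(P_0-\lambda)u,u\rangle_{1/V}-\|u\|^2$; evaluated on a normalized ground state $u$ one can, after completing the square as in \eqref{eq_T_lambda}, see that the derivative of $\mu_1$ is strictly negative wherever $\mu_1(\lambda)\le \tfrac{C_0}{4}$ or so — the cleanest route is to write $B_\lambda(u,u)=\|A_\lambda u\|^2_{L^2}-\tfrac14\langle Vu,u\rangle$ with $A_\lambda=V^{-1/2}(P_0+\tfrac V2-\lambda)$ and note $A_\lambda u=A_{\lambda'}u+(\lambda'-\lambda)V^{-1/2}u$, which by convexity of $t\mapsto t^2$ makes $\lambda\mapsto B_\lambda(u,u)$ a downward parabola in $\lambda$ for each fixed $u$, hence $\mu_1$ is an infimum of concave functions and thus continuous on $\R$, and one gets strict monotonic decay on the relevant interval from the $\langle Vu,u\rangle$ term. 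Continuity of $\mu_1$ on the compact interval between $\lambda_0$ and $\lambda$ then lets me apply the intermediate value theorem: since $\mu_1(\lambda_0)>0\ge\mu_1(\lambda)$, there is $\lambda^*$ strictly between them (on the open side toward $\lambda_0$, closed side toward $\lambda$, matching the stated intervals $(\lambda_0,\lambda]$ or $[\lambda,\lambda_0)$, using $\mu_1(\lambda_0)>0$ strictly) with $\mu_1(\lambda^*)=0$. At $\lambda^*$ the ground state $u^*$ realizes $B_{\lambda^*}(u^*,u^*)=0=\mu_1(\lambda^*)\|u^*\|^2$, so $u^*\in\ker T_{\lambda^*}$, i.e. $T_{\lambda^*}$ is not injective, so by Proposition \ref{thm_equivalence} $\lambda^*$ is a transmission eigenvalue.

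I expect the main obstacle to be the monotonicity/continuity of $\mu_1(\lambda)$ done rigorously at the level of the unbounded operator rather than the form: one must be careful that the form domain $H^{P_0}_0(\Omega)$ is $\lambda$-independent (it is, by the remark after Lemma \ref{lem_T(tau)_self-adjoint}), so that min-max is legitimately applied to a fixed form domain, and that the perturbation $T_\lambda-T_{\lambda_0}$ acting between $H^{P_0}_0(\Omega)$ and its dual is genuinely form-bounded with respect to $B_{\lambda_0}$ so that the family is analytic of type (B) — this is exactly the compactness computation already carried out in the proof of Theorem \ref{thm_discret_1}. Given that, continuity of each eigenvalue in $\lambda$ is standard perturbation theory, and the only remaining subtlety is extracting the strict inequalities at the endpoints to land in the half-open intervals as stated, which is immediate from the strict coercivity at $\lambda_0$.
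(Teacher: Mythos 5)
Your argument is essentially the paper's: both proofs reduce the statement to an intermediate value theorem for the lowest eigenvalue of the self-adjoint family $T_\lambda$ (positive at $\lambda_0$ by coercivity, non-positive at $\lambda$ by hypothesis), using compactness of the resolvent and the $\lambda$-independence of the form domain to get continuity of the eigenvalues, and then identifying the zero-crossing with non-injectivity of $T_{\lambda^*}$ via Proposition \ref{thm_equivalence}. One correction to your justification of continuity: writing $B_\lambda(u,u)=\|A_\lambda u\|^2-\tfrac14\langle Vu,u\rangle$ with $A_\lambda u=A_{\lambda'}u+(\lambda'-\lambda)V^{-1/2}u$ shows that $\lambda\mapsto B_\lambda(u,u)$ is an \emph{upward} parabola (the coefficient of $\lambda^2$ is $\|V^{-1/2}u\|^2>0$), not a downward one, so $\mu_1$ is an infimum of \emph{convex} functions, and such an infimum need not be continuous in general. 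The correct route — which you also sketch in your last paragraph and which is what the paper invokes via P\"aiv\"arinta--Sylvester — is the relative form bound $|B_\lambda(u,u)-B_{\lambda'}(u,u)|\le C|\lambda-\lambda'|\bigl(B_{\lambda'}(u,u)+C'\|u\|^2\bigr)$ on the fixed form domain $H^{P_0}_0(\Omega)$, which together with the min--max principle yields continuity of each eigenvalue; with that substitution your proof is complete and coincides with the paper's.
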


\begin{proof}

The operator $T_\lambda$ has a compact resolvent acting on $L^2(\Omega)$ because the canonical embedding of its domain $\mathcal{D}(T_\lambda)$ to $L^2(\Omega)$ is compact. The facts that $T_\lambda$ depends continuously on $\lambda$, $T_\lambda$ is self-adjoint with compact resolvent and the form domain of $T_\lambda$ does not depend on $\lambda$  imply, through an application of the variational principle,  
 that the eigenvalues of $T_\lambda$ depend continuously on $\lambda$  for $\lambda\in \R$.  See \cite{paisyl08} for the details of this argument. 

The hypothesis \eqref{eq_existence2} yields that  $T_\lambda$ has at least one non-positive eigenvalue. Since $T_{\lambda_0}$ is positive definite, the lowest eigenvalue, which is a continuous function of $\lambda \in \R$, must pass through zero for some $\lambda^*\in (\lambda_0, \lambda]$, if $\lambda_0<\lambda$, or $\lambda^*\in [\lambda, \lambda_0)$, if $\lambda_0>\lambda$.  Hence, the operator $T_\lambda^*$ is non-injective and therefore, $\lambda^*$ is a transmission eigenvalue. 

\end{proof}

We define the multiplicity of a transmission eigenvalue $\lambda$ to be the multiplicity of zero as an eigenvalue of $T_\lambda$. 
Since the self-adjoint operator $T_\lambda$, $\lambda\in \R$, has a compact resolvent, the multiplicity of $\lambda$ is finite. 

As in \cite[Lemma 14]{paisyl08}, we get the following result on  existence of more than one transmission eigenvalues. 

\begin{lem}
\label{lem_existence_p}
If there exists $\lambda\in \R$ and a p-dimensional subspace $V^p\subset H^{P_0}_0(\Omega)$ such that 
\begin{equation}
\label{eq_existence3}
B_\lambda (u,u) \le 0
\end{equation}
for all $u\in V^p$, 
then there exist  $p$ transmission eigenvalues, counting with multiplicity.

\end{lem}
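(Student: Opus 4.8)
\textbf{Proof plan for Lemma \ref{lem_existence_p}.}

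The plan is to run the same variational argument used for Theorem \ref{thm_existence1}, but tracking the lowest $p$ eigenvalues of $T_\mu$ simultaneously rather than just the bottom one. Denote by $\mu_1(T_\mu)\le\mu_2(T_\mu)\le\cdots$ the eigenvalues of the self-adjoint operator $T_\mu$ (with compact resolvent, by Lemma \ref{lem_T(tau)_self-adjoint} and the compact embedding of $\mathcal D(T_\mu)$ into $L^2(\Omega)$), repeated according to multiplicity. As recalled in the proof of Theorem \ref{thm_existence1}, each $\mu_k(T_\mu)$ is a continuous function of $\mu\in\R$: this follows from the min--max characterization
\[
\mu_k(T_\mu)=\min_{\substack{W\subset H^{P_0}_0(\Omega)\\ \dim W=k}}\ \max_{0\ne u\in W}\frac{B_\mu(u,u)}{\|u\|^2_{L^2(\Omega)}},
\]
together with the facts that the form domain $H^{P_0}_0(\Omega)$ is independent of $\mu$ and that $B_\mu(u,u)$ is (jointly) continuous in $\mu$ for fixed $u$ — indeed $B_\mu(u,u)$ is a quadratic polynomial in $\mu$ with coefficients given by fixed $L^2$-inner products.

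First I would use the hypothesis \eqref{eq_existence3}: applying the min--max formula with $k=p$ and the trial space $W=V^p$ gives $\mu_p(T_\lambda)\le \max_{0\ne u\in V^p} B_\lambda(u,u)/\|u\|^2\le 0$. Hence all of $\mu_1(T_\lambda),\dots,\mu_p(T_\lambda)$ are $\le 0$. On the other side, at $\mu=\lambda_0$ the form $B_{\lambda_0}$ is coercive on $H^{P_0}_0(\Omega)$ by Lemma \ref{lem_coercive}, so $T_{\lambda_0}$ is positive definite and $\mu_k(T_{\lambda_0})>0$ for every $k$, in particular for $k=1,\dots,p$. Now for each fixed $k\in\{1,\dots,p\}$ the function $\mu\mapsto \mu_k(T_\mu)$ is continuous on the closed interval between $\lambda_0$ and $\lambda$, is strictly positive at the $\lambda_0$-endpoint and non-positive at the $\lambda$-endpoint, so by the intermediate value theorem there is a point $\lambda_k^*$ strictly between $\lambda_0$ and $\lambda$ (and possibly equal to $\lambda$) at which $\mu_k(T_{\lambda_k^*})=0$. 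At such a point $0$ is an eigenvalue of $T_{\lambda_k^*}$, so $T_{\lambda_k^*}$ is non-injective and $\lambda_k^*$ is a transmission eigenvalue by Proposition \ref{thm_equivalence}.

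The only genuine issue is bookkeeping of multiplicities: the points $\lambda_1^*,\dots,\lambda_p^*$ produced above need not be distinct, and I must argue that they account for $p$ transmission eigenvalues counted with multiplicity (the multiplicity of a transmission eigenvalue $\lambda^*$ being $\dim\ker T_{\lambda^*}$, as defined just before the lemma). The cleanest way is to take, for the chosen interval $I$ with endpoints $\lambda_0$ and $\lambda$, the first crossing: since $\mu_1\le\mu_2\le\cdots\le\mu_p$ pointwise and all $\mu_k$ are positive at the $\lambda_0$-end and $\le 0$ at $\lambda$, one can order the crossing values and count sign changes of the full list $(\mu_1,\dots,\mu_p)$; each value $\lambda^*\in I$ at which exactly $j$ of the functions $\mu_1,\dots,\mu_p$ vanish contributes at least $j$ to $\dim\ker T_{\lambda^*}$, and a standard counting of how the count ``$\#\{k\le p:\mu_k(T_\mu)<0\}$'' jumps from $0$ at the $\lambda_0$-end to $\ge$ (its value at $\lambda$) forces the total, over all $\lambda^*\in I$, of $\min(p,\dim\ker T_{\lambda^*})$ contributions to be at least $p$. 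This is exactly the argument of \cite[Lemma 14]{paisyl08}, and I would simply invoke it, the point being that the structure here (self-adjoint $T_\mu$ with compact resolvent, $\mu$-independent form domain, continuity in $\mu$, positivity at $\lambda_0$) is identical to that in \cite{paisyl08}. Hence there exist $p$ transmission eigenvalues in the interval between $\lambda_0$ and $\lambda$, counted with multiplicity.
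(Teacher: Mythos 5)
Your proposal is correct and is essentially the argument the paper intends: the paper gives no proof of Lemma \ref{lem_existence_p}, deferring entirely to \cite[Lemma 14]{paisyl08}, and your min--max/eigenvalue-crossing argument (with $\mu_p(T_\lambda)\le 0$ from the trial space $V^p$, positivity at $\lambda_0$ from Lemma \ref{lem_coercive}, and continuity of each $\mu_k$ in $\mu$) is exactly the content of that reference, extending the $p=1$ case already carried out in Theorem \ref{thm_existence1}. The multiplicity bookkeeping you sketch — that $m$ indices crossing zero at the same $\lambda^*$ force $\dim\ker T_{\lambda^*}\ge m$ — is the right way to close the count.
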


The next result tells us that if the potential is strong enough then the transmission eigenvalues exist.

\begin{lem}

Let $V(x)\ge \delta>0$. If there exists a $p$-dimensional subspace $V^p\subset H^{P_0}_0(\Omega)$ such that 
\[
\delta\ge 2\frac{\|P_0u\|}{\|u\|}
\]
for all $0\ne u\in V^p$, then there exist  $p$ transmission eigenvalues, counting with multiplicity. 

\end{lem}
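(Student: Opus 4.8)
The plan is to reduce this statement to \lemref{lem_existence_p} by verifying that the hypothesis $\delta \ge 2\|P_0 u\|/\|u\|$ on the $p$-dimensional subspace $V^p$ forces $B_\lambda(u,u) \le 0$ for a suitable real $\lambda$ — in fact for $\lambda = 0$. So the whole content is an inequality estimate for the quadratic form $B_0(u,u) = \langle \frac{1}{V} P_0 u, (P_0 + V) u\rangle_{L^2(\Omega)}$.

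The first step is to expand $B_0(u,u)$. Using that $V$ is real and positive, write
\[
B_0(u,u) = \langle \tfrac{1}{V} P_0 u, P_0 u\rangle + \langle \tfrac{1}{V} P_0 u, V u\rangle = \langle \tfrac{1}{V} P_0 u, P_0 u\rangle + \langle P_0 u, u\rangle.
\]
The first term is bounded above by $\|\tfrac{1}{V}\|_{L^\infty}\|P_0 u\|^2 \le \tfrac{1}{\delta}\|P_0 u\|^2$ since $V \ge \delta$ a.e. The second term is bounded in absolute value by $\|P_0 u\|\|u\|$ by Cauchy--Schwarz. Hence
\[
B_0(u,u) \le \tfrac{1}{\delta}\|P_0 u\|^2 + \|P_0 u\|\,\|u\|.
\]
Now I would invoke the hypothesis $\delta \ge 2\|P_0 u\|/\|u\|$, i.e. $\|P_0 u\| \le \tfrac{\delta}{2}\|u\|$, equivalently $\|u\| \ge \tfrac{2}{\delta}\|P_0 u\|$. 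Substituting the bound $\|P_0 u\| \le \tfrac{\delta}{2}\|u\|$ into the first term gives $\tfrac{1}{\delta}\|P_0 u\|^2 \le \tfrac{1}{\delta}\cdot\tfrac{\delta}{2}\|P_0 u\|\,\|u\| = \tfrac{1}{2}\|P_0 u\|\,\|u\|$, so that $B_0(u,u) \le \tfrac{3}{2}\|P_0 u\|\,\|u\|$, which is the wrong sign. This tells me $\lambda = 0$ is not the right choice; instead one should pick $\lambda$ to exploit the full strength of the form $B_\lambda(u,u) = C_0\|P_0\varphi\|^2 - 2C_0\lambda\langle P_0\varphi,\varphi\rangle + \lambda^2 C_0\|\varphi\|^2 - \lambda\|\varphi\|^2 + \langle P_0\varphi,\varphi\rangle$ type expansion from \eqref{eq_est_b}. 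Actually, the cleaner route is to use the form \eqref{eq_T_lambda}: $B_\lambda(u,u) = \|\tfrac{1}{V^{1/2}}(P_0 + \tfrac{V}{2} - \lambda)u\|^2 - \tfrac14\langle V u, u\rangle$. Choosing $\lambda$ so that $P_0 + \tfrac{V}{2} - \lambda$ is as small as possible on $V^p$ — heuristically $\lambda \approx \langle P_0 u\rangle + \tfrac{\delta}{2}$ — one wants $\|\tfrac{1}{V^{1/2}}(P_0 + \tfrac{V}{2} - \lambda)u\|^2 \le \tfrac14\langle V u, u\rangle$. Since $\tfrac14\langle V u, u\rangle \ge \tfrac{\delta}{4}\|u\|^2$, and $\|\tfrac{1}{V^{1/2}}(P_0 + \tfrac{V}{2} - \lambda)u\|^2 \le \tfrac{1}{\delta}\|(P_0 - \lambda + \tfrac{V}{2})u\|^2$, it suffices to find one real $\lambda$ with $\|(P_0 - \lambda)u + \tfrac{V}{2}u\|^2 \le \tfrac{\delta^2}{4}\|u\|^2$ for all $u \in V^p$.

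The main obstacle, and the place where the hypothesis $\delta \ge 2\|P_0 u\|/\|u\|$ is used, is to control $\|(P_0 - \lambda)u + \tfrac{V}{2}u\|$ uniformly over the finite-dimensional space $V^p$. Here I would exploit that $P_0$ acts boundedly from $H^{P_0}_0(\Omega)$ to $L^2(\Omega)$, but more importantly that on the \emph{finite-dimensional} space $V^p$ the functional $u \mapsto \langle P_0 u, u\rangle$ has a spectral structure: if the hypothesis $\|P_0 u\| \le \tfrac{\delta}{2}\|u\|$ holds for all $u \in V^p$, then $\tfrac{V}{2}u - P_0 u$ has $L^2$-norm at most $(\tfrac{\|V\|_{L^\infty}}{2} + \tfrac{\delta}{2})\|u\|$... but this still does not obviously fit under $\tfrac{\delta}{2}\|u\|$ unless we are more careful. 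In fact the correct reading is to take $\lambda = 0$ and re-examine: $B_0(u,u) = \langle\tfrac{1}{V}P_0 u, P_0 u\rangle + \langle P_0 u, u\rangle$, and by Cauchy--Schwarz in the $\tfrac{1}{V}$-weighted inner product one estimates $|\langle P_0 u, u\rangle| = |\langle \tfrac{1}{V}P_0 u, V u\rangle| \le \langle\tfrac{1}{V}P_0 u, P_0 u\rangle^{1/2}\langle\tfrac{1}{V}(Vu),Vu\rangle^{1/2} = \langle\tfrac{1}{V}P_0 u, P_0 u\rangle^{1/2}\langle V u, u\rangle^{1/2}$, giving a bound of the shape $B_0(u,u) \le a + \sqrt{a}\sqrt{b}$ with $a = \langle\tfrac{1}{V}P_0u,P_0u\rangle$, $b = \langle Vu,u\rangle$; this is still nonnegative, so honestly the sign must come from choosing $\lambda$ nonzero, and I would determine $\lambda$ by optimizing the expansion \eqref{eq_est_b} with the substitution of \eqref{eq_horm}-type estimates.

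To make the argument honest I would proceed as follows. Take $\lambda = \delta/2$ (or, more robustly, $\lambda$ equal to the smallest real number making a certain quadratic form nonnegative); then using \eqref{eq_T_lambda},
\[
B_{\delta/2}(u,u) = \bigl\|\tfrac{1}{V^{1/2}}\bigl(P_0 + \tfrac{V}{2} - \tfrac{\delta}{2}\bigr)u\bigr\|^2 - \tfrac14\langle V u, u\rangle.
\]
Since $0 \le V - \delta$ a.e., we have $\tfrac{V}{2} - \tfrac{\delta}{2} \ge 0$, so $\bigl(P_0 + \tfrac{V}{2} - \tfrac{\delta}{2}\bigr)u = P_0 u + (\tfrac{V-\delta}{2})u$; a careful expansion plus the hypothesis $\|P_0 u\| \le \tfrac{\delta}{2}\|u\|$ and $\tfrac{1}{V^{1/2}} \le \delta^{-1/2}$ should be organized so that the positive term is dominated by $\tfrac14\langle Vu,u\rangle \ge \tfrac{\delta}{4}\|u\|^2$. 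I expect the cleanest version replaces $\tfrac{V}{2}$ symmetrically and the hypothesis is exactly the threshold for which the cross term $\langle P_0 u, u\rangle$ is absorbed; concretely, after expansion one needs $\tfrac{1}{\delta}\|P_0 u\|^2 + \langle P_0 u, u\rangle \le 0$ for a shifted $\lambda$, and since $\langle P_0 u, u\rangle \ge -\|P_0 u\|\|u\| \ge -\tfrac{\delta}{2}\|u\|^2$ is not automatically enough, one instead uses that on $V^p$ one may as well replace $P_0$ by $-P_0$ if needed (the hypothesis is on $\|P_0 u\|$, sign-blind), and then it is $\langle -P_0 u, u\rangle$ that kills the quadratic term. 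Thus the clean finish: choose the sign of the shift $\lambda$ according to the sign of $\langle P_0 u, u\rangle$ restricted to $V^p$ (or diagonalize the finite-rank form $u \mapsto \langle P_0 u, u\rangle$ on $V^p$), whence $B_\lambda(u,u) \le \tfrac{1}{\delta}\|P_0 u\|^2 - |\langle P_0 u, u\rangle| \le \tfrac{\delta}{4}\|u\|^2 - 0$... — at this point I would instead fall back on the safe statement: by the hypothesis and \eqref{eq_est_b} with $|\lambda|$ chosen appropriately, $B_\lambda(u,u) \le 0$ on $V^p$, and \lemref{lem_existence_p} delivers the $p$ transmission eigenvalues. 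The genuinely delicate bookkeeping — pinning down the right $\lambda$ and the right way to use $\delta \ge 2\|P_0u\|/\|u\|$ so that the inequality closes with the correct constant — is the main obstacle, and I would handle it by reducing to a $p \times p$ matrix computation on $V^p$ rather than working directly with the differential operator.
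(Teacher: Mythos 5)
Your overall strategy is the right one --- and it is the paper's: exhibit a real $\lambda$ for which $B_\lambda(u,u)\le 0$ on all of $V^p$ and invoke Lemma~\ref{lem_existence_p}. You even name the correct value $\lambda=\delta/2$. But you never close the computation; the proof ends with ``by the hypothesis and \eqref{eq_est_b} with $|\lambda|$ chosen appropriately, $B_\lambda(u,u)\le 0$,'' which is just an assertion of the thing to be proved. The gap is that you keep trying to control the cross term $\langle P_0u,u\rangle$ by Cauchy--Schwarz (which, as you correctly observe, gives the wrong sign), when the whole point of the choice $\lambda=\delta/2$ is that the cross terms \emph{cancel exactly} and no estimate on $\langle P_0u,u\rangle$ is needed. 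Concretely, using $V\ge\delta$ and the realness of $\langle P_0u,u\rangle$,
\begin{align*}
B_\lambda(u,u)&=\langle\tfrac1V(P_0-\lambda)u,(P_0-\lambda)u\rangle+\langle(P_0-\lambda)u,u\rangle\\
&\le\tfrac1\delta\|P_0u\|^2-\tfrac{2\lambda}{\delta}\langle P_0u,u\rangle+\tfrac{\lambda^2}{\delta}\|u\|^2+\langle P_0u,u\rangle-\lambda\|u\|^2,
\end{align*}
and at $\lambda=\delta/2$ the coefficient of $\langle P_0u,u\rangle$ is $-\tfrac{2\lambda}{\delta}+1=0$, leaving
$B_{\delta/2}(u,u)\le\tfrac1\delta\|P_0u\|^2-\tfrac{\delta}{4}\|u\|^2$, which is $\le 0$ precisely when $\|P_0u\|\le\tfrac{\delta}{2}\|u\|$. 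That is the entire content of the lemma.

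Two further remarks. First, your proposed fallback of ``reducing to a $p\times p$ matrix computation on $V^p$'' does not work as stated: $P_0$ does not map $V^p$ into itself, so neither $\|P_0u\|$ nor $\langle P_0u,u\rangle$ is determined by a $p\times p$ matrix, and in any case no finite-dimensional reduction is needed --- the inequality $B_{\delta/2}(u,u)\le 0$ is verified pointwise for each $u\in V^p$. Second, the idea of ``replacing $P_0$ by $-P_0$ according to the sign of $\langle P_0u,u\rangle$'' is a red herring: the sign of $\langle P_0u,u\rangle$ may vary over $V^p$, and the cancellation above makes the question moot.
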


\begin{proof} We have 
\begin{align*}
B_\lambda(u,u)&=\langle \frac{1}{V}(P_0-\lambda)u,(P_0-\lambda)u \rangle + \langle (P_0-\lambda)u,u \rangle\\
&\le \frac{1}{\delta}\|P_0u\|^2-\frac{2\lambda}{\delta}\langle P_0 u,u \rangle+\frac{\lambda^2}{\delta}\|u\|^2+\langle P_0 u,u \rangle-\lambda\|u\|^2. 
\end{align*}
Choosing $ \lambda=\delta/2 $, we get
\[
B_\lambda(u,u)
\le \frac{1}{\delta}\|P_0u\|^2-\frac{\delta}{4}\|u\|^2.
\]
 The claim follows by an application of Lemma \ref{lem_existence_p}. 
\end{proof}

\section{Existence of infinitely many transmission eigenvalues}

\label{sec_exis_inf}

Now assume, as before, that $P_0(D)$ is a partial differential operator of degree $m\ge 2$ with constant real coefficients and $\tilde P_0(\xi)\to\infty$ when $|\xi|\to \infty$ and 
the symbol 
$\pm P_0(\xi)$ is bounded from below on $\R^n$, for one of the choices of the sign.
Let $\varepsilon >0$ and consider an open ball $B_\varepsilon(0)\subset \R^n$ of radius $\varepsilon$ centered at the origin. Consider the following interior transmission problem for the ball $B_\varepsilon(0)$ and an arbitrary constant potential $\delta>0$,
\begin{equation}
\label{eq_ITP_ball}
\begin{aligned}
(P_0-\lambda)v=0 \quad &\text{in} \quad B_\varepsilon(0),\\
(P_0+\delta-\lambda)w=0 \quad &\text{in} \quad B_\varepsilon(0),\\
 v-w \in H^{P_0}_0(B_\varepsilon(0)).
\end{aligned}
\end{equation}
Notice that as the operator $P_0$ has constant coefficients and $\delta$ is a constant, if there exists a transmission eigenvalue $\lambda(\varepsilon)$ for the ball $B_\varepsilon(0)$, then $\lambda(\varepsilon)$ is a transmission eigenvalue for an arbitrary ball in $\R^n$ of radius $\varepsilon$.

Existence of an infinite discrete set of real transmission eigenvalues when $P_0=-\Delta$  in $\R^n$,  $n=2,3$,  was obtained in \cite{CakDroHou}.  The proof relies on an explicit computation in the case of a constant potential when $\Omega$ is a ball, combined with  a specific analytic framework of generalized eigenvalue problems. 

Following the approach of \cite{CakDroHou}, we have the following conditional result on existence of an infinite set of real transmission eigenvalues for general operators.  In the proof we give a direct argument relying upon  Lemma \ref{lem_existence_p}.

 \begin{prop}
\label{prop_infty}
Let  $V\in L^\infty(\R^n)$ be real-valued compactly supported with $\supp(V)=\overline{\Omega}$ and assume that 
\[
V\ge \delta>0\quad \text{a.e. in} \ \overline{\Omega}.
\]
Furthermore, assume that for any $\varepsilon >0$, there exists a real transmission eigenvalue for  \eqref{eq_ITP_ball}. Then the problem \eqref{eq_ITP} has 
 an infinite set of real transmission eigenvalues.  
\end{prop}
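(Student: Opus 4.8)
The plan is to exploit the scale invariance of the constant-coefficient operator $P_0$ together with the monotonicity/comparison philosophy behind Lemma~\ref{lem_existence_p}. First I would fix $\varepsilon>0$ small enough that a translate of the ball $B_\varepsilon(0)$ is contained in $\Omega$; by the translation remark following \eqref{eq_ITP_ball}, the existence hypothesis then gives a real transmission eigenvalue $\lambda(\varepsilon)$ for \emph{any} ball of radius $\varepsilon$ inside $\Omega$. The key point is that since $\operatorname{supp}(V)=\overline{\Omega}$ and $V\ge\delta$ a.e., the transmission problem on a small ball $B\subset\Omega$ with constant potential $\delta$ is, in the variational formulation, a subproblem of the one on $\Omega$: namely $H^{P_0}_0(B)\hookrightarrow H^{P_0}_0(\Omega)$ by extension by zero, and on $H^{P_0}_0(B)$ the form $B_\lambda$ built from $V$ is dominated by the one built from the constant $\delta$ (because $1/V\le 1/\delta$ on $\operatorname{supp}$ of functions in $H^{P_0}_0(B)$, recalling the $B_\lambda$ formula and that the cross term $\langle P_0 u,u\rangle$ and the $\lambda$-terms are common). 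So a test function producing $B_\lambda^{B,\delta}(u,u)\le 0$ also produces $B_\lambda^{\Omega,V}(u,u)\le 0$.

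The second, and main, step is to manufacture \emph{arbitrarily many} linearly independent test functions with this nonpositivity property, in order to invoke Lemma~\ref{lem_existence_p} with $p$ arbitrary. Here I would use scaling: pack $p$ disjoint balls $B_1,\dots,B_p$ of a common small radius $\varepsilon_p$ inside $\Omega$ (possible for any $p$ by shrinking the radius), and in each $B_j$ place a transmission eigenfunction $u_j\in H^{P_0}_0(B_j)$ for the ball problem \eqref{eq_ITP_ball} at its eigenvalue $\lambda(\varepsilon_p)$. Extending by zero, these $u_j$ have disjoint supports, hence are linearly independent and span a $p$-dimensional subspace $V^p\subset H^{P_0}_0(\Omega)$; and because the forms add over disjoint supports, $B_{\lambda(\varepsilon_p)}^{\Omega,V}(u,u)\le\sum_j B_{\lambda(\varepsilon_p)}^{B_j,\delta}(c_ju_j,c_ju_j)$, each summand being $\le 0$ once we check that $u_j$, as a transmission eigenfunction of the ball problem, satisfies $B_{\lambda(\varepsilon_p)}^{B_j,\delta}(u_j,u_j)=0$ (this follows from $T_{\lambda(\varepsilon_p)}u_j=0$ in the sense of Proposition~\ref{thm_equivalence}, so the associated quadratic form vanishes on the eigenfunction). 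Lemma~\ref{lem_existence_p} then yields $p$ transmission eigenvalues for \eqref{eq_ITP}, counted with multiplicity. Letting $p\to\infty$ gives infinitely many.

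One technical wrinkle to address: Lemma~\ref{lem_existence_p} produces eigenvalues in an interval anchored at the coercivity threshold $\lambda_0$, so I must argue that they do not all pile up at finitely many points — i.e. that the count ``$p$ eigenvalues with multiplicity'' for every $p$ genuinely forces an infinite set. Since $T_\lambda$ is self-adjoint with compact resolvent for real $\lambda$ (Lemma~\ref{lem_T(tau)_self-adjoint} and the compact-resolvent remark), each transmission eigenvalue has finite multiplicity; hence having $\ge p$ of them (with multiplicity) for all $p$ means the \emph{set} is infinite. I would spell this out using the continuity of the eigenvalue branches of $T_\lambda$ in $\lambda$ (already invoked in the proof of Theorem~\ref{thm_existence1}): as we increase $p$, the subspace $V^p$ and the relevant $\lambda(\varepsilon_p)$ change, and I must make sure the newly produced eigenvalues are not merely repetitions of old ones — this is guaranteed by the multiplicity count, since $T_{\lambda^*}$ can have zero as an eigenvalue of multiplicity at most its finite geometric multiplicity.

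**Expected main obstacle.** The delicate point is the bookkeeping in the last paragraph: translating ``$p$ test functions with nonpositive form, for every $p$'' into ``infinitely many distinct transmission eigenvalues,'' i.e.\ ruling out accumulation of the whole family at a finite set while only controlling multiplicities locally. The disjoint-support packing argument for step two is conceptually clean but requires that shrinking the radius is harmless — which is exactly where the constant-coefficient/constant-potential scale invariance of \eqref{eq_ITP_ball} is essential and must be used carefully, since $V$ itself is not constant on $\Omega$ and the comparison $1/V\le 1/\delta$ is what bridges the gap.
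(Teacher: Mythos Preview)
Your proposal is correct and follows essentially the same approach as the paper: pack $p$ disjoint balls in $\Omega$, transplant the ball transmission eigenfunctions at the common eigenvalue $\lambda(\varepsilon_p)$, use the decomposition $B_\lambda(u,u)=\langle V^{-1}(P_0-\lambda)u,(P_0-\lambda)u\rangle+\langle(P_0-\lambda)u,u\rangle$ together with $1/V\le 1/\delta$ to compare with the constant-potential form, and then invoke Lemma~\ref{lem_existence_p}. Your discussion of the finite-multiplicity bookkeeping is in fact more careful than the paper's (which simply writes ``As $p$ is arbitrary, the result follows''); note, however, that what you call ``scale invariance'' is not actually used---shrinking the radius is harmless precisely because the hypothesis already assumes a transmission eigenvalue for \emph{every} $\varepsilon>0$, and only translation invariance is needed to place the balls.
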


\begin{proof}

For every $p\in \N$, there exists $\varepsilon >0$ small enough such that $\Omega$ contains $p$ disjoint balls $B^1_\varepsilon,\dots,B^p_\varepsilon$ of radius $\varepsilon$, i.e. $\overline{B^i_\varepsilon}\subset \Omega$, $i=1,\dots,p$, and $\overline{B^i_\varepsilon}\cap \overline{B^j_\varepsilon}=\emptyset$ for $i\ne j$.
Let $\lambda(\varepsilon)\in \R$ be a transmission eigenvalue for \eqref{eq_ITP_ball}. Then it  is a transmission eigenvalue for each of the balls $B^i_\varepsilon$ with potential $\delta$. Thus, Proposition \ref{thm_equivalence} implies that there are $0\ne u^i(\varepsilon)\in H^{P_0}_0(B^i_\varepsilon)$ such that
\[
(P_0+\delta-\lambda(\epsilon))\frac{1}{\delta}(P_0-\lambda)u^i(\varepsilon)=0 \quad \text{in}\quad  \mathcal{D}'(B^i_{\varepsilon}), \quad i=1,\dots,p.
\]
The extension by zero $\tilde u^i$ of $u^i(\varepsilon)$ to the whole of  $\Omega$ is in $H^{P_0}_0(\Omega)$. Moreover, the functions $\tilde u^1,\dots, \tilde u^p$ 
form an  orthogonal system in $H^{P_0}_0(\Omega)$, since they have disjoint supports. This implies that
\[
\langle \frac{1}{\delta}(P_0-\lambda(\varepsilon))\tilde u^i,(P_0+\delta-\lambda(\varepsilon))\tilde u^j \rangle_{L^2(\Omega)}=0, \quad i,j=1,\dots,p.
\]
Set $V^p=\textrm{span}\{\tilde u^1,\dots,\tilde u^p\}$.  Let $u\in V^p$. Then
\begin{align*}
B_{\lambda(\varepsilon)}(u,u)&=\langle\frac{1}{V}(P_0-\lambda(\varepsilon))u,(P_0+V-\lambda(\varepsilon))u\rangle_{L^2(\Omega)}\\
&=\langle\frac{1}{V}(P_0-\lambda(\varepsilon))u,(P_0-\lambda(\varepsilon))u\rangle_{L^2(\Omega)}+\langle\frac{1}{V}(P_0-\lambda(\varepsilon))u,V u\rangle_{L^2(\Omega)}\\
&\le\langle\frac{1}{\delta}(P_0-\lambda(\varepsilon))u,(P_0-\lambda(\varepsilon))u\rangle_{L^2(\Omega)}+\langle(P_0-\lambda(\varepsilon))u,u\rangle_{L^2(\Omega)}\\
&=\langle\frac{1}{\delta}(P_0-\lambda(\varepsilon))u,(P_0+\delta-\lambda(\varepsilon))u\rangle_{L^2(\Omega)}=0.
\end{align*}
Hence, it follows from Lemma \ref{lem_existence_p} that problem \eqref{eq_ITP}  has $p$ transmission eigenvalues, counting with multiplicity. 
As $p$ is arbitrary, the result follows.

\end{proof}

\section{The generalized acoustic problem}
Let $V\in L^\infty(\R^n)$ be  compactly supported in $\R^n$ with  $\supp(V)=\overline{\Omega}$, where $\Omega\subset \R^n$ is a bounded domain, and  $P_0(D)$ be a partial differential operator  of degree $m\ge 2$ with constant real coefficients. 

As in physics, while the problem 
\[
(-\Delta-\lambda +V)u=0,
\]
models the time-independent Schr\"odinger equation, the equation
\[
(-\Delta-\lambda(1+V))u=0
\]
describes acoustic wave propagation with refractive index $1+V$. In this section, we study the interior transmission problem for the latter, where $-\Delta$
is replaced by a general $P_0$. This problem has the following form, 
\begin{equation}
\label{eq_ITP_Helm}
\begin{aligned}
(P_0-\lambda)v=0 \quad &\text{in} \quad \Omega,\\
(P_0-\lambda(1+ V))w=0 \quad &\text{in} \quad \Omega,\\
 v-w \in H^{P_0}_0(\Omega).
\end{aligned}
\end{equation}

We say that $\lambda\in \C$ is a transmission eigenvalue if the problem \eqref{eq_ITP_Helm} has non-trivial solutions $0\ne v\in L^2_{\textrm{loc}}(\Omega)$ and  $0\ne w\in L^2_{\textrm{loc}}(\Omega)$.  It suffices to require that $v\ne 0$. 
Notice that $\lambda=0$ is always a transmission eigenvalue for \eqref{eq_ITP_Helm}. 

We shall assume that $V$ is real-valued and such that  $V\ge\delta>0$ a.e. in $\overline{\Omega}$.

As in Proposition \ref{thm_equivalence}, one can show that  $0\ne \lambda\in \C$ is a transmission eigenvalue if and only if there exists $0\ne u\in H^{P_0}_0(\Omega)$ satisfying
\[
T_\lambda u=(P_0-\lambda(1+V))\frac{1}{V}(P_0-\lambda)u=0\quad \text{in}\quad \mathcal{D}'(\Omega).
\]

Define a sesquilinear form
\[
B_\lambda(\varphi,\psi)=\langle T_\lambda \varphi,\psi \rangle_{L^2(\Omega)}=
\langle \frac{1}{V}(P_0-\lambda)\varphi,(P_0-\lambda(1+V))\psi\rangle_{L^2(\Omega)}, \quad \varphi,\psi\in C^\infty_0(\Omega),
\] 
which extends uniquely to a continuous sesquilinear for on $H^{P_0}_0(\Omega)\times H^{P_0}_0(\Omega)$. 
Then $B_0$ is coercive in the sense that there exists $C_0>0$ such that
\[
B_0(\varphi,\varphi)\ge C_0\|\varphi\|^2_{P_0}, \quad \varphi\in H^{P_0}_0(\Omega). 
\]
Arguing as in Theorem \ref{thm_discret_1}, we get the following result.

\begin{thm} 

\label{thm_disc_Helm}
Assume that $\tilde P_0(\xi)\to\infty$ when $|\xi|\to \infty$. Then the set of transmission eigenvalues is discrete. 
\end{thm}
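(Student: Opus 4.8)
The plan is to run the same analytic Fredholm scheme used in the proof of Theorem~\ref{thm_discret_1}, but now anchored at $\lambda = 0$ rather than at a large $|\lambda_0|$. First I would observe that, exactly as in Proposition~\ref{thm_equivalence}, a nonzero $\lambda$ is a transmission eigenvalue for \eqref{eq_ITP_Helm} if and only if there is $0 \ne u \in H^{P_0}_0(\Omega)$ with $T_\lambda u = 0$, where $T_\lambda = (P_0 - \lambda(1+V))\tfrac{1}{V}(P_0 - \lambda)$; so it suffices to show that the holomorphic operator family
\[
T_\lambda : \C \to \mathcal{L}\bigl(H^{P_0}_0(\Omega), H^{-P_0}(\Omega)\bigr)
\]
has discrete singular set. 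The crucial simplification, compared with the Schr\"odinger case, is that at $\lambda = 0$ we have $T_0 = P_0 \tfrac{1}{V} P_0$, and the associated form $B_0(\varphi,\varphi) = \langle \tfrac{1}{V}P_0\varphi, P_0\varphi\rangle_{L^2(\Omega)} \ge C_0 \|P_0\varphi\|^2 = C_0\|\varphi\|_{P_0}^2$ is coercive on $H^{P_0}_0(\Omega)$ with $C_0 = \inf_\Omega 1/V > 0$ — this is the coercivity statement already recorded in the excerpt, and it is genuinely elementary here, requiring no lower bound on $\pm P_0(\xi)$ and no largeness of $|\lambda|$. By Lax--Milgram, $T_0 : H^{P_0}_0(\Omega) \to H^{-P_0}(\Omega)$ is therefore invertible.

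Next I would check that $T_\lambda - T_0$ is compact for every $\lambda$. Expanding,
\[
T_\lambda - T_0 = -\lambda P_0 \tfrac{1}{V} - \lambda(1+V)\tfrac{1}{V}P_0 + \lambda^2 \tfrac{1+V}{V},
\]
which, term by term, has exactly the structure of the perturbation \eqref{eq_perturbation}: each summand factors through the inclusion $H^{P_0}_0(\Omega)\hookrightarrow L^2(\Omega)$, composed with bounded multiplication operators on $L^2(\Omega)$ and with either $P_0 : H^{P_0}_0(\Omega)\to L^2(\Omega)$ or the bounded inclusion $L^2(\Omega)\hookrightarrow H^{-P_0}(\Omega)$. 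Since the hypothesis $\tilde P_0(\xi)\to\infty$ as $|\xi|\to\infty$ makes the embedding $H^{P_0}_0(\Omega)\hookrightarrow L^2(\Omega)$ compact by Lemma~\ref{prop_comp}, each term is compact, hence so is $T_\lambda - T_0$. (Note that $1/V$, $1+V$, and $(1+V)/V$ are all bounded on $\Omega$ because $V \in L^\infty$ and $V \ge \delta > 0$ a.e.)

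Consequently $T_\lambda$ is a holomorphic family of Fredholm operators of index zero, invertible at $\lambda = 0$, and the analytic Fredholm theorem yields that $T_\lambda$ is invertible outside a discrete subset of $\C$. Excluding $\lambda = 0$ (which is always a transmission eigenvalue, as noted, and which is isolated in this discrete set), the set of transmission eigenvalues of \eqref{eq_ITP_Helm} is discrete. I do not anticipate a serious obstacle: the one point that deserves a sentence of care is the passage from the form $B_\lambda$ to the operator $T_\lambda \in \mathcal{L}(H^{P_0}_0(\Omega), H^{-P_0}(\Omega))$ and the verification that the algebraic identity $T_\lambda u = 0 \Leftrightarrow \lambda$ is a transmission eigenvalue survives in the $\mathcal{D}'(\Omega)$ sense, but this is identical to the argument already given for Proposition~\ref{thm_equivalence} with $P_0 + V - \lambda$ replaced throughout by $P_0 - \lambda(1+V)$, so it can be dispatched by reference. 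The only structural difference from Theorem~\ref{thm_discret_1} worth flagging is that here we need no sign condition on $P_0(\xi)$ at all, precisely because coercivity is obtained at $\lambda = 0$.
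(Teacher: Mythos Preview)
Your proposal is correct and follows essentially the same route as the paper: the paper simply remarks that $B_0$ is coercive and then says ``Arguing as in Theorem~\ref{thm_discret_1}'', which is precisely your anchor-at-$\lambda=0$ plus compact-perturbation plus analytic-Fredholm argument. Your explicit expansion of $T_\lambda - T_0$ and the observation that no sign condition on $\pm P_0(\xi)$ is needed (because coercivity comes for free at $\lambda=0$) match the paper's intent exactly.
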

Remark that the multiplicity of each transmission eigenvalue except $\lambda=0$ is finite.

Notice that the assumption in Theorem \ref{thm_discret_1} that the symbol $\pmÊP_0(\xi)$ is bounded from bellow on $\R^n$, for one of the choices of the sign, is not needed in the Helmholtz case. In particular, the results of this section are applicable when $P_0=D_{x_1}^2  + \Delta _{x'}$ is the wave operator on $\R^n$. 

An inspection of the arguments of Sections \ref{sec_exis} -- \ref{sec_exis_inf} shows that the results there remain valid in the Helmholtz case. In particular,
Theorem \ref{thm_existence1} and Lemma \ref{lem_existence_p} hold true in the Helmholtz case as they stand. 
To formulate an analog of Proposition \ref{prop_infty},  consider the following interior transmission problem for an open ball $B_\varepsilon(0)\subset \R^n$ of radius $\varepsilon>0$ centered at the origin, and an arbitrary constant potential $\delta>0$,
\begin{equation}
\label{eq_ITP_Helm_ball}
\begin{aligned}
(P_0-\lambda)v=0 \quad &\text{in} \quad B_\varepsilon(0),\\
(P_0-\lambda(1+\delta))w=0 \quad &\text{in} \quad B_\varepsilon(0),\\
 v-w \in H^{P_0}_0(B_\varepsilon(0)).
\end{aligned}
\end{equation}

\begin{prop}
\label{prop_helm_inf_n}
 Assume that for any $\varepsilon >0$ and  $\delta>0 $, there exists a non-zero real transmission eigenvalue for  \eqref{eq_ITP_Helm_ball}. Then there exists an infinite set of real transmission eigenvalues for \eqref{eq_ITP_Helm}. 
\end{prop}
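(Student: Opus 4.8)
The plan is to run the proof of Proposition~\ref{prop_infty} essentially verbatim, now for the Helmholtz form
\[
B_\lambda(u,u)=\Big\langle\tfrac1V(P_0-\lambda)u,(P_0-\lambda(1+V))u\Big\rangle_{L^2(\Omega)},
\]
the single genuinely new point being that one must make sure the transmission eigenvalues produced are different from the eigenvalue $\lambda=0$, which is always present for~\eqref{eq_ITP_Helm}; this is exactly why the hypothesis requires the ball eigenvalue to be non-zero. Throughout, $\tilde P_0(\xi)\to\infty$ as $|\xi|\to\infty$ is assumed, so that $T_\lambda$ has compact resolvent (Lemma~\ref{prop_comp}) and the set of transmission eigenvalues is discrete (Theorem~\ref{thm_disc_Helm}).

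Fix $p\in\N$. As in Proposition~\ref{prop_infty} I would choose $\varepsilon>0$ so small that $\Omega$ contains $p$ pairwise disjoint balls $B^1_\varepsilon,\dots,B^p_\varepsilon$ of radius $\varepsilon$ with $\overline{B^i_\varepsilon}\cap\overline{B^j_\varepsilon}=\emptyset$ for $i\ne j$, and then apply the hypothesis with this $\varepsilon$ and with $\delta$ the constant from $V\ge\delta>0$ to get a non-zero real transmission eigenvalue $\lambda(\varepsilon)$ of~\eqref{eq_ITP_Helm_ball}. By the Helmholtz analogue of Proposition~\ref{thm_equivalence} (here is where $\lambda(\varepsilon)\ne0$ is used) there are $0\ne u^i(\varepsilon)\in H^{P_0}_0(B^i_\varepsilon)$ with $(P_0-\lambda(\varepsilon)(1+\delta))\frac1\delta(P_0-\lambda(\varepsilon))u^i(\varepsilon)=0$ in $\mathcal D'(B^i_\varepsilon)$. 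The zero extensions $\tilde u^i$ lie in $H^{P_0}_0(\Omega)$ and are supported in $\overline{B^i_\varepsilon}$; since $P_0$ is a local operator, $(P_0-\lambda(\varepsilon))\tilde u^i$ and $(P_0-\lambda(\varepsilon)(1+\delta))\tilde u^i$ are supported in $\overline{B^i_\varepsilon}$ as well, so $\tilde u^1,\dots,\tilde u^p$ have pairwise disjoint supports. I would then set $V^p=\textrm{span}\{\tilde u^1,\dots,\tilde u^p\}$.

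For $u\in V^p$ the next step is to write
\[
B_{\lambda(\varepsilon)}(u,u)=\int_\Omega\frac1V\,|(P_0-\lambda(\varepsilon))u|^2\,dx-\lambda(\varepsilon)\,\langle(P_0-\lambda(\varepsilon))u,u\rangle_{L^2(\Omega)},
\]
observing that the second term carries no $V$; since $1/V\le1/\delta$ a.e.\ on $\Omega$ the first term is bounded by $\frac1\delta\int_\Omega|(P_0-\lambda(\varepsilon))u|^2\,dx$, whence
\[
B_{\lambda(\varepsilon)}(u,u)\le\Big\langle\tfrac1\delta(P_0-\lambda(\varepsilon))u,(P_0-\lambda(\varepsilon)(1+\delta))u\Big\rangle_{L^2(\Omega)},
\]
which is precisely the sesquilinear form of the interior transmission problem on $\Omega$ with constant potential $\delta$, evaluated at $u$. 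Expanding $u=\sum_i c_i\tilde u^i$, the off-diagonal terms of that form vanish because the $\tilde u^i$ have disjoint supports, and the diagonal terms vanish because each $\tilde u^i$ solves the ball equation; hence $B_{\lambda(\varepsilon)}(u,u)\le0$ on the $p$-dimensional space $V^p$. Since $B_0$ is coercive on $H^{P_0}_0(\Omega)$, Lemma~\ref{lem_existence_p} applies in the Helmholtz case with $\lambda_0=0$ and yields $p$ transmission eigenvalues of~\eqref{eq_ITP_Helm}, counted with multiplicity; as in Theorem~\ref{thm_existence1} they all lie in the interval between $0$ and $\lambda(\varepsilon)$ and none of them equals $0$, because $T_0$ is positive definite.

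Finally, each transmission eigenvalue other than $0$ has finite multiplicity, so, $p\in\N$ being arbitrary,~\eqref{eq_ITP_Helm} must have infinitely many distinct non-zero real transmission eigenvalues; by Theorem~\ref{thm_disc_Helm} this set is moreover discrete. I expect the step requiring the most care --- though it is bookkeeping rather than a genuine obstacle --- to be exactly this separation of the always-present eigenvalue $\lambda=0$ from the count, together with the verification that the monotone comparison with the ball form still goes through. The latter causes no trouble because the only $V$-dependent part of $B_\lambda(u,u)$ is the single term $\int_\Omega\frac1V|(P_0-\lambda)u|^2\,dx$, on which the pointwise bound $1/V\le1/\delta$ acts exactly as in the Schr\"odinger case.
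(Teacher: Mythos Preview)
Your proof is correct and follows exactly the approach the paper intends: the paper does not give an independent argument but simply remarks that the proof of Proposition~\ref{prop_infty} carries over verbatim to the Helmholtz form, which is precisely what you have verified in detail. Your additional care in separating the always-present eigenvalue $\lambda=0$ from the count (using the coercivity of $B_0$, so that $\lambda_0=0$ in Theorem~\ref{thm_existence1} forces the produced eigenvalues into $(0,\lambda(\varepsilon)]$ or $[\lambda(\varepsilon),0)$) is an appropriate refinement that the paper leaves implicit.
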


\section{Examples}

\subsection{The biharmonic operator}

As an example of a higher order operator to which our conditional existence results apply, we shall consider the biharmonic operator in $\R^3$, which arises, e.g., in the study of thin elastic plates.

\begin{prop}
Let $P_0=\Delta^2$ on $\R^3$. Then the problem \eqref{eq_ITP_Helm} has 
 an infinite set of real transmission eigenvalues.  

\end{prop}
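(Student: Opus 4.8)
The plan is to apply Proposition~\ref{prop_helm_inf_n}: it suffices to show that for every $\varepsilon>0$ and every constant $\delta>0$ the Helmholtz-type interior transmission problem \eqref{eq_ITP_Helm_ball} for $P_0=\Delta^2$ on the ball $B_\varepsilon(0)\subset\R^3$ has a non-zero real transmission eigenvalue. First I would reduce to a spherically symmetric computation: looking for radial transmission eigenfunctions $v,w$, one can write $v$ as a combination of the solutions of $(\Delta^2-\lambda)v=0$, i.e. $(\Delta-\sqrt\lambda)(\Delta+\sqrt\lambda)v=0$, and similarly $w$ from $(\Delta^2-\lambda(1+\delta))w=0$. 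In the radial case these reduce to modified/ordinary spherical Bessel functions: writing $k=\lambda^{1/4}$ and $k_\delta=(\lambda(1+\delta))^{1/4}$, a radial solution of $(\Delta^2-\lambda)v=0$ on the ball that is regular at the origin is $v(r)=\alpha\,j_0(kr)+\beta\,i_0(kr)$ where $j_0(t)=\sin t/t$, $i_0(t)=\sinh t/t$, and similarly $w(r)=\gamma\,j_0(k_\delta r)+\eta\,i_0(k_\delta r)$.

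The condition $v-w\in H^{P_0}_0(B_\varepsilon(0))$ becomes, for $P_0=\Delta^2$ (a fourth order operator), the four matching conditions at $r=\varepsilon$ that $v-w$ together with its first three radial derivatives (equivalently $v-w$, $\partial_r(v-w)$, $\Delta(v-w)$, $\partial_r\Delta(v-w)$) vanish. This gives a $4\times4$ homogeneous linear system in the unknowns $(\alpha,\beta,\gamma,\eta)$, and $\lambda$ is a transmission eigenvalue precisely when the determinant $d(\lambda)$ of this system vanishes. So the next step is to write $d(\lambda)$ explicitly and analyze its zeros on $(0,\infty)$. Following \cite{CakDroHou}, I would study the behaviour of $d(\lambda)$ as $\lambda\to 0^+$ and its sign changes as $\lambda$ grows: since $\lambda=0$ is a trivial transmission eigenvalue one expands near $0$, and one shows $d$ changes sign (for instance using the analytic/entire dependence of $d$ on $k$ and the asymptotics of $j_0,i_0$ for large argument, where the $i_0$ terms dominate exponentially) so that an intermediate value argument produces a genuine positive real root. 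Equivalently, one can invoke Theorem~\ref{thm_existence1} in the Helmholtz case: exhibit a radial $u\in H^{\Delta^2}_0(B_\varepsilon(0))$ and some $\lambda>0$ with $B_\lambda(u,u)\le0$; a natural candidate is the radial eigenfunction built from $j_0,i_0$ scaled so that $u$ and its normal derivative vanish on $\partial B_\varepsilon$.

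The main obstacle I anticipate is the determinant computation and the sign analysis: unlike the second order Helmholtz case treated in \cite{CakDroHou}, the biharmonic case forces a $4\times4$ Wronskian-type determinant involving the four functions $j_0(k\cdot),i_0(k\cdot),j_0(k_\delta\cdot),i_0(k_\delta\cdot)$ and their first three derivatives, and one must extract a clean enough expression (or at least controllable asymptotics) to guarantee a sign change. I would handle this by exploiting that $j_0$ and $i_0$ are elementary (so all derivatives are explicit trigonometric/hyperbolic expressions), factoring out the exponential growth of the $i_0$ terms, and showing that the leading term of $d(\lambda)$ as $\lambda\to\infty$ has a definite sign opposite to its behaviour near $\lambda=0$. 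Once a sign change is established on $(0,\infty)$, continuity of $d$ yields a non-zero real transmission eigenvalue for the ball with constant potential $\delta$, and since $\varepsilon,\delta>0$ were arbitrary, Proposition~\ref{prop_helm_inf_n} gives an infinite set of real transmission eigenvalues for \eqref{eq_ITP_Helm}, completing the proof.
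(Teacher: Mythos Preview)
Your setup matches the paper's exactly: the reduction via Proposition~\ref{prop_helm_inf_n}, the radial ansatz built from $j_0(kr)$ and $i_0(kr)=j_0(ikr)$ (and the corresponding pair with $k_\delta$), and the four matching conditions at $r=\varepsilon$ leading to a $4\times4$ determinant $d(\lambda)$. Where your plan diverges is in the expected asymptotic behaviour of $d(\lambda)$ as $\lambda\to\infty$. You anticipate that after factoring out the exponential growth the leading term will have a \emph{definite sign}, to be played off against the sign near $\lambda=0$ via the intermediate value theorem. In fact the paper's computation shows this is not the case: after extracting the exponential factor $e^{(k+k_\delta)\varepsilon}$ and polynomial prefactors (all of fixed sign), the leading contribution is
\[
4\sin\bigl((\rho-1)k\varepsilon\bigr)+\mathcal{O}(\delta)+\mathcal{O}\!\left(\frac{1}{\delta k}\right),\qquad \rho=(1+\delta)^{1/4},
\]
which oscillates in $k$ rather than stabilising to one sign. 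So an argument comparing signs at $0$ and at $\infty$ does not go through as you describe.

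The paper instead exploits the oscillation itself: since $k\mapsto\sin((\rho-1)k\varepsilon)$ is periodic and takes both values $\pm1$, for $k$ large the error terms cannot prevent sign changes, and hence $d$ has real zeros. Two points you would have to add to make this work: first, one may assume without loss of generality that $\delta$ is small (if the ball problem with a smaller constant $\delta'\le\delta$ has a transmission eigenvalue, the comparison argument in Proposition~\ref{prop_infty}/\ref{prop_helm_inf_n} still applies since $V\ge\delta\ge\delta'$), which is needed to make the $\mathcal{O}(\delta)$ error harmless; second, the $\mathcal{O}(1/(\delta k))$ term is controlled by taking $k$ large. Your alternative route via Theorem~\ref{thm_existence1} is in principle available, but would require producing a test function with $B_\lambda(u,u)\le0$, which in practice amounts to the same determinant analysis.
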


 \begin{proof}

When proving this result,  we shall apply  Proposition \ref{prop_helm_inf_n}.  It therefore suffices to prove that the problem  \eqref{eq_ITP_Helm_ball} has a non-trivial solution for arbitrary $\varepsilon>0$ and $\delta>0$.  Without loss of generality, we may assume that $\delta$ is sufficiently small.  Let
$k=\lambda^{1/4}$, $k_\delta=(\lambda(1+\delta))^{1/4}$, and
\begin{equation}
\label{eq_rho}
\rho=\frac{k_\delta}{k}=(1+\delta)^{1/4}=1+\frac{\delta}{4}+\mathcal{O}(\delta^2), \quad 0<\delta\ll 1.
\end{equation}
Then writing
\[
\Delta^2 - k^4 = (\Delta - k^2) (\Delta + k^2) =  (\Delta + k^2) (\Delta - k^2),
\]
and for \(\Delta^2 - k_\delta^4\) similarly,  one sees that a reasonable ansatz for \(v\) and \(w\) in \eqref{eq_ITP_Helm_ball} is 
\[
v(x) = a_1j_0 ( kr) + a_2j_0 (ikr), \ w(x) = a_3 j_0 ( k_\delta r) + a_4j_0 (ik_\delta r),
\]
where \(j_0\) is the spherical Bessel function of order zero, and \(a_j\) are constants.  Our boundary conditions now take form 
\begin{equation}\label{biharmoninen_vakioreuna}
\left(\frac{d}{dr}\right) ^l (a_1j_0 ( kr) + a_2j_0 (ikr) - a_3 j_0 ( k_\delta r) - a_4j_0 (ik_\delta r)) |_{r=\varepsilon}= 0, \quad l=0,\,1,\, 2,\, 3.
\end{equation}
Using the known asymptotics for \(j_0 ( k\varepsilon)\) and  \(j_0 ( ik\varepsilon)\) and their derivatives, as $k\to\infty$, one sees that the deteminant of the linear system \eqref{biharmoninen_vakioreuna} is 
\begin{align*}
d=&\begin{vmatrix}
\frac{\sin (k\varepsilon)}{k\varepsilon} & \frac{e^{k\varepsilon}}{2k\varepsilon}+ \mathcal{O}(e^{-k\varepsilon}) & -\frac{\sin (k_\delta\varepsilon)}{k_\delta\varepsilon}
& -\frac{e^{k_{\delta}\varepsilon}}{2k_\delta\varepsilon}+  \mathcal{O}(e^{-k_\delta\varepsilon}) \\
\frac{\cos( k\varepsilon)}{\varepsilon} +\mathcal{O}(\frac{1}{k}) & \frac{e^{k\varepsilon}}{2\varepsilon}+  \mathcal{O}(\frac{e^{k\varepsilon}}{k})
& -\frac{\cos( k_\delta\varepsilon)}{\varepsilon} +\mathcal{O}(\frac{1}{k})
& -\frac{e^{k_{\delta}\varepsilon}}{2\varepsilon}+  \mathcal{O}(\frac{e^{k_\delta\varepsilon}}{k_\delta}) \\
-\frac{k\sin (k\varepsilon)}{\varepsilon} + \mathcal{O}(1)  & \frac{ke^{k\varepsilon}}{2\varepsilon} +  \mathcal{O}(e^{k\varepsilon}) & \frac{k_\delta\sin (k_\delta\varepsilon)}{\varepsilon} +\mathcal{O}(1)
& -\frac{k_\delta e^{k_{\delta}\varepsilon}}{2\varepsilon}+  \mathcal{O}(e^{k_\delta\varepsilon})\\
-\frac{k^2\cos( k\varepsilon)}{\varepsilon} + 
\mathcal{O}(k)
 & \frac{k^2e^{k\varepsilon}}{2\varepsilon} + \mathcal{O}(ke^{k\varepsilon})
& \frac{k_\delta^2\cos( k_\delta\varepsilon)}{\varepsilon} +\mathcal{O}(k)
& 
 -\frac{k_\delta^2 e^{k_{\delta}\varepsilon}}{2\varepsilon} + \mathcal{O}(k_\delta e^{k_\delta\varepsilon})
\end{vmatrix}\\
=
&\frac{1}{4\varepsilon^4}e^{(k+k_\delta)\varepsilon}k^2\frac{\rho-1}{\rho} \left(d_1+\mathcal{O}(\frac{1}{\delta k})\right),
\end{align*}
where $d_1$ is given by
\small
\[
\begin{vmatrix} (\rho-1)\sin(k\varepsilon) -(1+\rho)\cos(k\varepsilon) & (\rho-1)\sin(k_\delta \varepsilon) +(\rho+1)\cos (k_\delta \varepsilon)\\
-(\rho^2+\rho+2)\cos (k\varepsilon) +(\rho^2+\rho)\sin(k\varepsilon) & (2\rho^2+\rho+1)\cos(k_\delta\varepsilon) -(\rho+1)\sin(k_\delta\varepsilon)
\end{vmatrix}
\]
\normalsize
Using \eqref{eq_rho}, we get
\[
d_1=4\sin((\rho-1)k\varepsilon)+\mathcal{O}(\delta). 
\]
Hence, to show that the linear system \eqref{biharmoninen_vakioreuna} has a non-trivial solution, it suffices to check that  the function 
\[
\sin((\rho-1)k\varepsilon)+\mathcal{O}(\delta)+ \mathcal{O}(\frac{1}{\delta k})
\]
has real zeros, for $k$ large enough. The latter is clear, however, from the periodicity of the function 
\[
k\mapsto \sin((\rho-1)k\varepsilon). 
\]
We may also notice that the minimal period of this function is 
\[
\frac{8\pi}{\delta \varepsilon}+\mathcal{O}(\frac{1}{\varepsilon}).
\] 
This completes the proof.

\end{proof}

\subsection{The Dirac system}
Our approach generalizes also to many systems. We demonstrate this here by carrying out the analysis in the case of the Dirac system.   

The free Dirac operator in $\R^3$ is given by the $4\times 4$ matrix
\[
\mathcal{L}_0(D)=\begin{pmatrix} 0 & \sigma\cdot D\\
\sigma\cdot D & 0
\end{pmatrix},
\]
where $D=-i\nabla$ and $\sigma=(\sigma_1,\sigma_2,\sigma_3)$ is a vector of Pauli matrices with 
\[
\sigma_1=\begin{pmatrix}
0& 1\\
1& 0
\end{pmatrix},\quad
\sigma_2=\begin{pmatrix}
0& -i\\
i& 0
\end{pmatrix},\quad
\sigma_3=\begin{pmatrix}
1& 0\\
0& -1
\end{pmatrix}.
\]
The most important property of the Dirac operator is the following one,
\[
\mathcal{L}_0(D)^2=-\Delta I_4,
\]
where $I_4$ is the $4\times 4$ identity matrix.

Let $\Omega\subset \R^3$ be a bounded domain in $\R^3$ with a connected $C^\infty$-smooth boundary. 
It is known \cite{NakTsu00} that when equipped with the domain 
\[
\mathcal{D}(\mathcal{L}_0)=\{\begin{pmatrix} u_+\\
u_-
\end{pmatrix}\in L^2(\Omega)^2\times L^2(\Omega)^2: \ u_+\in H_0^1(\Omega)^2, \sigma\cdot Du_-\in L^2(\Omega)^2
\},
\]
the Dirac operator $\mathcal{L}_0$ is self-adjoint on $L^2(\Omega)^4$.

Notice that 
\[
H_0^1(\Omega)^2\times H^1(\Omega)^2\subset \mathcal{D}(\mathcal{L}_0).
\]
However, in general,  $\mathcal{D}(\mathcal{L}_0)$ is strictly larger than the Sobolev space $H_0^1(\Omega)^2\times H^1(\Omega)^2$, see \cite{Sch95} for the discussion and a precise example.

Let $V(x)$ be an Hermitian $4\times4$-matrix-valued function whose entries belong to $L^\infty(\R^3)$. 
An application of the Kato-Rellich theorem shows that the operator $\mathcal{L}_0(D)+V$ is self-adjoint on $\mathcal{D}(\mathcal{L}_0)$.   
Assume that $0$ is not in the spectrum of 
$\mathcal{L}_0(D)+V$. 
Then
it was shown in \cite{NakTsu00, salotzou} that for any $f\in H^{1/2}(\p \Omega)^2$,
 the boundary value problem
\begin{equation}
\label{eq_bvp_D}
\begin{aligned}
&(\mathcal{L}_0(D)+V)u=0,\quad \text{in}\quad \Omega,\\
&u_+=f,\quad \text{on}\quad \p \Omega,
\end{aligned}
\end{equation}
has
a unique solution $u\in H^1(\Omega)^4$. 
The set of the Cauchy data for \eqref{eq_bvp_D} is given by
\[
\{(u_+|_{\p \Omega},u_-|_{\p \Omega}):u\in  H^1(\Omega)^4 \ \text{is a solution of  } (\mathcal{L}_0(D)+V)u=0 \text{ in }\Omega\}.
\]

Assume now  that $\supp(V)=\overline{\Omega}$. 
The interior transmission problem for the Dirac operator is the following boundary value problem,
\begin{equation}
\label{eq_ITP_Helm_D}
\begin{aligned}
(\mathcal{L}_0(D)-\lambda I_4)v=0 \quad \text{in}\quad \Omega,\\
(\mathcal{L}_0(D)-\lambda(I_4+V))w=0 \quad \text{in}\quad \Omega, \\
v-w\in H^1_0(\Omega)^4.
\end{aligned}
\end{equation}
We say that $\lambda\in \C$ is a transmission eigenvalue if the problem \eqref{eq_ITP_Helm_D} has non-trivial solutions $0\ne v\in L^2_{\textrm{loc}}(\Omega)^4$
and  $0\ne w\in L^2_{\textrm{loc}}(\Omega)^4$. It suffices to require that $v\ne 0$.

Notice that $\lambda=0$ is a transmission eigenvalue and the space of functions
\[
\{v\ne 0:\mathcal{L}_0(D)v=0\}
\]
is infinite dimensional.

\begin{rem}
The standard electromagnetic potential for the Dirac operator given by
\[
V=\mathcal{L}_0(A)+Q,
\]
with 
\[
Q=\begin{pmatrix} q_+I_2 & 0\\
0 & q_-I_2
\end{pmatrix},\quad A=(a_1,a_2,a_3)\in L^\infty(\Omega;\R^3),\quad
q_\pm \in L^\infty(\Omega;\R)
\]
 is included in the setup above. 
 
 \end{rem}

Throughout this section,  we shall assume that $V(x)$ is an Hermitian positive-definite $4\times 4$-matrix valued function, i.e. 
there is a constant $c_V>0$ such that
\[
\langle V(x)\eta,\eta \rangle\ge c_V|\eta|^2,\quad \forall x\in \overline{\Omega},\quad \forall \eta\in \C^4,
\] 
where $\langle \cdot,\cdot  \rangle$ is the inner product in $\C^4$. Moreover, we shall assume that the entries of $V(x)$ belong to $C^\infty(\overline\Omega)$. Thus, the entries of the inverse matrix $V^{-1}(x)$ also belong to $C^\infty(\overline\Omega)$.

Arguing as in the earlier sections, we see that the following characterization of transmission eigenvalues holds: 
 $0\ne \lambda\in \C$ is a transmission eigenvalue if and only if there exists $0\ne u\in H^1_0(\Omega)^4$ satisfying 
\[
T_\lambda u=(\mathcal{L}_0-\lambda(I_4+V))V^{-1}(\mathcal{L}_0-\lambda I_4)u=0 \quad \text{in}\quad \mathcal{D}'(\Omega)^4. 
\]
Here 
\[
T_\lambda =A-\lambda B+\lambda^2C,
\]
where
\begin{align*}
A&=\mathcal{L}_0V^{-1}\mathcal{L}_0,\\
B&=V^{-1}\mathcal{L}_0+\mathcal{L}_0V^{-1}+\mathcal{L}_0,\\
C&=1+V^{-1}.
\end{align*}

\begin{prop}
  The operator $A$ is symmetric and positive on $L^2(\Omega)^4$, when equipped with the domain $C^\infty_0(\Omega)^4$, in the sense that
  there is $d=d_{V,\Omega}>0$ such that
  \[
  \langle A\varphi,\varphi \rangle_{L^2(\Omega)^4}\ge 
  d\|\varphi\|_{L^2(\Omega)^4}^2,\quad \varphi\in  C^\infty_0(\Omega)^4.
  \]   
\end{prop}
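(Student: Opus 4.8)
The plan is to use the factorization $A = \mathcal{L}_0 V^{-1} \mathcal{L}_0$ to write, for $\varphi \in C^\infty_0(\Omega)^4$,
\[
\langle A\varphi,\varphi\rangle_{L^2(\Omega)^4} = \langle V^{-1}\mathcal{L}_0\varphi, \mathcal{L}_0\varphi\rangle_{L^2(\Omega)^4},
\]
which is legitimate since $\varphi$ has compact support in $\Omega$, so no boundary terms arise when integrating by parts, and $V^{-1}$ is Hermitian. This immediately shows that $A$ is symmetric on the domain $C^\infty_0(\Omega)^4$. For positivity, since $V$ is Hermitian positive-definite with $\langle V(x)\eta,\eta\rangle \ge c_V|\eta|^2$, the inverse $V^{-1}(x)$ satisfies $\langle V^{-1}(x)\zeta,\zeta\rangle \ge c_V^{-1}\|V(x)\|^{-?}$ — more precisely, $V^{-1}(x) \ge (\sup_{\overline\Omega}\|V\|)^{-1} I_4$ — so that
\[
\langle A\varphi,\varphi\rangle_{L^2(\Omega)^4} \ge c_0 \|\mathcal{L}_0\varphi\|_{L^2(\Omega)^4}^2
\]
for some $c_0>0$ depending only on $V$.

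It then remains to bound $\|\mathcal{L}_0\varphi\|^2$ below by $\|\varphi\|^2$. This is exactly a Poincaré-type inequality for the first-order constant-coefficient operator $\mathcal{L}_0(D)$ on the bounded domain $\Omega$, and it follows from the same H\"ormander estimate invoked earlier in the excerpt: since $\mathcal{L}_0(D)^2 = -\Delta\,I_4$, each component $\varphi_k$ of $\varphi$ satisfies $\|\varphi_k\| \le C_\Omega\|\Delta\varphi_k\|$... but a cleaner route is to note that $\mathcal{L}_0$ applied to each scalar component is a first-order operator of the form $\sum_j c_j D_j$ acting componentwise (coupling the two blocks), and the estimate $\|\psi\| \le C_\Omega\|\sum_j \sigma_j D_j\psi\|$ for $\psi \in C^\infty_0(\Omega)^2$ is precisely \cite[Theorem 2.1]{hor55} applied to the (non-elliptic but injective on $\mathcal{E}'$) operator $\sigma\cdot D$; equivalently one uses that $(\sigma\cdot D)^2 = -\Delta\,I_2$ together with the scalar Poincaré inequality $\|g\|\le C_\Omega\|\nabla g\|$. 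Either way we obtain $\|\mathcal{L}_0\varphi\|_{L^2(\Omega)^4} \ge c_\Omega \|\varphi\|_{L^2(\Omega)^4}$ with $c_\Omega>0$ depending only on $\Omega$, and setting $d = c_0 c_\Omega^2$ finishes the argument.

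The only delicate point is the lower bound $\|\mathcal{L}_0\varphi\| \gtrsim \|\varphi\|$: the Dirac operator is not elliptic in the usual sense (its symbol $\xi\mapsto\mathcal{L}_0(\xi)$ vanishes nowhere for $\xi\ne 0$, but the relevant fact is injectivity on $\mathcal{E}'(\R^n)$, already used in the excerpt to justify that $v\ne 0$ suffices), so one cannot appeal to a naive G\aa rding inequality. The identity $(\sigma\cdot D)^2 = -\Delta$ reduces it to the scalar case, where the Poincaré inequality on the bounded domain $\Omega$ does the job; I expect this reduction to be the main substantive step, the rest being the routine algebra of passing $V^{-1}$ in and out of the inner product.
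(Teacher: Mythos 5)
Your proposal is correct and follows essentially the same route as the paper: write $\langle A\varphi,\varphi\rangle = \langle V^{-1}\mathcal{L}_0\varphi,\mathcal{L}_0\varphi\rangle$, use the uniform positive-definiteness of $V^{-1}$ (you are right that the correct constant is $(\sup_{\overline\Omega}\|V\|)^{-1}$ rather than $c_V$), and then bound $\|\mathcal{L}_0\varphi\|$ from below by $c_\Omega\|\varphi\|$ via the H\"ormander estimate for constant-coefficient operators on a bounded domain, which is exactly the citation the paper uses. Your alternative elementary derivation of that last inequality from $(\sigma\cdot D)^2=-\Delta I_2$ and the scalar Poincar\'e inequality is a valid (and self-contained) substitute, but it does not change the structure of the argument.
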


\begin{proof}
Let $\varphi\in C^\infty_0(\Omega)^4$. Then we get
\begin{align*}
\langle A\varphi,\varphi \rangle_{L^2(\Omega)^4}&=\langle V^{-1}\mathcal{L}_0\varphi,\mathcal{L}_0\varphi \rangle_{L^2(\Omega)^4}\ge c_V\|\mathcal{L}_0\varphi\|_{L^2(\Omega)^4}^2\ge d\|\varphi\|_{L^2(\Omega)^4}^2.
\end{align*}
Here the last inequality follows from the estimate \cite[Theorem 10.3.7]{horbookII}
\[
\|\mathcal{L}_0\varphi\|_{L^2(\Omega)^4}\ge c_\Omega \|\varphi\|_{L^2(\Omega)^4},\quad \varphi\in  C^\infty_0(\Omega)^4,\quad c_\Omega>0.
\]

The claim follows. 

\end{proof}

\begin{prop}
The  second order operator $A$ is uniformly strongly elliptic in the sense that there is $c>0$ such that
\begin{equation}
\label{eq_ellipticity_D}
\langle \sigma(A)(x,\xi)\eta,\eta\rangle\ge c|\xi|^2|\eta|^2,\quad x\in \overline{\Omega}, \quad \xi\in \R^3,\quad \eta\in \C^4,
\end{equation}
where $\sigma(A)$ is the principal symbol of $A$.

\end{prop}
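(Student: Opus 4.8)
The plan is to compute the principal symbol of $A = \mathcal{L}_0 V^{-1} \mathcal{L}_0$ explicitly and exhibit its positive-definiteness directly. Since $A$ is a second-order differential operator with smooth coefficients, its principal symbol is obtained by keeping only the top-order terms: the principal symbol of $\mathcal{L}_0(D)$ is $\mathcal{L}_0(\xi) = \begin{pmatrix} 0 & \sigma\cdot\xi \\ \sigma\cdot\xi & 0 \end{pmatrix}$, and since $V^{-1}$ is a zeroth-order (multiplication) operator whose symbol is $V^{-1}(x)$, multiplicativity of the principal symbol gives
\[
\sigma(A)(x,\xi) = \mathcal{L}_0(\xi)\, V^{-1}(x)\, \mathcal{L}_0(\xi).
\]
The point is then to show this is $\ge c|\xi|^2|\eta|^2$ uniformly.

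First I would record the key algebraic identity $\mathcal{L}_0(\xi)^2 = |\xi|^2 I_4$, which follows from $(\sigma\cdot\xi)^2 = |\xi|^2 I_2$ (the defining anticommutation property of the Pauli matrices). In particular, for $\xi \ne 0$, the matrix $\mathcal{L}_0(\xi)$ is invertible with $\mathcal{L}_0(\xi)^{-1} = |\xi|^{-2}\mathcal{L}_0(\xi)$, hence $|\mathcal{L}_0(\xi)\eta| = |\xi|\,|\eta|$ for all $\eta \in \C^4$ (it acts as $|\xi|$ times an involution). Next, using that $V^{-1}(x)$ is Hermitian — because $V(x)$ is Hermitian positive-definite, so is $V^{-1}(x)$ — and that its smallest eigenvalue is bounded below uniformly in $x \in \overline{\Omega}$ by some $c' > 0$ (this uses $\langle V(x)\eta,\eta\rangle \le C|\eta|^2$ from $V \in L^\infty$, hence $V^{-1} \ge C^{-1}$, together with smoothness and compactness of $\overline{\Omega}$), I would estimate, for any $x \in \overline{\Omega}$, $\xi \in \R^3$, $\eta \in \C^4$:
\[
\langle \sigma(A)(x,\xi)\eta,\eta\rangle = \langle V^{-1}(x)\mathcal{L}_0(\xi)\eta, \mathcal{L}_0(\xi)\eta\rangle \ge c'\,|\mathcal{L}_0(\xi)\eta|^2 = c'\,|\xi|^2|\eta|^2.
\]
This is exactly \eqref{eq_ellipticity_D} with $c = c'$.

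There is essentially no hard part here: the whole argument rests on the single identity $\mathcal{L}_0(\xi)^2 = |\xi|^2 I_4$ and on uniform positivity of $V^{-1}$. The one point that warrants a sentence of care is why $\mathcal{L}_0(\xi)$ is the principal symbol of $A$ in the expected multiplicative way — i.e., that lower-order terms arising from differentiating the $x$-dependent coefficients of $V^{-1}$ do not contribute to the principal (second-order) part. This is standard: $A = \mathcal{L}_0(D) \circ (V^{-1}\cdot) \circ \mathcal{L}_0(D)$ is a composition in which only the outermost and innermost factors raise the order, so $\sigma(A) = \sigma(\mathcal{L}_0)\,\sigma(V^{-1})\,\sigma(\mathcal{L}_0)$, the middle factor being the zeroth-order symbol $V^{-1}(x)$. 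Finally, I would note that since $c'$ can be taken independent of $x$ (compactness of $\overline{\Omega}$ and continuity of $x \mapsto V(x)^{-1}$), the ellipticity is uniform, as claimed.
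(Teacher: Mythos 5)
Your proof is correct, and the core of it coincides with the paper's: both identify the principal symbol as $\sigma(A)(x,\xi)=\mathcal{L}_0(\xi)V^{-1}(x)\mathcal{L}_0(\xi)$, rewrite the quadratic form as $\langle V^{-1}(x)\mathcal{L}_0(\xi)\eta,\mathcal{L}_0(\xi)\eta\rangle$, and invoke a uniform lower bound for $V^{-1}$. Where you diverge is the finishing step. The paper reduces by homogeneity to $|\xi|=|\eta|=1$ and argues that the form is strictly positive there because $\det(\sigma\cdot\xi)=-|\xi|^2\ne 0$ forces $\mathcal{L}_0(\xi)\eta\ne 0$; the uniform constant $c$ then comes (implicitly) from compactness of the unit spheres and of $\overline{\Omega}$. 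You instead use the exact identity $(\sigma\cdot\xi)^2=|\xi|^2 I_2$ to get $|\mathcal{L}_0(\xi)\eta|^2=|\xi|^2|\eta|^2$ outright, which yields the explicit constant $c=c'$ with no compactness argument and is, if anything, cleaner and quantitatively sharper. You are also more careful than the paper on one point: the lower bound for $V^{-1}$ comes from the \emph{upper} bound on $V$ (i.e.\ $V\le C$ gives $V^{-1}\ge C^{-1}$), not from $V\ge c_V$ as the paper's notation suggests. One small caveat in your write-up: the parenthetical ``it acts as $|\xi|$ times an involution'' does not by itself justify $|\mathcal{L}_0(\xi)\eta|=|\xi||\eta|$, since an involution need not be an isometry; what makes it work is that $\sigma\cdot\xi$ (hence $\mathcal{L}_0(\xi)$) is Hermitian, so $\mathcal{L}_0(\xi)^*\mathcal{L}_0(\xi)=\mathcal{L}_0(\xi)^2=|\xi|^2 I_4$. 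This is a one-line fix and does not affect the validity of the argument.
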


\begin{proof}
In view of homogeneity of \eqref{eq_ellipticity_D} it suffices to prove it
for $|\xi|=1$ and $|\eta|=1$.
Let 
$\eta=(\eta_+,\eta_-)^T\in \C^4$. Then since $V^{-1}$ is  an Hermitian positive-definite matrix valued function,
 we have
\begin{align*}
\langle \sigma(A)(x,\xi)\eta,\eta\rangle&=
\langle V^{-1}\begin{pmatrix} 0 & \sigma\cdot \xi\\
\sigma\cdot \xi & 0
\end{pmatrix}\eta,  \begin{pmatrix} 0 & \sigma\cdot \xi\\
\sigma\cdot \xi & 0
\end{pmatrix}\eta
\rangle\\
&\ge c_V\begin{vmatrix}
(\sigma\cdot\xi)\eta_-\\
(\sigma\cdot\xi)\eta_+
\end{vmatrix}^2_{\C^4}>0. 
\end{align*}
The latter inequality follows from the fact that
\[
\sigma\cdot\xi=\begin{pmatrix}
\xi_3 & \xi_1-i\xi_2\\
\xi_1+i\xi_2& -\xi_3
\end{pmatrix}, \quad
\det(\sigma\cdot\xi)=-|\xi|^2=-1.
\]
This proves \eqref{eq_ellipticity_D}.

\end{proof}

\begin{prop}
\label{prop_self_D}
The operator $A$, equipped with the domain
\[
\mathcal{D}(A)=H^1_0(\Omega)^4\cap H^2(\Omega)^4,
\]
is a positive self-adjoint operator on $L^2(\Omega)^4$.
\end{prop}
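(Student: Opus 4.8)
The plan is to identify $A$ with the self-adjoint operator canonically associated, via the first representation theorem (Kato), to the closed quadratic form
\[
a(u,v)=\langle V^{-1}\mathcal{L}_0 u,\mathcal{L}_0 v\rangle_{L^2(\Omega)^4},\qquad u,v\in H^1_0(\Omega)^4,
\]
and then to upgrade the abstract domain description to the explicit one $H^1_0(\Omega)^4\cap H^2(\Omega)^4$ by elliptic regularity. First I would observe that the form $a$ with form domain $H^1_0(\Omega)^4$ is densely defined, symmetric (since $V^{-1}$ is Hermitian), nonnegative, and closed: closedness follows because the norm $(\|u\|^2+a(u,u))^{1/2}$ is equivalent to the $H^1_0$-norm — the lower bound uses the positivity estimate $a(u,u)\ge c_V\|\mathcal{L}_0 u\|^2\ge d\|u\|^2$ from the first Proposition together with $\|\mathcal{L}_0 u\|^2\simeq\sum_j\|D_j u\|^2$ on $H^1_0(\Omega)^4$ (each component of $\mathcal{L}_0$ is a first order operator with the full gradient appearing, by the explicit form of $\sigma\cdot D$), and the upper bound uses $V^{-1}\in L^\infty$. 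Hence by Kato's theorem there is a unique self-adjoint operator $A$ with $\mathcal{D}(A)\subset H^1_0(\Omega)^4$, characterized by $u\in\mathcal{D}(A)$, $Au=f$ iff $a(u,v)=\langle f,v\rangle$ for all $v\in H^1_0(\Omega)^4$; this operator is positive with the same lower bound $d$. On $C^\infty_0(\Omega)^4$ it agrees with $\mathcal{L}_0 V^{-1}\mathcal{L}_0$, so it is the self-adjoint realization of that differential expression.

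Next I would identify the domain. The weak equation $a(u,v)=\langle f,v\rangle$ for all $v\in C^\infty_0(\Omega)^4$ says exactly that $\mathcal{L}_0(D)\bigl(V^{-1}\mathcal{L}_0(D)u\bigr)=f$ in $\mathcal{D}'(\Omega)^4$, i.e. $u$ solves the Dirichlet problem for the second order system $A$ with $u\in H^1_0(\Omega)^4$ and right-hand side $f\in L^2(\Omega)^4$. Since $A$ is a second order system with $C^\infty(\overline\Omega)$ coefficients (here the hypothesis $V^{-1}\in C^\infty(\overline\Omega)$ enters) which is uniformly strongly elliptic by the second Proposition, and $\partial\Omega$ is $C^\infty$, the standard interior and boundary elliptic regularity theory for strongly elliptic systems satisfying the Dirichlet (complementing) condition gives $u\in H^2(\Omega)^4$ with $\|u\|_{H^2}\le C(\|f\|_{L^2}+\|u\|_{L^2})$. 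Combined with $u\in H^1_0(\Omega)^4$ this yields $\mathcal{D}(A)\subset H^1_0(\Omega)^4\cap H^2(\Omega)^4$. The reverse inclusion is immediate: for $u\in H^1_0(\Omega)^4\cap H^2(\Omega)^4$, integration by parts (the boundary term vanishing because $u$ has zero trace) shows $a(u,v)=\langle \mathcal{L}_0 V^{-1}\mathcal{L}_0 u,v\rangle$ for all $v\in H^1_0(\Omega)^4$, with $\mathcal{L}_0 V^{-1}\mathcal{L}_0 u\in L^2(\Omega)^4$, so $u\in\mathcal{D}(A)$.

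I would expect the main obstacle to be the verification that elliptic regularity applies cleanly, i.e. that the strong ellipticity estimate \eqref{eq_ellipticity_D} is precisely the Legendre--Hadamard/Gårding condition needed, and that the Dirichlet condition is a legitimate (complementing) boundary condition for this particular second order system — one must make sure that the factored structure $A=\mathcal{L}_0 V^{-1}\mathcal{L}_0$, with $\mathcal{L}_0$ a first order operator that is \emph{not} elliptic on its own, does not cause degeneracy; this is exactly what the two preceding Propositions rule out, so the role of those Propositions is to reduce the problem to a textbook situation. A secondary technical point is the equivalence of the graph norm of $\mathcal{L}_0$ on $H^1_0$ with the $H^1$ norm, which is where one uses that $\sigma\cdot D$ recovers all first derivatives (via $(\sigma\cdot D)^2=-\Delta I_2$ and a Poincaré/Gårding argument on $C^\infty_0$). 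Once these are in place, the identification of $A$ as self-adjoint and positive with the stated domain is routine.
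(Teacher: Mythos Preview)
Your proposal is correct and follows essentially the same architecture as the paper: construct $A$ as the self-adjoint operator associated to the quadratic form $a(u,v)=\langle V^{-1}\mathcal{L}_0 u,\mathcal{L}_0 v\rangle$ (the paper phrases this as the Friedrichs extension of $A$ on $C^\infty_0(\Omega)^4$, which is the same thing), identify the form domain with $H^1_0(\Omega)^4$, and then invoke elliptic regularity for strongly elliptic systems to pin down $\mathcal{D}(A)=H^1_0\cap H^2$.

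There is one noteworthy difference in the middle step. To show that the form domain equals $H^1_0(\Omega)^4$, the paper quotes several results from Nakamura--Tsuchida on the trace theory for the space $\mathcal{H}(\Omega)=\{u\in L^2:\mathcal{L}_0 u\in L^2\}$, arguing that elements of the form domain have vanishing trace and hence lie in $H^1_0$. Your route is more direct: you use the algebraic identity $\mathcal{L}_0^2=-\Delta I_4$ to get, for $\varphi\in C^\infty_0(\Omega)^4$, the exact equality $\|\mathcal{L}_0\varphi\|^2=\langle -\Delta\varphi,\varphi\rangle=\|\nabla\varphi\|^2$, which immediately makes the form norm equivalent to the $H^1$ norm on $C^\infty_0$ and hence identifies the completion as $H^1_0$. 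This is a genuine simplification---it bypasses the external trace theory entirely---and is worth keeping. The remaining points you flag (strong ellipticity implies the Legendre--Hadamard condition, Dirichlet data are automatically complementing for strongly elliptic second-order systems) are standard and not real obstacles.
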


\begin{proof}
We shall consider the Friedrichs extension of $A$ on $C_0^\infty(\Omega)^4$, denoted also by $A$, which has the domain
\[
\mathcal{D}(A)=\mathcal{D}(Q)\cap \mathcal{D}(A_{\textrm{max}}).
\]
Here 
\[
Q(\varphi,\varphi)=\langle A\varphi,\varphi \rangle_{L^2(\Omega)^4}
\]
is the quadratic form associated with the operator $A$. The domain $\mathcal{D}(Q)$, also the form domain of $A$, is the 
completion of $C^\infty_0(\Omega)^4$ with respect to the norm $|\!|\!|\varphi|\!|\!|=\sqrt{Q(\varphi,\varphi)}$.
The maximal realization $A_{\textrm{max}}$ of the operator $A$ is defined by
\[
\mathcal{D}(A_{\textrm{max}})=\{u\in L^2(\Omega)^4:Au\in L^2(\Omega)^4\}.
\]

Let us now show that 
\begin{equation}
\label{eq_form_dom_D}
\mathcal{D}(Q)=H^1_0(\Omega)^4.
\end{equation}
Indeed, it is easy to see that  the norm $|\!|\!|\cdot|\!|\!|$  is equivalent to the following norm
\begin{equation}
\label{eq_norm_compl_D}
\|\mathcal{L}_0\cdot\|_{L^2(\Omega)^4}+ \|\cdot\|_{L^2(\Omega)^4}.
\end{equation}
Since 
$\|\mathcal{L}_0\cdot\|_{L^2(\Omega)^4}\le C\|\nabla\cdot\|_{L^2(\Omega)^2}$, we have 
\[
H^1_0(\Omega)^4\subset \mathcal{D}(Q). 
\]
On the other hand, it follows from \cite[Proposition 4.2]{NakTsu00} that the completion of $C^\infty(\overline{\Omega})^4$ with respect to the norm
\eqref{eq_norm_compl_D} is the space
\[
\mathcal{H}(\Omega)=\{u\in L^2(\Omega)^4:\mathcal{L}_0u\in L^2(\Omega)^4\}.
\]
Thus,
\[
\mathcal{D}(Q)\subset \mathcal{H}(\Omega). 
\]
It is shown in  \cite[Proposition 4.6]{NakTsu00} that the trace map
\[
\tau: C^\infty(\overline\Omega)^4\to C^\infty(\p \Omega)^4, \quad
u\mapsto u|_{\p \Omega},
\]
extends uniquely to a bounded map on $\mathcal{H}(\Omega)$. It follows from \cite[Proposition 4.6]{NakTsu00} that 
for any $u\in \mathcal{D}(Q)$, $\tau u=0$.  Hence, \cite[Proposition 4.10]{NakTsu00} implies that $u\in H^1_0(\Omega)^4$. 
This proves \eqref{eq_form_dom_D}. 

Hence,
\[
\mathcal{D}(A)=\{u\in H^1_0(\Omega)^4:Au\in L^2(\Omega)^4\}.
\]
As the operator $A$ is strongly elliptic and $\Omega$ has a smooth boundary, by elliptic regularity, see for instance
\cite[Section 7.5]{Grubbbook}, $\mathcal{D}(A)=H^1_0(\Omega)^4\cap H^2(\Omega)^4$. 

\end{proof}

It follows from Proposition \ref{prop_self_D} that   for any $\lambda\in \R$,
the operator $T_\lambda$, equipped with the domain
$\mathcal{D}(A)$
is a self-adjoint operator on $L^2(\Omega)^4$, and the form domain of $T_\lambda$ is $H^1_0(\Omega)^4$.

\begin{thm} The set of transmission eigenvalues for \eqref{eq_ITP_Helm_D} is discrete. 
\end{thm}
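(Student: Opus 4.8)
The plan is to mimic the proof of Theorem~\ref{thm_discret_1}, using the analytic Fredholm theory applied to the holomorphic family
\[
T_\lambda = A - \lambda B + \lambda^2 C : H^1_0(\Omega)^4 \to H^{-1}(\Omega)^4, \quad \lambda \in \C,
\]
where $A = \mathcal{L}_0 V^{-1}\mathcal{L}_0$, $B = V^{-1}\mathcal{L}_0 + \mathcal{L}_0 V^{-1} + \mathcal{L}_0$, and $C = I_4 + V^{-1}$, viewed through the sesquilinear form $B_\lambda(\varphi,\psi) = \langle V^{-1}(\mathcal{L}_0 - \lambda I_4)\varphi, (\mathcal{L}_0 - \lambda(I_4+V))\psi\rangle_{L^2(\Omega)^4}$ which extends continuously to $H^1_0(\Omega)^4 \times H^1_0(\Omega)^4$ (here one uses $\|\mathcal{L}_0\varphi\| \le C\|\nabla\varphi\|$ to control the form by the $H^1_0$-norm, together with the equivalence of $|\!|\!|\cdot|\!|\!|$ with the norm in \eqref{eq_norm_compl_D} and the fact that the form domain of $A$ is $H^1_0(\Omega)^4$).

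First I would establish invertibility at $\lambda = 0$: by the Proposition showing $A$ is symmetric and positive on $C^\infty_0(\Omega)^4$ with $\langle A\varphi,\varphi\rangle \ge d\|\varphi\|^2_{L^2}$, and since the form domain of $A$ is $H^1_0(\Omega)^4$, the form $B_0(\varphi,\varphi) = \langle V^{-1}\mathcal{L}_0\varphi, \mathcal{L}_0\varphi\rangle_{L^2}$ is coercive on $H^1_0(\Omega)^4$ (it dominates $c_V\|\mathcal{L}_0\varphi\|^2$, hence by the Poincaré-type estimate $\|\mathcal{L}_0\varphi\| \ge c_\Omega\|\varphi\|$ and ellipticity it dominates the full $H^1_0$-norm up to constants). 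By the Lax--Milgram lemma, $T_0 = A : H^1_0(\Omega)^4 \to H^{-1}(\Omega)^4$ is invertible.

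Next I would show that $T_\lambda - T_0 = -\lambda B + \lambda^2 C : H^1_0(\Omega)^4 \to H^{-1}(\Omega)^4$ is compact for every $\lambda$. Each summand factors through a bounded map into $L^2(\Omega)^4$ followed by the compact (Rellich) embedding $L^2(\Omega)^4 \hookrightarrow H^{-1}(\Omega)^4$, or through the compact embedding $H^1_0(\Omega)^4 \hookrightarrow L^2(\Omega)^4$: the term $\lambda^2 C = \lambda^2(I_4 + V^{-1})$ is bounded $L^2 \to L^2$ and precomposed with $H^1_0 \hookrightarrow L^2$; the terms $-\lambda\mathcal{L}_0 V^{-1}$ and $-\lambda\mathcal{L}_0$ are bounded from $L^2(\Omega)^4$ into $H^{-1}(\Omega)^4$ (since $\mathcal{L}_0$ is first order) and again precomposed with the compact inclusion $H^1_0 \hookrightarrow L^2$; the term $-\lambda V^{-1}\mathcal{L}_0$ is the composition of $\mathcal{L}_0 : H^1_0(\Omega)^4 \to L^2(\Omega)^4$ bounded, $V^{-1}: L^2 \to L^2$ bounded, and $L^2(\Omega)^4 \hookrightarrow H^{-1}(\Omega)^4$ compact. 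Here I would use that the entries of $V^{-1}$ lie in $C^\infty(\overline\Omega) \subset L^\infty(\overline\Omega)$ so all multiplications are bounded on $L^2$, and that $\mathcal{L}_0$ maps $L^2$ to $H^{-1}$ boundedly because its coefficients are constants. Consequently $T_\lambda$ is a holomorphic (in fact polynomial, hence entire) family of Fredholm operators of index zero.

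The conclusion then follows from the analytic Fredholm theorem: since $T_\lambda$ is invertible at $\lambda = 0$, it is invertible for all $\lambda \in \C$ outside a discrete set, and by the characterization $T_\lambda u = 0$ with $0 \ne u \in H^1_0(\Omega)^4$ being equivalent to $\lambda \ne 0$ being a transmission eigenvalue (established in the excerpt), together with $\lambda = 0$ being an isolated point, the set of transmission eigenvalues is discrete. The main technical obstacle I anticipate is the domain bookkeeping for the form: one must verify carefully that $B_\lambda$ genuinely extends to a bounded form on $H^1_0(\Omega)^4$ despite $\mathcal{D}(\mathcal{L}_0)$ being strictly larger than $H^1_0(\Omega)^2 \times H^1(\Omega)^2$, and that coercivity of $B_0$ holds in the $H^1_0$-norm rather than merely in the graph norm of $\mathcal{L}_0$; this is exactly where the uniform strong ellipticity of $A$ and the identification $\mathcal{D}(Q) = H^1_0(\Omega)^4$ from Proposition~\ref{prop_self_D} are needed.
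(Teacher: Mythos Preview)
Your argument is correct and follows the same overall strategy as the paper: show that $T_\lambda$ is a holomorphic family of Fredholm operators of index zero, invertible at $\lambda=0$, by writing $T_\lambda - T_0 = -\lambda B + \lambda^2 C$ as a compact perturbation, and then invoke the analytic Fredholm theorem. The difference is in the realization of $T_\lambda$. You work with the weak form $T_\lambda : H^1_0(\Omega)^4 \to H^{-1}(\Omega)^4$ via the sesquilinear form, exactly as in Theorem~\ref{thm_discret_1}, and you correctly identify that the coercivity of $B_0$ in the $H^1_0$--norm and the boundedness of $B_\lambda$ both hinge on the identification $\mathcal{D}(Q)=H^1_0(\Omega)^4$ from Proposition~\ref{prop_self_D}. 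The paper instead uses the strong realization $T_\lambda : \mathcal{D}(A) = H^1_0(\Omega)^4 \cap H^2(\Omega)^4 \to L^2(\Omega)^4$, which Proposition~\ref{prop_self_D} makes available; there the compactness of $-\lambda B + \lambda^2 C$ is immediate from Rellich (a first-order operator from $H^2$ to $L^2$ is compact), so the proof reduces to two lines. The two realizations are linked by elliptic regularity for $A$: any $u\in H^1_0(\Omega)^4$ with $T_\lambda u = 0$ in $\mathcal{D}'(\Omega)^4$ automatically lies in $H^2(\Omega)^4$, so the weak and strong kernels coincide. Your route is slightly longer but has the virtue of matching directly the characterization of transmission eigenvalues stated just before the theorem, without an appeal to regularity.
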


\begin{proof}

First note that the operator 
\[
-\lambda B+\lambda^2 C: \mathcal{D}(A)\to L^2(\Omega)
\]
is compact. Hence, the operator $T_\lambda:\mathcal{D}(A)\to L^2(\Omega)$ is Fredholm of index $0$, invertible at $\lambda=0$.  Thus, by analytic Fredholm theory, 
\[
T^{-1}_\lambda:L^2(\Omega)\to \mathcal{D}(A), \quad \lambda\in \C,
\]
is a meromorphic family of operators with residues of finite rank. This proves the claim. 

\end{proof}

As before, we see that the multiplicity of a transmission eigenvalue $\lambda\in \R$ is finite. 

\begin{thm} 
Let $V$ be a matrix-valued potential as above. Then there exists an infinite set of real transmission eigenvalues for \eqref{eq_ITP_Helm_D}. 
\end{thm}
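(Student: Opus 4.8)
The plan is to mimic the strategy used for the biharmonic operator: reduce, via the analog of Proposition~\ref{prop_helm_inf_n} for the Dirac system, to exhibiting a non-zero real transmission eigenvalue for the constant-coefficient interior transmission problem on an arbitrary ball $B_\varepsilon(0)$ with an arbitrary constant positive-definite potential. Because the dependence of $V$ on $x$ has been reduced away by the ball argument, and because $\mathcal{L}_0(D)^2=-\Delta I_4$, the local problem on $B_\varepsilon(0)$ is essentially four scalar Helmholtz-type problems coupled only through the boundary condition $v-w\in H^1_0(B_\varepsilon(0))^4$, i.e. through the matching of Cauchy data $(u_+|_{\partial\Omega},u_-|_{\partial\Omega})$ across $\partial B_\varepsilon(0)$. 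First I would establish the Dirac analog of Proposition~\ref{prop_helm_inf_n}: using Proposition~\ref{prop_self_D} and the fact that the form domain of $T_\lambda$ is $H^1_0(\Omega)^4$ independently of $\lambda$, the variational argument of Theorem~\ref{thm_existence1} and Lemma~\ref{lem_existence_p} carries over verbatim, so that $p$ disjoint balls inside $\Omega$, each carrying a transmission eigenfunction for the constant potential $\delta:=c_V$, produce by disjoint-support orthogonality a $p$-dimensional subspace of $H^1_0(\Omega)^4$ on which the quadratic form $B_\lambda$ is non-positive; since the true potential satisfies $\langle V\eta,\eta\rangle\ge c_V|\eta|^2$, the comparison $\langle V^{-1}\cdot,\cdot\rangle\le c_V^{-1}\|\cdot\|^2$ gives $B_{\lambda(\varepsilon)}(u,u)\le 0$ on that subspace exactly as in the proof of Proposition~\ref{prop_infty}.

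Next I would solve the ball problem explicitly. Write $k^2=\lambda$ componentwise after diagonalizing, and on $B_\varepsilon(0)$ take $v$ to be built from the radial eigenfunctions of $-\Delta$ with eigenvalue $\lambda$ and $w$ from those with eigenvalue $\lambda(1+\delta)$ in the appropriate components; since $\mathcal{L}_0(D)v=\lambda v$ forces $-\Delta v=\lambda^2 v$ but not conversely, one must choose the spinor structure so that $\mathcal{L}_0$ acts correctly — concretely, look for solutions of the form $v(x)=\mathcal{L}_0(D)\bigl(j_0(kr)\,e\bigr)+\lambda\, j_0(kr)\,e$ for constant spinors $e$, or equivalently separate variables using spinor spherical harmonics. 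The matching condition $v-w\in H^1_0(B_\varepsilon(0))^4$ amounts to continuity of the full spinor and of $u_+$ up to first order at $r=\varepsilon$, yielding a finite linear system whose determinant, after extracting the dominant exponential factors via the large-$k$ asymptotics of $j_0(kr)$ and $j_0(ikr)$ exactly as in the biharmonic computation, reduces to a trigonometric expression of the shape $\sin((\rho-1)k\varepsilon)+\mathcal{O}(\delta)+\mathcal{O}(1/(\delta k))$ with $\rho=(1+\delta)^{1/2}$; periodicity then furnishes real zeros $k$, hence real transmission eigenvalues $\lambda(\varepsilon)$, for every $\varepsilon>0$ and $\delta>0$.

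I expect the main obstacle to be the bookkeeping in the ball problem: unlike the scalar biharmonic case, here one is dealing with a $4\times 4$ first-order system, so the correct ansatz for $v$ and $w$ must respect the $2\times 2$ block structure $\begin{pmatrix}0&\sigma\cdot D\\ \sigma\cdot D&0\end{pmatrix}$ and the splitting $u=(u_+,u_-)^T$, and one has to verify that the Cauchy-data matching is genuinely $H^1_0$-matching (recalling from the discussion after Proposition~\ref{prop_self_D} and from \cite{NakTsu00} that $\mathcal{D}(\mathcal{L}_0)$ can be strictly larger than $H^1_0\times H^1$, so the correct traces to match are $u_+|_{\partial\Omega}$ and $u_-|_{\partial\Omega}$). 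Once the separation of variables is set up correctly — using that $\sigma\cdot\xi$ has determinant $-|\xi|^2$, so each angular sector contributes a $2\times 2$ determinant structurally identical to a scalar Helmholtz matching — the asymptotic analysis is routine and parallels the biharmonic case; the periodicity of $k\mapsto\sin((\rho-1)k\varepsilon)$ then closes the argument, and an application of the Dirac analog of Proposition~\ref{prop_helm_inf_n} yields the infinitude of real transmission eigenvalues for \eqref{eq_ITP_Helm_D}.
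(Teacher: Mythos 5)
Your overall strategy is exactly the paper's: establish the Dirac analog of Proposition~\ref{prop_helm_inf_n} via the disjoint-balls/quadratic-form comparison (your handling of the matrix comparison, replacing $V\ge\delta$ by $V\ge c_V I_4$ so that $\langle V^{-1}\phi,\phi\rangle\le c_V^{-1}\|\phi\|^2$, is correct and in fact more explicit than the paper, which merely asserts that Proposition~\ref{prop_helm_inf_n} carries over), and then solve the constant-coefficient ball problem \eqref{eq_ITP_ball_D} by separation of variables.

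However, the ball computation as you sketch it imports features of the biharmonic example that do not belong here, and one piece of your ansatz would actually fail. Since $\mathcal{L}_0(D)^2=-\Delta I_4$, the equation $(\mathcal{L}_0-\lambda)v=0$ forces $(-\Delta-\lambda^2)v=0$: the reduced scalar problem is the \emph{ordinary second-order} Helmholtz interior transmission problem with wave numbers $\lambda$ and $\lambda(1+\delta)$, not $\lambda^{1/2}$ and $(\lambda(1+\delta))^{1/2}$. So the ratio of wave numbers is $\rho=1+\delta$ (not $(1+\delta)^{1/2}$), and the determinant condition is $\sin(\lambda\delta\varepsilon)+\mathcal{O}(1/\lambda)=0$. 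More importantly, there are no $j_0(ikr)$ terms and no exponential factors to extract: a spinor of the form $(\mathcal{L}_0+\lambda)\bigl(j_0(i\lambda r)e\bigr)$ satisfies $(\mathcal{L}_0-\lambda)(\mathcal{L}_0+\lambda)u=(-\Delta-\lambda^2)u=-2\lambda^2 u\ne 0$, so including such terms in the ansatz is not merely unnecessary but wrong; the first-order factorization $\mathcal{L}_0^2-\lambda^2=(\mathcal{L}_0-\lambda)(\mathcal{L}_0+\lambda)$ leaves only the oscillatory branch. Finally, the matching condition $v-w\in H^1_0(B_\varepsilon(0))^4$ is just equality of the traces of all four components; with the spherically symmetric choice $v_-^1=v_-^2=f(r)$, $w_-^1=w_-^2=g(r)$ and $v_+=\frac{1}{\lambda}\,\sigma\cdot D\,v_-$, $w_+=\frac{1}{\lambda(1+\delta)}\,\sigma\cdot D\,w_-$, this yields exactly the two conditions $f=g$ and $\partial_r f=\partial_r g/(1+\delta)$ on $\partial B_\varepsilon(0)$, hence a $2\times2$ determinant in $j_0$ and $j_0'$ — not a first-order jet matching of $u_+$, which would over-determine the system. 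These are repairable computational slips rather than a gap in the method, but as written the ansatz and asymptotics would not go through; the corrected computation is strictly simpler than the biharmonic one and gives the paper's condition \eqref{eq_det_2_D}.
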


\begin{proof}

First notice that Proposition
\ref{prop_helm_inf_n} continues to be valid for \eqref{eq_ITP_Helm_D}. 
It is therefore sufficient to prove the existence of transmission eigenvalues for the following problem, 
\begin{equation}
\label{eq_ITP_ball_D}
\begin{aligned}
(\mathcal{L}_0(D)-\lambda I_4)v=0 \quad \text{in}\quad B_\varepsilon(0),\\
(\mathcal{L}_0(D)-\lambda(1+\delta)I_4)w=0 \quad \text{in}\quad B_\varepsilon(0), \\
v-w\in H^1_0(B_\varepsilon(0))^4.
\end{aligned}
\end{equation}
Here  $\varepsilon>0$, $\delta>0$  and $B_\varepsilon(0)\subset\R^3$ is an open ball of radius $\varepsilon$ centered at the origin.

When considering \eqref{eq_ITP_ball_D},  we let $0\ne \lambda\in \R$ and study
\[
(\mathcal{L}_0(D)-\lambda I_4)v=0,
\]
where 
\[
v=\begin{pmatrix}v_+\\
v_-
\end{pmatrix},\quad 
v_\pm=\begin{pmatrix}
v_\pm^1\\
v_\pm^2
\end{pmatrix}.
\]
  Then we have
\begin{align*}
-\lambda v_++\sigma\cdot D v_-&=0,\\
\sigma\cdot D v_+-\lambda v_-&=0. 
\end{align*}
As 
\[
\sigma\cdot D=\begin{pmatrix}
D_3 & D_1-iD_2\\
D_1+iD_2 & -D_3
\end{pmatrix},
\]
we get
\begin{align*}
&v_+^1=\frac{1}{\lambda}(D_3v_-^1 + (D_1-iD_2)v_-^2)\\
&v_+^2=\frac{1}{\lambda}((D_1+iD_2)v_-^1 -D_3v_-^2),\\
&
\begin{pmatrix} 
-\Delta-\lambda^2 & 0\\
0 & -\Delta-\lambda^2
\end{pmatrix} \begin{pmatrix}
v_-^1\\
v_-^2
\end{pmatrix}=0.
\end{align*}
Considering the equation
\[
(\mathcal{L}_0(D)-\lambda(1+\delta)I_4)w=0,
\]
similarly, we obtain
\begin{align*}
&w_+^1=\frac{1}{\lambda(1+\delta)}(D_3w_-^1 + (D_1-iD_2)w_-^2)\\
&w_+^2=\frac{1}{\lambda(1+\delta)}((D_1+iD_2)w_-^1 -D_3w_-^2),\\
&
\begin{pmatrix} 
-\Delta-\lambda^2(1+\delta)^2 & 0\\
0 & -\Delta-\lambda^2(1+\delta)^2
\end{pmatrix} \begin{pmatrix}
w_-^1\\
w_-^2
\end{pmatrix}=0.
\end{align*}

Notice that to prove the existence of real transmission eigenvalues for the problem \eqref{eq_ITP_ball_D}, it suffices to restrict our attention to solutions $v,w$ of  
\eqref{eq_ITP_ball_D} such that $v_-^1=v_-^2=f(r)$ and $w_-^1=w_-^2=g(r)$, $r=|x|$, are spherically symmetric solutions 
of the following interior transmission problem,
\begin{equation}
\label{eq_transmis_bessel_D}
\begin{aligned}
(-\Delta-\lambda^2)f=0\quad \text{in}\quad B_\varepsilon(0),\\
(-\Delta-\lambda^2(1+\delta)^2)g=0 \quad \text{in}\quad B_\varepsilon(0),\\
f-g=0\quad \text{on}\quad \p B_\varepsilon(0),\\
\p_{r}f=\frac{\p_{r}g}{1+\delta}\quad \text{on}\quad \p B_\varepsilon(0).
\end{aligned}
\end{equation}
It is clear then that for such solutions, the boundary conditions
\begin{align*}
v_+^1=w_+^1\quad \text{on}\quad \p B_\varepsilon(0),\\
v_+^2=w_+^2\quad \text{on}\quad \p B_\varepsilon(0).
\end{align*}
are satisfied.

Since
\[
(\Delta+\lambda^2)f(r)=f''(r)+\frac{2}{r}f'(r)+\lambda^2f(r)=0,
\]
$f$ must be of the form 
\[
f(x)=c_0j_0(\lambda r),
\]
where $j_0$ is the spherical Bessel function of order zero and $c_0$ is a constant. 
In the same way, 
\[
g(x)=c_1j_0(\lambda(1+\delta) r),
\]
The boundary conditions in \eqref{eq_transmis_bessel_D} require that
\begin{align*}
&c_0j_0(\lambda\varepsilon)=c_1j_0(\lambda(1+\delta)\varepsilon),\\
& c_0j'_0(\lambda\varepsilon)=c_1j'_0(\lambda(1+\delta)\varepsilon).
\end{align*}
A nontrivial solution of this system exists if and only if
\begin{equation}
\label{eq_det_D}
\det\begin{pmatrix} j_0(\lambda\varepsilon) & -j_0(\lambda(1+\delta)\varepsilon)\\
 j'_0(\lambda\varepsilon) & -j'_0(\lambda(1+\delta)\varepsilon)
\end{pmatrix}=0.
\end{equation}
Since
\[
j_0(r)=\frac{\sin r}{r},\quad j'_0(r)=\frac{\cos r}{r}+\mathcal{O}(1/r^2),
\]
\eqref{eq_det_D} implies that
\begin{equation}
\label{eq_det_2_D}
\sin(\lambda\delta\varepsilon)+\mathcal{O}(1/\lambda)=0, \quad \lambda\to\infty.
\end{equation}
The existence of 
an infinite set of values $\lambda$ such that \eqref{eq_det_2_D} holds is clear as $\sin(\lambda\delta\varepsilon)$ is a periodic function taking positive and negative values. Each such $\lambda$ is a transmission eigenvalue for \eqref{eq_ITP_ball_D} and this completes the proof.

\end{proof}

\section{Generalized Rellich theorem}

In the last two sections, 
which do not depend on the material in Section 6 and 7,
we would like to explain the connection between interior transmission eigenvalues and scattering theory.
It is going to be provided by a generalization of the classical Rellich theorem,  proved in \cite{Hormander73}.

Let us start by summarizing the basic features of general scattering theory following  \cite[Chapter 14]{horbookII}.  Let $P_0$ be a partial differential operator  in $\R^n$ of order $m\ge 2$ with constant real coefficients,
\[
P_0(D)=\sum_{|\alpha|\le m} a_{\alpha}D^\alpha, \quad a_\alpha\in\R, \quad D_j=-i\frac{\partial}{\partial x_j},\quad j=1,\dots,n.
\] 
Assume that $\Lambda(P_0)=\{0\}$, see \eqref{eq_Lambda_p_0}, 
and that $P_0=P_0(D)$ is simply characteristic, i.e. 
\[
\sum_{|\alpha|\le m}|P_0^{(\alpha)}(\xi)|\le C(\sum_{|\alpha|\le 1}|P_0^{(\alpha)}(\xi)|+1),\quad C>0 .
\]
Examples are hypoelliptic operators \cite[Chapter 11]{horbookII} and operators of real principal type \cite[Chapter 8]{horbookI}.

In order to describe mapping properties of the boundary values of the $L^2$-resolvent of $P_0$, we follow  \cite[Chapter 14]{horbookII}
and introduce the following Banach spaces. Define
\begin{align*}
B&=\{v\in L^2(\R^n):\|v\|_B=\sum_{j=1}^\infty R_j^{1/2}(\int_{\Omega_j}|v|^2dx)^{1/2}<\infty \},\\
B^*&=\{u\in L^2_{\textrm{loc}}(\R^n):\|u\|_{B^*}=\sup_{j> 0} R_j^{-1/2}(\int_{\Omega_j}|u|^2dx)^{1/2}<\infty \},\\
\end{align*}
where
\begin{align*}
R_0=0,&\quad  R_j=2^{j-1}, \quad j=1,2,\dots,\\
\Omega_j=\{x\in \R^n:&R_{j-1}<|x|<R_j\}, \quad j=1,2,\dots.
\end{align*}
The space $B^*$ is the dual  of $B$ and we have
\[
L^2_\delta\subset B\subset L^2\subset B^*\subset L^2_{-\delta}, \quad \delta> 1/2. 
\]
The space of $C^\infty_0$-functions  is dense in $B$ but not in $B^*$. Its closure in $B^*$ is denoted
by $\stackrel{\circ}{B^*}$. Then $u\in L^2_{\textrm{loc}}(\R^n)$ belongs to $\stackrel{\circ}{B^*}$ if and only if
\[
\int_{|x|<R}|u|^2dx/R\to 0, \quad R\to\infty.
\]
We also define the Sobolev space version of $B^*$, associated to $P_0(D)$,
\[
B^*_{P_0}=\{u\in B^*:P^{(\alpha)}_0(D)u\in B^*, \forall \alpha\}.
\]

Let $Z(P_0)$ be the (necessarily finite) set of critical values of $P_0$, i.e.
\[
Z(P_0)=\{\lambda:\exists \xi\in \R^n \text{ s.t. } \nabla P_0(\xi)=0, P_0(\xi)=\lambda\}.
\] 
For $z\in \overline{\C^{\pm}}\setminus Z(P_0)$, the resolvent $R_0(z)=(P_0-zI)^{-1}$
of the simply characteristic operator  $P_0$ extends to a continuous map 
\[
 R_0(z):B\to B^*_{P_0}.
\]
Here $\overline{\C^{\pm}}=\{z:\pm \textrm{Im}\ z\ge 0\}$.  For $\lambda\in \R\setminus Z(P_0)$, 
the boundary values of the resolvent  are given by 
\[
R_0(\lambda\pm i0)f=\lim_{\varepsilon\to 0^+}F^{-1}((P_0-\lambda\mp \varepsilon i)^{-1}F(f)),
\]
where $F$ stands for  the Fourier transformation. 

We say that $u\in B^*$ is outgoing (incoming) if $u=R(\lambda+i0)f$ ($u=R(\lambda-i0)f$), $f\in B$ and $\lambda\in \R\setminus Z(P_0)$.
If $u$ is outgoing or incoming then $(P_0-\lambda)u=f$. 

For $\lambda\notin Z(P_0)$,  the level set $M_\lambda=\{\xi\in \R^n:P_0(\xi)=\lambda\}$ is an $(n-1)$-dimensional  $C^\infty$ submanifold  of $\R^n$.

It is known that $u$ is both outgoing and incoming, i.e. $u=R(\lambda+i0)f=R(\lambda-i0)f$, if and only if $F(f)=\hat f\equiv 0$ on $M_\lambda$.

We shall consider multiplicative perturbations of $P_0$ given by $V\in L^\infty(\R^n)$ with compact support. Such perturbations satisfy the short range condition introduced in \cite[Section 14.4]{horbookII}, i.e.  
\[
V:B^*_{P_0}\to B
\]
is compact.

In order to define the scattering amplitude, we recall the following fundamental result \cite[Theorem 14.6.8]{horbookII}.
\begin{thm} Assume that $\lambda\in \R\setminus Z(P_0)$. If $u\in B^*_{P_0}$ satisfies
\[
(P_0+V-\lambda)u=0,
\]
then
\begin{equation}
\label{eq_Lippmann-Schwinger}
u=u_{\pm}-R_0(\lambda\mp i0)Vu
\end{equation}
where 
\[
\hat u_{\pm}=v_\pm\delta(P_0-\lambda)=v_{\pm}\frac{dS}{|P'_0|} \quad\text{and} \quad v_\pm\in L^2(M_\lambda,dS).
\]
The map $v_-\mapsto v_+$ is a continuos bijection which extends to a unitary map
\[
\Sigma_\lambda: L^2(M_\lambda,\frac{dS}{|P'_0|})\to L^2(M_\lambda,\frac{dS}{|P'_0|}), \quad \Sigma_{\lambda}(v_-)=v_+.
\]
 \end{thm}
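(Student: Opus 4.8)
This is Theorem 14.6.8 in \cite{horbookII}, and the plan is to reconstruct its proof within the scattering framework for simply characteristic operators recalled above. Since $V\in L^\infty(\R^n)$ is compactly supported, it satisfies the short range condition, so $Vu\in B$ for every $u\in B^*_{P_0}$; abbreviate $g=Vu\in B$, so that the hypothesis $(P_0+V-\lambda)u=0$ reads $(P_0-\lambda)u=-g$. As $\lambda\in\R\setminus Z(P_0)$, the limiting absorption principle gives that $R_0(\lambda\pm i0)\colon B\to B^*_{P_0}$ is bounded and $(P_0-\lambda)R_0(\lambda\pm i0)g=g$. Hence
\[
u_\pm:=u+R_0(\lambda\mp i0)Vu\in B^*_{P_0},\qquad (P_0-\lambda)u_\pm=0,
\]
so that \eqref{eq_Lippmann-Schwinger} holds in the form $u=u_\pm-R_0(\lambda\mp i0)Vu$. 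What remains is (i) to identify the $u_\pm$ as inverse Fourier transforms of densities on $M_\lambda$, and (ii) to prove that $v_-\mapsto v_+$ is a unitary bijection.

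The first point I would isolate as a structural lemma: if $w\in B^*_{P_0}$ satisfies $(P_0-\lambda)w=0$, then $\hat w=v\,dS/|P_0'|$ with $v\in L^2(M_\lambda,dS/|P_0'|)$, and conversely each such $v$ produces such a $w$ with $\|w\|_{B^*}$ comparable to $\|v\|_{L^2(M_\lambda,dS/|P_0'|)}$. Since $\lambda\notin Z(P_0)$ we have $\nabla P_0\neq0$ on $M_\lambda$, so $M_\lambda$ is a smooth hypersurface and $\hat w$, being a distribution supported on it, is locally a finite sum $\sum_{j\ge0}a_j\otimes\delta^{(j)}(P_0-\lambda)$; the content of the lemma is that membership of $w$ in $B^*$ annihilates every term with $j\ge1$ and constrains $v=a_0$ to lie in $L^2$ of the surface density. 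This is proved by the sharp Agmon--H\"ormander restriction and trace estimates together with stationary phase on $M_\lambda$, and I expect it to be the main obstacle: it is exactly where the non-degeneracy hypothesis $\lambda\notin Z(P_0)$ and simple characteristicity are used. Applied to $w=u_\pm$ it yields $\hat u_\pm=v_\pm\,dS/|P_0'|$ with $v_\pm\in L^2(M_\lambda,dS/|P_0'|)$.

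For the bijectivity of $v_-\mapsto v_+$ I would argue by Fredholm theory. Prescribing $v_-$ amounts to prescribing the incoming free solution $u_-$, and one must solve $u+R_0(\lambda+i0)Vu=u_-$ for $u\in B^*_{P_0}$; the operator $R_0(\lambda+i0)V\colon B^*_{P_0}\to B^*_{P_0}$ is compact, being the composition of the compact map $V\colon B^*_{P_0}\to B$ with the bounded map $R_0(\lambda+i0)\colon B\to B^*_{P_0}$, so $I+R_0(\lambda+i0)V$ is Fredholm of index zero. Its injectivity is the absence of a nonzero outgoing $B^*_{P_0}$-solution of $(P_0+V-\lambda)u=0$: if $u=-R_0(\lambda+i0)Vu$, then $\langle Vu,u\rangle_{L^2}=\int V|u|^2$ is real, while substituting $u=-R_0(\lambda+i0)(Vu)$ and using that $P_0$ is self-adjoint (real coefficients), so $R_0(\lambda+i0)^*=R_0(\lambda-i0)$ and $\Im\langle R_0(\lambda+i0)f,f\rangle=\pi\|\hat f|_{M_\lambda}\|^2_{L^2(M_\lambda,dS/|P_0'|)}$, forces $\widehat{Vu}=0$ on $M_\lambda$; then $u$ is simultaneously outgoing and incoming, hence lies in $\stackrel{\circ}{B^*}$, and the uniqueness theorem for the short range perturbation $P_0+V$ — a consequence of the generalized Rellich theorem \cite{Hormander73} combined with unique continuation — forces $u\equiv0$. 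Running the same argument with $R_0(\lambda-i0)$ gives surjectivity, so $v_-\mapsto v_+$ is a well-defined bijection of $L^2(M_\lambda,dS/|P_0'|)$.

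Finally, unitarity is a flux identity. Subtracting the two forms of \eqref{eq_Lippmann-Schwinger} gives $u_+-u_-=(R_0(\lambda-i0)-R_0(\lambda+i0))Vu$, so by Sokhotski--Plemelj $\widehat{u_+-u_-}$ is a nonzero constant multiple of $\widehat{Vu}|_{M_\lambda}\,dS/|P_0'|$, i.e. $v_+-v_-=c\,\widehat{Vu}|_{M_\lambda}$ for some nonzero $c$. On the other hand $\langle u,Vu\rangle_{L^2}=\int V|u|^2$ is real, and expanding it through $u=u_\pm-R_0(\lambda\mp i0)Vu$ and using $\Im\langle R_0(\lambda\pm i0)f,f\rangle=\pm\pi\|\hat f|_{M_\lambda}\|^2$ yields $\Im\langle u_+,Vu\rangle+\Im\langle u_-,Vu\rangle=0$. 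Writing $\|v_+\|^2-\|v_-\|^2=\Re\langle v_+-v_-,v_+\rangle+\Re\langle v_+-v_-,v_-\rangle$, substituting $v_+-v_-=c\,\widehat{Vu}|_{M_\lambda}$, and converting the surface pairings $\langle\widehat{Vu}|_{M_\lambda},v_\pm\rangle$ into the pairings $\langle u_\pm,Vu\rangle$ by Parseval, the right-hand side becomes a nonzero real multiple of $\Im\langle u_+,Vu\rangle+\Im\langle u_-,Vu\rangle$, hence vanishes. Therefore $\|v_+\|=\|v_-\|$ in $L^2(M_\lambda,dS/|P_0'|)$; a bijective isometry is unitary, and the theorem follows.
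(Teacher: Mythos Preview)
The paper does not supply a proof of this statement: it is simply quoted as Theorem~14.6.8 of \cite{horbookII}, so there is no ``paper's own proof'' to compare against. Your reconstruction follows the standard route of H\"ormander's Chapter~14 (define $u_\pm=u+R_0(\lambda\mp i0)Vu$, identify $\hat u_\pm$ via the trace theorems for $B^*$-solutions of $(P_0-\lambda)w=0$, run Fredholm theory on the Lippmann--Schwinger equation, and finish with a flux/Sokhotski--Plemelj computation), and most of the steps are sound as a sketch.

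There is, however, a genuine gap at the injectivity step for $I+R_0(\lambda+i0)V$. To conclude that a nonzero $u\in B^*_{P_0}$ with $u=-R_0(\lambda+i0)Vu$ cannot exist, you appeal to ``the generalized Rellich theorem \cite{Hormander73} combined with unique continuation.'' But the generalized Rellich theorem, as stated later in the paper (Theorem~\ref{cor_Rellich}), requires that $P_0(\zeta)-\lambda$ factor into real irreducible polynomials each with real zeros --- a hypothesis not assumed here --- and unique continuation for $(P_0+V-\lambda)u=0$ with a merely simply characteristic $P_0$ and $V\in L^\infty$ is not available in general. So your argument imports hypotheses the theorem does not have.

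In H\"ormander's actual treatment this difficulty is circumvented rather than resolved: one does \emph{not} need $u=0$ in order to define $\Sigma_\lambda$. Your own computation already gives $v_+-v_-=c\,\widehat{Vu}|_{M_\lambda}$, so $v_-=0$ forces $v_+=0$ (and symmetrically); hence $v_-\mapsto v_+$ is well defined and, by your flux identity, isometric on the set of $v_-$ that actually arise. The word ``extends'' in the statement is doing real work here: the map is a priori only densely defined, and the unitary $\Sigma_\lambda$ is obtained by continuous extension, the density being supplied by separate results in \cite[\S14.5--14.6]{horbookII} on the discreteness of the exceptional set for the Lippmann--Schwinger equation. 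If you reorganize your bijectivity paragraph in this way, the appeal to Rellich and unique continuation becomes unnecessary and the proof goes through under the stated hypotheses.
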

We call $v_-$ the incoming wave and $v_+$ the outgoing wave. The unitary map $\Sigma_\lambda$ is  the scattering matrix for the energy $\lambda$ and 
\[
A_{\lambda}=I-\Sigma_\lambda
\]
is  the scattering amplitude. 

The following result is well-known  and its proof is included for completeness only. 
\begin{lem} 
\label{lem_A_lambda} Assume that $\lambda\in \R\setminus Z(P_0)$ and  $u\in B^*_{P_0}$ satisfies the equation
$
(P_0+V-\lambda)u=0.
$
 Then the scattering amplitude $A_\lambda$ can be expressed through the Fourier transform of $Vu$ as follows,
\[
A_\lambda v_-(\xi)=2\pi i\hat{Vu}(\xi),\quad \xi\in M_\lambda.
\]
\end{lem}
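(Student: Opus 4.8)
The plan is to extract the identity directly from the Lippmann--Schwinger representation \eqref{eq_Lippmann-Schwinger} in the theorem quoted just above, using it with both signs. Writing that representation as
\[
u = u_+ - R_0(\lambda-i0)Vu = u_- - R_0(\lambda+i0)Vu,
\]
I would subtract the two equalities to obtain
\[
u_+ - u_- = \big(R_0(\lambda-i0) - R_0(\lambda+i0)\big)Vu .
\]
Since $V$ satisfies the short-range condition $V:B^*_{P_0}\to B$ and $u\in B^*_{P_0}$, we have $Vu\in B$, so the right-hand side lies in $B^*_{P_0}$ and, as $\lambda\notin Z(P_0)$, the Fourier transform $\widehat{Vu}$ admits a trace on the smooth hypersurface $M_\lambda$.

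Next I would compute the jump of the free resolvent across $\lambda$. By definition $R_0(\lambda\pm i0)g = F^{-1}\big((P_0-\lambda\mp i0)^{-1}Fg\big)$, and the Plemelj--Sokhotski formula $(t\mp i0)^{-1} = \mathrm{p.v.}\,t^{-1}\pm i\pi\delta(t)$ with $t = P_0(\xi)-\lambda$ gives that the Fourier transform of $\big(R_0(\lambda-i0)-R_0(\lambda+i0)\big)g$ is $-2\pi i\,\delta(P_0-\lambda)\,\widehat{g}$; this is the standard jump (spectral-density) formula, valid here because $P_0$ is simply characteristic. Taking $g = Vu$ and recalling from the quoted theorem that $\widehat{u_\pm} = v_\pm\,\delta(P_0-\lambda)$, I would equate Fourier transforms to get the distributional identity
\[
(v_+ - v_-)\,\delta(P_0-\lambda) = -2\pi i\,\widehat{Vu}\,\delta(P_0-\lambda),
\]
and then restrict to $M_\lambda$, where $\delta(P_0-\lambda) = dS/|P'_0|$ is a nonvanishing measure, to conclude $v_- - v_+ = 2\pi i\,\widehat{Vu}\big|_{M_\lambda}$.

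To finish, since $A_\lambda = I - \Sigma_\lambda$ and $\Sigma_\lambda v_- = v_+$, we obtain $A_\lambda v_- = v_- - v_+ = 2\pi i\,\widehat{Vu}\big|_{M_\lambda}$, which is the assertion. The only point requiring genuine care is making the manipulations with $\delta(P_0-\lambda)$ rigorous and justifying the restriction to $M_\lambda$: this rests on the trace theorems for Fourier transforms of elements of $B$ on the noncritical level set and on the precise jump relation for the $L^2$-resolvent of a simply characteristic operator, both part of the scattering machinery of \cite[Chapter 14]{horbookII} recalled above; the remaining steps are routine bookkeeping with the formulas already at hand.
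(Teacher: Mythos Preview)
Your proof is correct and follows essentially the same route as the paper's: subtract the two Lippmann--Schwinger representations, pass to the Fourier side, and use the Plemelj jump $(t+i0)^{-1}-(t-i0)^{-1}=-2\pi i\,\delta(t)$ with $t=P_0(\xi)-\lambda$. In fact your bookkeeping is slightly more careful than the paper's, since you make explicit the final step $A_\lambda v_- = v_- - v_+$ and keep the signs consistent throughout.
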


\begin{proof}
By \cite[Theorem 14.6.8]{horbookII}, we get the following Lippmann-Schwinger equation for $u$, 
\[
u=u_{\pm}-R_0(\lambda\mp i0)Vu
\]
where 
\[
\hat u_{\pm}=v_\pm\delta(P_0-\lambda)=v_{\pm}\frac{dS}{|P'_0|} \quad\text{and} \quad v_\pm\in L^2(M_\lambda,dS).
\]
Thus,
\[
u_+-u_-=(R_0(\lambda- i0)-R_0(\lambda+ i0))Vu.
\]
Applying the Fourier transform, we have
\begin{align*}
(v_+-v_-)\frac{dS}{|P'_0|}&=\big(\frac{1}{P_0(\xi)-\lambda+i0}-\frac{1}{P_0(\xi)-\lambda-i0}\big)\hat{Vu}\\
&=2\pi i\delta(P_0(\xi)-\lambda)\hat{Vu}=2\pi i \hat{Vu}\frac{dS}{|P'_0|}
\end{align*}
that proves the claim.
\end{proof}

In order to describe the connection between transmission eigenvalues and the scattering amplitude, we shall now review the generalized Rellich theorem. Recall that a classical theorem of Rellich states that if  $v$ satisfies $(\Delta+k^2)v=0$ for $|x|>R_0$ and $v(x)|x|^{(n-1)/2}\to 0$ as $x\to\infty$, then $v(x)=0$ for $|x|\ge R_0$.

A far-reaching generalization of this result to broad classes of differential operators with constant coefficients has been given in 
\cite{Hormander73}.  Let us now state a much simplified version of \cite[Corollary 3.2]{Hormander73}. 
See also 
 \cite{AgmonHorm76}.

\begin{thm} \label{cor_Rellich}
Assume that $\lambda$ is not a critical value of $P_0(\xi)$, $\xi\in \R^n$, and that  there is a factorization  
\[
P_0(\zeta)-\lambda=cP_{1}^{m_1}(\zeta)\cdots P_{k}^{m_k}(\zeta), \quad c\in \R,
\]
for which every factor $P_i(\zeta)$ has real coefficients and is algebraically
 irreducible over $\C^n$. Assume furthermore that each $P_i(\zeta)$  has a non-empty set of real zeros. If 
$u\in \mathcal{S}'\cap L^2_{loc}$ is a solution of 
\begin{equation}
\label{eq_irreduce}
(P_0(D)-\lambda)u=f,
\end{equation}
 with $f\in L^2_{comp}$ and $u\in \stackrel{\circ}{B^*}$, then $u$ has a compact support and
\[
ch\supp (u)= ch \supp(f),
\]
where $ch$ stands for the convex hull.
\end{thm}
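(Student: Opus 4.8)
\emph{Plan of proof.} The statement is a special case of \cite[Corollary 3.2]{Hormander73}, and the plan is to reproduce that argument in the present setting. I would begin by observing how rigid the hypotheses are. Write $Q(\xi)=P_0(\xi)-\lambda=c\,P_1(\xi)^{m_1}\cdots P_k(\xi)^{m_k}$, so that $\nabla Q=c\sum_l m_l P_l^{m_l-1}(\nabla P_l)\prod_{j\ne l}P_j^{m_j}$. If some $m_i\ge 2$, then $\nabla Q$ vanishes on the set $\{P_i=0\}\cap\R^n$, which is non-empty by hypothesis and contained in $M_\lambda=\{P_0=\lambda\}\cap\R^n$, contradicting $\lambda\notin Z(P_0)$; hence $m_i=1$ for every $i$. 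The same identity shows that two distinct real hypersurfaces $M_i:=\{P_i=0\}\cap\R^n$ cannot intersect (otherwise $\nabla Q$ would vanish at the intersection) and that $\nabla P_i\ne 0$ along $M_i$. Thus $M_\lambda=\bigsqcup_i M_i$ is a smooth hypersurface on which $Q$ vanishes exactly to first order; in particular the quotient $\hat f/Q$ defines a function of class $C^\infty$ on all of $\R^n$ whenever $\hat f$ is smooth and vanishes on $M_\lambda$ (Hadamard's lemma near $M_\lambda$; off $M_\lambda$ the polynomial $Q$ does not vanish).

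Next I would pass to the Fourier side. The equation reads $Q(\xi)\hat u=\hat f$, and since $f\in L^2_{comp}$ the transform $\hat f$ is entire of exponential type with indicator diagram $ch\,\supp f$ (Paley--Wiener--Schwartz). The key input is the following fact from the radiation-condition theory of \cite[Chapter 14]{horbookII}: a solution $u\in B^*_{P_0}$ of $(P_0-\lambda)u=f$ with $f\in B$ which lies in $\stackrel{\circ}{B^*}$ has no radiating far field in any direction, hence is simultaneously outgoing and incoming, $u=R_0(\lambda+i0)f=R_0(\lambda-i0)f$; by the equivalence recalled in the text above, this forces $\hat f\equiv 0$ on $M_\lambda$. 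Consequently $\hat u=\hat f/(P_0-\lambda)=\hat f/Q$, which by the previous paragraph is a $C^\infty$ function on $\R^n$. (That $u$ itself belongs to $B^*_{P_0}$ uses that $P_0$ is simply characteristic, see \cite[Chapter 14]{horbookII}.)

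It then remains to propagate the vanishing of $\hat f$ from the real hypersurface into $\C^n$. Fix $i$. At a point of $M_i$ the complex variety $\{P_i(\zeta)=0\}\subset\C^n$ is a smooth hypersurface, since $\nabla P_i\ne 0$ there, and $M_i$, being the fixed-point set of complex conjugation on it, sits inside as a maximal totally real submanifold of full real dimension $n-1$; a holomorphic function vanishing on such a submanifold vanishes in a whole neighbourhood, so, $P_i$ being algebraically irreducible and hence $\{P_i(\zeta)=0\}$ irreducible, $\hat f$ vanishes on all of $\{P_i(\zeta)=0\}$. Since $P_i$ is a squarefree irreducible polynomial, $\hat f/P_i$ is again entire, and the standard Paley--Wiener division estimate shows it is again of exponential type with indicator diagram contained in $ch\,\supp f$. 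Iterating over $i=1,\dots,k$, one gets that $\hat u=\hat f/Q$ extends to an entire function of exponential type with indicator diagram contained in $ch\,\supp f$; by Paley--Wiener--Schwartz, $u$ is a distribution with compact support and $ch\,\supp u\subseteq ch\,\supp f$. For the reverse inclusion, $f=(P_0-\lambda)u$ gives $\supp f\subseteq\supp u$, hence $ch\,\supp f\subseteq ch\,\supp u$; combining, $ch\,\supp u=ch\,\supp f$.

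\emph{Main obstacle.} The crux is the step in the third paragraph: turning vanishing of $\hat f$ on the \emph{real} characteristic hypersurface into vanishing on the \emph{complex} characteristic variety. This is exactly where algebraic irreducibility of each factor $P_i$, together with the non-emptiness of its set of real zeros (so that $M_i$ really is a maximal totally real, Zariski dense, piece of $\{P_i(\zeta)=0\}$), is indispensable: without irreducibility $\hat f$ could vanish on a single real sheet without dividing the polynomial, and the clean division $\hat f/P_i$ would fail. Two further points that have to be handled with some care are the Paley--Wiener estimate for the quotient $\hat f/P_i$ (removing a polynomial factor does not enlarge the indicator diagram) and the input from \cite[Chapter 14]{horbookII} identifying, for simply characteristic $P_0$ and $\lambda\notin Z(P_0)$, the $\stackrel{\circ}{B^*}$-solutions with those that are simultaneously outgoing and incoming.
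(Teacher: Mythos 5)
Your argument is correct and follows essentially the same route as the paper's proof (which is only written out for the special case $k=1$, $m_1=1$): the decay condition $u\in\stackrel{\circ}{B^*}$ forces $\hat f$ to vanish on $M_\lambda$, real-analyticity together with algebraic irreducibility of each factor propagates this vanishing to the complex characteristic variety, and Paley--Wiener division then gives compact support with the correct convex hull. The one step to phrase more carefully is the identification $\hat u=\hat f/Q$, which requires excluding a singular part of $\hat u$ carried by $M_\lambda$; the paper does this by first producing $v\in\mathcal{E}'$ with $(P_0(D)-\lambda)v=f$ via H\"ormander's division theorem and then applying the uniqueness theorem for $\stackrel{\circ}{B^*}$-solutions of the homogeneous equation, which is exactly the content of your appeal to the outgoing-and-incoming characterization.
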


\begin{rem}

If $P_0(\zeta)-\lambda$ has an irreducible factor which has no simple real zero, then it was furthermore proved in \cite{Hormander73} that for any integer $N$ one can find $u\in L^\infty\cap C^\infty$ satisfying \eqref{eq_irreduce} with $f$ being compactly supported  and $u(x)=o(|x|^{-N})$ but $u$ not  compactly supported.
\end{rem}

To illustrate the main ideas involved in the proof of Theorem \ref{cor_Rellich}, for the convenience of the reader, we shall include a proof of the special case when $k=1$, $m=1$ and $c=1$. In doing so we shall follow  \cite{Hormander73} closely.

\begin{proof}

Set $P_\lambda(\zeta)=P_0(\zeta)-\lambda$, $\zeta\in \C^n$. 
The set of real zeros of the polynomial $P_\lambda(\xi)$ is equal to  $M_{\lambda}=\{\xi\in \R^n:P_\lambda(\xi)=0\}$. By the hypothesis of the theorem,   $M_{\lambda}\ne\emptyset$  is an $(n-1)$-dimensional submanifold of $\R^n$. 

Let $\xi_0\in M_\lambda$. Assume, as we may, that $\p_{\xi_n}P_0(\xi_0)\ne 0$, and write
$\xi=(\xi',\xi_n)\in \R^{n}$, $\xi'\in \R^{n-1}$.
By the implicit function theorem,  there is an analytic function $g:\R^{n-1}\to \R$, $g(\xi')=\sum a_\alpha(\xi'-\xi_0')^{\alpha}$,  defined locally near $\xi_0'$, such  that
\[
M_\lambda=\{(\xi',g(\xi')):\xi'\in \R^{n-1}\},
\]
locally near $\xi_0\in M_\lambda$.
Hence, the series $g(\zeta')=\sum a_\alpha(\zeta'-\xi_0')^{\alpha}$ also converges for $\zeta'\in \C^{n-1}$ near $\xi_0'$.

Let $M_\lambda^{\C}$ be the zero set of $P_\lambda$ in $\C^n$, i.e. 
\[
M_\lambda^{\C}=\{\zeta\in \C^n:P_\lambda(\zeta)=0\}.
\]
Then let us show that locally  near $\xi_0$, we have
\[
M_\lambda^{\C}=\{(\zeta',g(\zeta')):\zeta'\in \C^{n-1}\}.
\]
 Indeed,
$
P_\lambda(\xi',g(\xi'))=0
$ for any $\xi'\in \R^{n-1}$ near $\xi'_0$ and, hence, since $P_\lambda$ and $g$ are analytic, we get that
$P_\lambda(\zeta',g(\zeta'))=0
$ for any $\zeta'\in \C^{n-1}$ near $\xi'_0$.

Now since $u\in \stackrel{\circ}{B^*}$,  the Fourier transform $\hat f$ vanishes on  $M_\lambda$  (cf. \cite[Theorem 14.3.6]{horbookII}). Thus, by the analyticity of $\hat f$
and $g$, we have that 
$\hat f$ vanishes identically on an open neighborhood of $\xi_0$ in $M_\lambda^{\C}$.

Denote by
\[
A=\{\zeta\in \C^n:\hat f(\zeta)=0\}
\]
the set of the complex zeros of $\hat f$. 
Notice that the sets $A$ and $M_\lambda^{\C}$ are analytic. 
As we have assumed that $P_\lambda(\zeta)$ is algebraically irreducible over $\C^n$, the set 
$M_\lambda^{\C}$ is algebraically irreducible, and hence it is 
analytically irreducible as well, \cite{Tre60}. 
Moreover, $A\cap M_\lambda^{\C}$ contains an open neighborhood of $\xi_0$ in $M_\lambda^{\C}$. Then it follows from \cite[Corollary 2, Section 5.3]{Chirka} that 
\begin{equation}
\label{eq_zeros}
M_\lambda^{\C}\subset A.
\end{equation}
Thus, the function 
\[
\frac{\hat f(\zeta)}{P_\lambda(\zeta)}
\]
is  entire in $\C^n$. 
An application of \cite[Theorem 7.3.2]{horbookI} shows that there exists $v\in L^2_{comp}$ such that 
\[
P_\lambda(D)v=f.
\]
Hence, 
\[
P_\lambda(D)(u-v)=0, 
\]
and since $u\in \stackrel{\circ}{B^*}$, \cite[Theorem 7.1.27]{horbookI} implies that $u=v$. 
\end{proof}

\section{Injectivity of the scattering amplitude and transmission eigenvalues}

In this section we shall assume that the operator $P_0$ is hypoelliptic  and $\lambda\in \R$ is such that $P_0-\lambda$ satisfies the assumptions of Theorem \ref{cor_Rellich}. 
Let $V\in L^\infty(\R^n)$ be real-valued compactly supported in $\R^n$ with  $\supp(V)=\overline{\Omega}$, where $\Omega\subset \R^n$ is a bounded convex domain which is of class $C^\infty$, and $V\ge\delta>0$ a.e. in $\Omega$.

Consider the interior transmission problem,
\begin{equation}
\label{eq_ITP_2}
\begin{aligned}
(P_0-\lambda)v=0 \quad &\text{in} \quad \Omega,\\
(P_0+V-\lambda)w=0 \quad &\text{in} \quad \Omega,\\
 v-w \in H^{P_0}_0(\Omega).
\end{aligned}
\end{equation}

\begin{thm} There is a solution $(v,w)$ of \eqref{eq_ITP_2} with $0\ne v\in B^*$ if and only if the scattering amplitude $A_\lambda$ is not injective.
\end{thm}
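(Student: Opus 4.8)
The plan is to establish both implications via the Lippmann--Schwinger equation \eqref{eq_Lippmann-Schwinger} together with the generalized Rellich theorem (Theorem \ref{cor_Rellich}).

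First I would prove the direction ``$A_\lambda$ not injective $\Rightarrow$ a solution exists.'' Suppose $A_\lambda v_- = 0$ for some $0 \ne v_- \in L^2(M_\lambda, dS/|P_0'|)$. Let $u \in B^*_{P_0}$ be the associated distorted plane-wave solution of $(P_0 + V - \lambda)u = 0$ furnished by Theorem 14.6.8; by Lemma \ref{lem_A_lambda}, $A_\lambda v_- = 2\pi i\, \widehat{Vu}|_{M_\lambda}$, so $\widehat{Vu}$ vanishes on $M_\lambda$. This means $u$ is \emph{both} outgoing and incoming, hence $u = R_0(\lambda + i0)(Vu) = R_0(\lambda - i0)(Vu)$ lies in $\Hnull B^*$. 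Since $Vu \in L^2_{\mathrm{comp}}$ and $(P_0 - \lambda)u = Vu$ with $u \in \Hnull B^*$, Theorem \ref{cor_Rellich} applies: $u$ is compactly supported with $ch\,\supp(u) = ch\,\supp(Vu) \subset \overline\Omega$, which is convex, so $\supp(u) \subset \overline\Omega$. Now set $w = u$ and $v = u - \frac{1}{V}(P_0 - \lambda)u$ restricted to $\Omega$; then $w = v$ outside $\Omega$ forces $v - w = u|_\Omega$, and one checks $v - w \in H^{P_0}_0(\Omega)$ using that $u$ solves a constant-coefficient equation and vanishes off $\overline\Omega$ (this is where the convexity and smoothness of $\Omega$ enter, to guarantee the zero-extension lies in the right space). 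One verifies $(P_0 - \lambda)v = 0$ and $(P_0 + V - \lambda)w = 0$ in $\Omega$ directly. The nontriviality $v \ne 0$ should follow because if $v = 0$ then $(P_0 - \lambda)u = Vw = Vu$ would give $u \in H^{P_0}_0(\Omega)$ extended by zero solving $(P_0-\lambda)u = Vu$ globally with $u$ outgoing, and tracking back one finds $v_- = 0$, a contradiction.

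For the converse, suppose $(v, w)$ solves \eqref{eq_ITP_2} with $0 \ne v \in B^*$. Set $u = v$ on $\R^n \setminus \Omega$ (extending by the equation $(P_0 - \lambda)v = 0$, which propagates) — more precisely, since $v - w \in H^{P_0}_0(\Omega)$, the function obtained by gluing $w$ inside and $v$ outside, call it $u$, satisfies $(P_0 + V - \lambda)u = 0$ in $\R^n$ in the sense that $(P_0 - \lambda)u = Vw \cdot \mathbf{1}_\Omega \in L^2_{\mathrm{comp}}$. Then $u \in B^*_{P_0}$ and by Theorem 14.6.8 it admits a Lippmann--Schwinger representation $u = u_\pm - R_0(\lambda \mp i0)(Vu)$ with incoming/outgoing data $v_\mp$. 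The point is to show $v_-$ (say) is a nonzero element of the kernel of $A_\lambda$: by Lemma \ref{lem_A_lambda}, $A_\lambda v_- = 2\pi i\, \widehat{Vu}|_{M_\lambda}$, and since $v = u$ outside $\Omega$ and $v$ solves $(P_0 - \lambda)v = 0$ there with $v \in B^*$, a generalized-Rellich-type argument shows $\widehat{Vu} = \widehat{(P_0 - \lambda)u}$ vanishes on $M_\lambda$ — equivalently $u$ is simultaneously incoming and outgoing. Finally $v_- \ne 0$: if $v_- = 0$ then $u_\pm = 0$, so $u = -R_0(\lambda+i0)(Vu)$ is outgoing, hence (again by Theorem \ref{cor_Rellich}) compactly supported in $\overline\Omega$, forcing $v = u = 0$ outside and then $v \in H^{P_0}_0(\Omega)$; but then $v$ extended by zero solves $(P_0-\lambda)v = 0$ on $\R^n$, and injectivity of constant-coefficient operators on $\mathcal{E}'$ gives $v = 0$, contradicting $v \ne 0$.

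The main obstacle I anticipate is the careful bookkeeping at the interface: showing that the glued function $u$ genuinely lies in $B^*_{P_0}$ with $(P_0 - \lambda)u \in L^2_{\mathrm{comp}}$ — i.e., that no spurious distributional terms supported on $\partial\Omega$ appear — and, in the forward direction, that the zero-extension of $u|_\Omega$ lands in $H^{P_0}_0(\Omega)$. This is exactly the step that uses the convexity and $C^\infty$-regularity of $\Omega$ and the hypoellipticity of $P_0$ (so that interior regularity upgrades $u$ near $\partial\Omega$), and it is the reason those hypotheses were imposed at the start of this section. The passage ``$\widehat{Vu}$ vanishes on $M_\lambda$ $\iff$ $u$ both incoming and outgoing'' is the stated fact just before Theorem \ref{cor_Rellich}, so invoking it is routine; the delicate part is marrying it with Theorem \ref{cor_Rellich} to pin down $\supp(u) \subset \overline\Omega$ and with the nontriviality arguments on both sides.
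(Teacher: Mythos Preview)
Your converse direction is essentially the paper's argument, though the paper streamlines it: since $v\in B^*$ solves $(P_0-\lambda)v=0$ globally, Theorem~14.3.3 gives $\hat v=v_-\,dS$ directly, and $v\ne 0$ immediately forces $v_-\ne 0$ without any Rellich-type detour. The gluing issue you flag is not a real obstacle: $v-w\in H^{P_0}_0(\Omega)$ means precisely that the zero-extension lies in $B_{2,\tilde P_0}$, so no boundary terms appear.

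The forward direction, however, contains a genuine error. From $\widehat{Vu}|_{M_\lambda}=0$ you conclude that $u$ itself is both outgoing and incoming and hence lies in $\stackrel{\circ}{B^*}$. This is false. The vanishing of $\widehat{Vu}$ on $M_\lambda$ says that the \emph{scattered wave} $u_s:=R_0(\lambda+i0)Vu$ equals $R_0(\lambda-i0)Vu$; it is $u_s$ that is simultaneously outgoing and incoming, not the full distorted plane wave $u=u_--u_s$. Indeed, if $u$ were in $\stackrel{\circ}{B^*}$ then Rellich would make both $u$ and $u_s$ compactly supported, hence $u_-=u+u_s$ would be a compactly supported solution of $(P_0-\lambda)u_-=0$, forcing $u_-=0$ and $v_-=0$, contradicting your hypothesis.

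Consequently the correct objects are: apply Theorem~\ref{cor_Rellich} to $u_s$ (which satisfies $(P_0-\lambda)u_s=Vu\in L^2_{\mathrm{comp}}$ and lies in $\stackrel{\circ}{B^*}$) to get $\supp u_s\subset\overline\Omega$; then set $v=u_-|_\Omega$ and $w=u|_\Omega$, so that $v-w=u_s|_\Omega$. Hypoellipticity gives $u_s\in B_{2,\tilde P_0}^{\mathrm{loc}}$, and the support condition plus smoothness of $\partial\Omega$ then yield $u_s|_\Omega\in H^{P_0}_0(\Omega)$. Your proposed assignment $w=u$, $v=u-\frac{1}{V}(P_0-\lambda)u$ does not work: since $(P_0-\lambda)u=-Vu$, this gives $v=2u$, which fails $(P_0-\lambda)v=0$ in $\Omega$.
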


\begin{proof}

 Assume that $A_\lambda$ is not injective. Then there exists a solution $u\in B^*_{P_0}$ of $(P_0+V-\lambda)u=0$ in $\R^n$ such that
$u=u_--u_s$ with  
$u_-\ne 0$ and 
the scattered wave 
\[
u_s:=R_0(\lambda+i0)Vu
\]  
being both 
 incoming and outgoing. By the mapping properties of the resolvent $R_0(\lambda+i0):B\to B^*_{P_0}$, we see that $u_-\in B^*_{P_0}$.  Now since the scattered wave $u_s$ is both 
 incoming and outgoing by Theorem \cite[Theorem 14.3.6]{horbookII} $u_s\in \stackrel{\circ}{B^*}$. Moreover,
 $(P_0-\lambda)u_s=Vu$. 
 Now Theorem \ref{cor_Rellich} implies that $u_s$ has compact support and $\supp(u_s)\subset\overline{\Omega}$, thanks to the convexity of $\Omega$. 
Since $u,u_s\in L^2_{\textrm{loc}}(\R^n)$ and $P_0$ is hypoelliptic, by hypoelliptic regularity \cite[Theorem 11.1.8]{horbookII}, 
 we have that $u_s\in B_{2,\tilde P_0}^{loc}(\R^n)$.
 As $\supp(u_s)\subset\overline{\Omega}$, a regularization argument shows that $u_s$ can be approximated by a sequence of $C_0^\infty(\Omega)$-functions, so that $u_s|_{\Omega}\in H^{P_0}_0(\Omega)$.
 Now setting $v=u_-|_\Omega\ne 0$ and $w=u|_\Omega$, we get a nontrivial solution to  \eqref{eq_ITP_2} .

Assume conversely that   the problem \eqref{eq_ITP_2} admits  a non-trivial solution $(v,w)$ with $0\ne v\in B^*$. Then by \cite[Theorem 14.3.3] {horbookII},  we have $\hat v=v_-dS$ with $v_-\in L^2(M_\lambda, dS)$.  Since $P_0$ is hypoelliptic, the  surface $M_\lambda$ is compact, and thus,  \cite[Theorem 14.3.8] {horbookII} implies  that $v\in B^*_{P_0}$.  As $v-w\in H^{P_0}_0(\Omega)$, we get that $w\in B^*_{P_0}$ and $v-w\in \stackrel{\circ}{B^*}$. 
Now $(P_0-\lambda)(v-w)=Vw$ and \cite[Theorem 14.3.6]{horbookII} yields that the Fourier transform $\hat{Vw}=0$ on $M_\lambda$. Hence, applying Lemma \ref{lem_A_lambda}, we get that $A_\lambda v_-=0$. 
\end{proof}

\begin{rem}
The convexity assumption on $\Omega$ can be removed if we require that $P_0$ is elliptic and $\R^n\setminus\overline{\Omega}$ is connected. 
\end{rem}

\section{Acknowledgements} We would like to thank David Colton and Fioralba Cakoni for providing us with some useful references. 
 The research of M.H. was partially supported by the NSF grant DMS-0653275 and he is grateful to the Department of Mathematics and Statistics at the University of Helsinki for the hospitality. The research of K.K. was financially supported by the
Academy of Finland (project 125599). 
The research of P.O. and L.P. was financially supported by Academy of Finland Center of Excellence programme 213476.

\end{document}